\documentclass[a4paper,11pt]{article}
\usepackage[a4paper, total={6in, 8in}]{geometry}

\usepackage{tikz}
\usepackage{lineno}
\usepackage{algorithm} 
\usepackage{algpseudocode} 
\usepackage{xcolor}
\usepackage{graphicx}
\usepackage{a4wide, amsmath, amssymb, amsthm}
\usepackage{amsmath}
\usepackage{enumitem}

\newtheorem{theorem}{Theorem}[section]
\newtheorem{corollary}[theorem]{Corollary}
\newtheorem{lemma}[theorem]{Lemma}

\newtheorem{observation}[theorem]{Observation}
\newtheorem{definition}[theorem]{Definition}

\newcounter{statementcounter}
\newcounter{claimcounter}

\begin{document}

\title{Odd Cycle Transversal in $P_k$-Free Graphs}

\author{ Akramah Faizi	\thanks{Computer Science, Indiana State University, Indiana, USA. Email: {\tt akramahfaizi@gmail.com>}}
 \and 
 Arash Rafiey\thanks{ Computer Science, Indiana State University, Indiana, USA. Email: {\tt arash.rafiey@indstate.edu}. Research supported in part by Bailey Faculty Fellowship} 
 }

\date{}

\numberwithin{equation}{section}
\newcommand{\kcol}{\textsc{$k$-Coloring}}
\newcommand{\minh}{MHOM($H$)\xspace}
\newcommand{\lhomh}{LHOM($H$)\xspace}

\newcommand{\HH}{\textsf{HOM($H$)}\xspace}
\newcommand{\MHH}{\textsf{MHOM($H$)}\xspace}

\newcommand{\MH}{\textsf{MHOM}\xspace}

\newcommand{\kSCSP}{$k$-\textsf{Strict-CSP}\xspace}
\newcommand{\Sig}{\textsc{Sig}\xspace}
\newcommand{\Pol}{\textsf{Pol}}
\newcommand{\Opt}{\textsf{Opt}\xsapace}
\newcommand{\mc}[1]{\mathcal{#1}}
\newcommand{\mb}[1]{\mathbf{#1}}
\newcommand{\abs}[1]{\ensuremath{\left|#1\right|}}

\newcommand{\argmax}{\operatornamewithlimits{argmax}}






\maketitle

\begin{abstract}
The \textsc{Odd Cycle Transversal} (OCT) problem, which asks for a minimum subset of vertices whose removal renders a graph bipartite, is a central problem in algorithmic graph theory. It is known to be NP-complete even on $P_k$-free graphs for $k \ge 6$. Furthermore, assuming the Unique Games Conjecture (UGC), OCT does not admit a constant-factor approximation algorithm on general graphs.

Motivated by these hardness results, we investigate the approximability of OCT on $P_k$-free graphs. We first establish that the problem becomes polynomial-time solvable on specific subclasses of $P_k$-free graphs, most notably $(P_6, C_3)$-free graphs, by exploiting a structural decomposition into rings of bipartite graphs. Leveraging these tractable substructures as a basis, we present a constant-factor approximation algorithm for OCT on general $P_k$-free graphs. We achieve an approximation ratio of $k-2$ when $k$ is odd and $k-3$ when $k$ is even. These results provide the first nontrivial constant-factor approximations for this class dependent on $k$, aligning with the UGC implication that no approximation factor independent of $k$ is likely to exist.

\end{abstract}


\section{Introduction}

The \textsc{Odd Cycle Transversal} (OCT) problem, also known as \textsc{Graph Bipartization}, is a fundamental vertex-deletion problem: given a graph $G=(V,E)$, find a minimum set $S\subseteq V$ such that $G-S$ is bipartite. Equivalently, $S$ must intersect every odd cycle of $G$. OCT is a canonical representative of ``transversal to a hereditary class'' problems and is closely related to finding a largest induced bipartite subgraph.

\noindent
OCT is NP-complete, already as a special case of vertex-deletion problems to hereditary graph classes \cite{Lewis1980}. 
From the approximation viewpoint, the best currently known polynomial-time guarantee on general graphs is an $O(\sqrt{\log n})$-approximation, obtained via approximation algorithms for \textsc{Min-UnCut} / \textsc{Min-2CNF Deletion} \cite{Agarwal2005}; this remains essentially the state of the art for OCT in unrestricted graphs.
Furthermore, under the \textsc{Unique Games Conjecture} (UGC), OCT becomes dramatically harder: it is NP-hard to approximate within any constant factor. This UGC-based inapproximability is commonly attributed to \cite{BansalKhot2009} and is explicitly used as a hardness benchmark in subsequent OCT literature, including kernelization work \cite{KratschWahlstrom2014}.
Taken together, these results make OCT a compelling testbed for investigating whether—and how—structural restrictions on the input graph can restore approximability beyond what is possible in general graphs.

\paragraph{Why $P_k$-free graphs?}
A major line of research in algorithmic graph theory studies hard problems on hereditary classes defined by forbidding induced subgraphs. In particular, forbidding long induced paths (i.e., working in $P_k$-free graphs) often yields strong structure and enables efficient algorithms for problems that are intractable in general.  For OCT, however, this problem exhibits a sharp threshold: the problem is polynomial-time solvable on $P_5$-free graphs \cite{Agrawal2025P5OCT}, while it becomes NP-hard already on $P_6$-free graphs \cite{DabrowskiEtAl2020}. Thus, in contrast to many classical problems, excluding long induced paths does \emph{not} automatically restore tractability for OCT once $k\ge 6$. This leaves open a natural and practically relevant algorithmic question: even when exact optimization is hard on $P_k$-free graphs, can one obtain approximation guarantees whose quality depends only on $k$ (and not on size of the input graph $G$)?

Since OCT is precisely the distance to $2$-colorability, it is natural to place our work in the broader context of coloring graphs with forbidden induced paths.
The class of $P_k$-free graphs is central to the theory of $\chi$-boundedness and often exhibits strong structural constraints; see the survey of Scott and Seymour \cite{ScottSeymourSurvey}.
On the algorithmic side, several sharp thresholds are known: for every fixed $q$, $q$-\textsc{Colorability} is solvable in polynomial time on $P_5$-free graphs \cite{HoangSawadaShuP5Coloring}, and $3$-\textsc{Colorability} is polynomial-time solvable on $P_6$-free graphs \cite{RanderathSchiermeyerP6ThreeColor}.
More generally, a near-complete complexity classification of $k$-\textsc{Coloring} on $P_t$-free graphs, including NP-completeness of $4$-\textsc{Coloring} on $P_7$-free graphs and of $5$-\textsc{Coloring} on $P_6$-free graphs was initiated in \cite{Huang2016PtFreeColoring}.
The remaining borderline case of $4$-\textsc{Coloring} on $P_6$-free graphs was later resolved by a polynomial-time algorithm \cite{ChudnovskySpirklZhong2019P6FourColoring,ChudnovskyEtAlExcellentPrecoloring}.
Together, these results illustrate that forbidding long induced paths can create fine-grained complexity transitions for coloring, and they motivate our study of OCT within the same $P_k$-free hierarchy, where exact solvability breaks at $k=6$ but meaningful approximation guarantees may be achievable.

\paragraph{Our focus and contributions.}
Motivated by (i) the UGC-based barrier against constant-factor approximation on general graphs \cite{BansalKhot2009,KratschWahlstrom2014}, and (ii) the sharp $P_5$ versus $P_6$ tractability threshold \cite{Agrawal2025P5OCT,DabrowskiEtAl2020}, we study the approximability of OCT on $P_k$-free graphs for $k\ge 6$.

\paragraph*{Our Contributions.}
\begin{enumerate}
    \item \textbf{Polynomial-time solvability on structured subclasses.}
    We identify a tractable ``backbone'' subclass of $P_k$-free graphs obtained by additionally excluding all odd cycles up to a length determined by $k$. We show that graphs in this subclass admit a decomposition into a cyclic arrangement of bipartite interactions, and we reduce OCT on these ring-like instances to a sequence of Minimum Vertex Cover computations in bipartite graphs, solvable in polynomial time. The case $k=6$ is more difficult than the case $k>6$, as it requires to established several structural properties for $(P_6,C_3)$-free graphs. In particular we show that OCT problem is polynomial time solvable in $(P_6,C_3)$-free graphs. 

    \item \textbf{$k$-dependent constant-factor approximation on $P_k$-free graphs.}
    Building on the tractable backbone, we give the first polynomial-time approximation guarantees for OCT problem on $P_k$-free graphs whose ratio depends only on $k$. The algorithm greedily packs short odd cycles and removes them, after which the remaining instance falls into the tractable subclass handled exactly. This yields an approximation ratio of $k-2$ for odd $k$ and ratio $k-3$ for even $k$. The running time of our algorithm is a polynomial whose degree is independent of $k$. 

\end{enumerate}

\noindent
These guarantees are qualitatively consistent with the UGC-based hardness for general graphs: while no constant independent of the instance size is expected in full generality, forbidding induced $P_k$ allows approximation factors that scale only with  parameter $k$.

\section{Preliminaries and Notations}
Let $G$ be a graph. We denote the vertex and edge sets of $G$ by $V(G)$ and $E(G)$, respectively. For simplicity, we denote the edge $(x,y)$ in $G$ as $xy$. Unless otherwise stated, we assume all graphs are finite with no multiple edges or self-loops. For an edge $uv$ in $G$, $u$ is a neighbor of $v$, and $v$ is a neighbor of $u$. For a vertex $x \in V(G)$, $N(x)$ denotes the set of neighbors of $x$, and for a subset $X \subset V(G)$, $N(X)$ is the set of vertices in $V(G) \setminus X$ that have at least one neighbor in $X$. For two disjoint subsets $X$ and $Y$ of $V(G)$, we say $X$ is complete to $Y$ if for every $x \in X$ and $y \in Y$, $xy$ is an edge of $G$. We say $X$ is independent of (or anti-complete to) $Y$ if there is no edge of $G$ between $X$ and $Y$. A complete bipartite graphs is also called a biclique. 

For a set $S \subset V(G)$, let $G[S]$ be the subgraph induced by the vertices of $S$; if $xy$ is an edge of $G$ with $x, y \in S$, then $xy \in E(G[S])$. Let $P_k$ denote an induced path on $k$ vertices; it has vertices $v_1, v_2, \dots, v_k$ and edges $v_iv_{i+1}$ for $1 \le i \le k-1$. Similarly, $C_k$ is an induced cycle on $k$ vertices, which has exactly $k$ edges. Let $V_0, V_1, \dots, V_{k-1}$ be disjoint subsets of $V(G)$. We say $V_0, V_1, \dots, V_{k-1}$ form a \emph{$k$-ring} $W$ if every edge of $G[V_0 \cup V_1 \cup \dots \cup V_{k-1}]$ is between $V_i$ and $V_{i+1}$, $0 \le i \le k-1$ (with indices modulo $k$). We say $k$-ring $W$ is complete if $G[V_i \cup V_{i+1}]$ is complete bipartite graph (biclique). Notice that by definition each $V_i$ is an independent set. For two distinct graphs $X$ and $Y$, we say $G$ is $(X, Y)$-free if there is no induced subgraph of $G$ isomorphic to $X$ or $Y$. For example, if $G$ is a $(P_6, C_3)$-free graph, then there is no induced $P_6$ in $G$ and there is no triangle in $G$.
\emph{Chain bipartite graph:} A bipartite graph $H = (X, Y, E)$ is a \emph{chain bipartite graph} (or simply a \emph{chain graph}) if the neighborhoods of the vertices in $X$ can be linearly ordered by inclusion. That is, there exists an ordering of the vertices in $X$, say $x_1, x_2, \dots, x_{|X|}$, such that $N(x_1) \subseteq N(x_2) \subseteq \dots \subseteq N(x_{|X|})$. By symmetry, this condition implies that the neighborhoods of the vertices in $Y$ are also linearly ordered by inclusion. Equivalently, a bipartite graph is a chain graph if and only if it does not contain an induced $2K_2$ (two disjoint edges with no cross-edges between their endpoints).

\section{($P_6,C_3$)-free graphs}
\paragraph{Basic Properties and partition of vertices in $G$:}
Let $G = (V, E)$ be a $(P_6, C_3)$-free graph containing an induced $5$-cycle $C = \{a, b, c, d, e\}$. Let $A_1, B_1, C_1, D_1, E_1$ be the sets of vertices adjacent to exactly one vertex on the $C_5$. Specifically:
\begin{itemize}
    \item $A_1 = \{v \in V(G) \setminus V(C_5) : N(v) \cap V(C) = \{a\}\}$
    \item $B_1, C_1, D_1, E_1$ are defined similarly for vertices $b, c, d, e$ respectively. We often refer to these sets with $X_1$ and $Y_1$. For $X_1 \ne Y_1$ from $\{A_1,B_1,C_1,D_1,E_1\}$, we say $X_1$ and $Y_1$ are consecutive if $(X_1,Y_1)=(A_1,B_1),(B_1,C_1),(C_1,D_1), (D_1,E_1),(E_1,A_1)$ otherwise we say they are not consecutive. 
\end{itemize}
\begin{observation}\label{obs1}
Any vertex $v \in V(G) \setminus V(C)$ is adjacent to at most two vertices on $V(C)$, and these neighbors must be non-consecutive.
\end{observation}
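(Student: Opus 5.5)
The plan is to use only triangle-freeness, plus the fact that $C$ is an induced $5$-cycle. No $P_6$-freeness is needed for this observation. Let $v\in V(G)\setminus V(C)$ and set $S=N(v)\cap V(C)$. The argument has two steps: first show that $S$ contains no two consecutive vertices of $C$, and then show that this forces $|S|\le 2$.

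\emph{Step 1 (no consecutive neighbors).} Suppose $S$ contains two consecutive vertices of $C$, say $x$ and $y$ with $xy\in E(C)$. Then $v,x,y$ are pairwise adjacent and form a triangle. This contradicts the assumption that $G$ is $C_3$-free. So $S$ is an independent set in the induced cycle $C$.

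\emph{Step 2 (size bound).} An independent set in $C_5$ has at most $\lfloor 5/2\rfloor=2$ vertices. To see this directly, suppose $S$ contained three vertices of $C$. Three vertices among five positions on a $5$-cycle, chosen pairwise non-adjacent, would need at least three gaps of size at least one. That requires at least $3+3=6>5$ positions, which is impossible. Hence $|S|\le 2$. When $|S|=2$, the two neighbors are non-consecutive by Step 1; they are at distance two on $C$, for example $\{a,c\}$.

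I expect no real obstacle. The only point needing care is to state explicitly that consecutive means adjacent along $C$, so that Step 1 applies verbatim to every pair. The same reasoning also shows that a vertex with exactly two neighbors on $C$ sees a pair of the form $\{a,c\},\{b,d\},\{c,e\},\{d,a\},\{e,b\}$. This is presumably how the paper will go on to define the sets of vertices with two neighbors on $C$, alongside $A_1,\dots,E_1$.
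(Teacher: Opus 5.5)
Your proof is correct and matches the paper's intended justification. The paper states this observation without proof, and its implicit reason is the same as yours: $C_3$-freeness rules out two consecutive neighbors on $C$, and three pairwise non-consecutive vertices cannot fit on a $5$-cycle, so no $P_6$-freeness is needed.
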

We have the following lemma for adjacencies within and between the sets in $ \{A_1,B_1,C_1,D_1,E_1\}$.
\begin{lemma}\label{AdjacencyX_1}
Let $X_1,Y_1 \in \{A_1,B_1,C_1,D_1,E_1
\}$. Each $X_1$ is independent. Furthermore,  $X_1$ is complete to $Y_1$ if $X_1$ and $Y_1$ are not consecutive, otherwise, $X_1$ and $Y_1$ are independent. 
\end{lemma}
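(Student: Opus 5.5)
Every claim follows from one move: take the relevant vertices of the sets, assume the edge pattern is other than stated, and exhibit an induced $P_6$ or a $C_3$ inside $G[V(C)\cup\{x,y\}]$. Since $x\in X_1$ and $y\in Y_1$ each have exactly one neighbour on $C$, the only unknown adjacency is $xy$, so every case reduces to a short path check. By the rotational symmetry of $C$, it suffices to treat $X_1=A_1$ and $Y_1\in\{A_1,B_1,C_1\}$; the pairs $(A_1,E_1)$ and $(A_1,D_1)$ are mirror images of $(A_1,B_1)$ and $(A_1,C_1)$.

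\emph{Independence of $A_1$.} Two vertices $x,y\in A_1$ share the neighbour $a$, so an edge $xy$ would create the triangle $axy$, which is impossible because $G$ is $C_3$-free.

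\emph{Consecutive sets.} Let $x\in A_1$ and $y\in B_1$, and suppose $xy\in E(G)$. Then $x,a,b,y$ form a $4$-cycle. I would look for an induced $P_6$ by going around the other way: the path $y\,x\,a\,e\,d\,c$. Its non-path adjacencies are $ya$ and $yc$ for $y$, and only $xa$ for $x$, so I need to check the pairs carefully. The edge $yc$ exists only if $y$ is adjacent to $c$, which it is not, since $y\in B_1$. So the sequence $x,a,e,d,c$ together with $y$ attached to $x$ gives the path $y\!-\!x\!-\!a\!-\!e\!-\!d\!-\!c$. Its only possible chords are $ya,ye,yd,yc$ and $xe,xd,xc$, and all of these are absent by the definition of $A_1$ and $B_1$. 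Hence this is an induced $P_6$, a contradiction, and $A_1$ is anticomplete to $B_1$.

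\emph{Non-consecutive sets.} Let $x\in A_1$ and $y\in C_1$, and suppose $xy\notin E(G)$. The path $x\,a\,b\,c\,y$ is induced and has five vertices, so I need a sixth vertex. I would extend at the $y$ end only if something exists there, so it is better to extend at the $x$ end. Instead, I would use $x\,a\,e\,d\,c\,y$. Its chords could only be $x$ to $e,d,c,y$ and $y$ to $a,e,d$, and all of these are non-edges, since $x$ sees only $a$ on $C$, $y$ sees only $c$, and $xy\notin E$. This gives an induced $P_6$, a contradiction. Therefore $xy\in E(G)$, and $A_1$ is complete to $C_1$.

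The main obstacle is only bookkeeping: choosing, for each case, the direction around $C$ that produces six vertices with no chords. Everything else is immediate from the definitions and from Observation~\ref{obs1}, which guarantees that the vertices of $X_1$ have no further neighbours on $C$ beyond the one specified.
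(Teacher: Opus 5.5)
Your proof is correct and is essentially the paper's argument: the triangle through $a$ gives independence of $A_1$, the induced path $y\,x\,a\,e\,d\,c$ handles consecutive sets, and $x\,a\,e\,d\,c\,y$ handles non-consecutive sets, with the other pairs covered by the dihedral symmetry of $C$. In a write-up, delete the exploratory detours (the $4$-cycle remark and the abandoned path $x\,a\,b\,c\,y$), and add to your chord checks that $ac$, $ad$, $ec$ are non-edges because $C$ is induced.
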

\begin{proof}
Up to symmetry, assume $X_1=A_1$. If $u, v \in A_1$ were adjacent, $\{u, v, a\}$ would form a $C_3$. Suppose $Y_1=B_1$ ($X_1,Y_1$ consecutive). If $a' \in A_1$ and $b' \in B_1$ were adjacent, then $(b', a', a, e, d, c,)$ would form an induced $P_6$. Similarly, $X_1$ and $Y_1$ are independent when $(X_1,Y_1) \in \{(A_1,B_1),(B_1,C_1),(C_1,D_1), (D_1,E_1),(E_1,A_1)\}$.\\

$A_1$ is complete to $C_1$ and $D_1$. If $a' \in A_1$ and $c' \in C_1$ were non-adjacent, then $(a', a, e, d, c, c')$ would form an induced $P_6$. Thus, $X_1$ and $Y_1$ are complete for every $(X_1,Y_1) \in \{(A_1,C_1),(B_1,D_1)\linebreak, (C_1,E_1),(D_1,A_1),(E_1,B_1)\}$.

\end{proof}

\begin{center}
\includegraphics[scale=0.40]{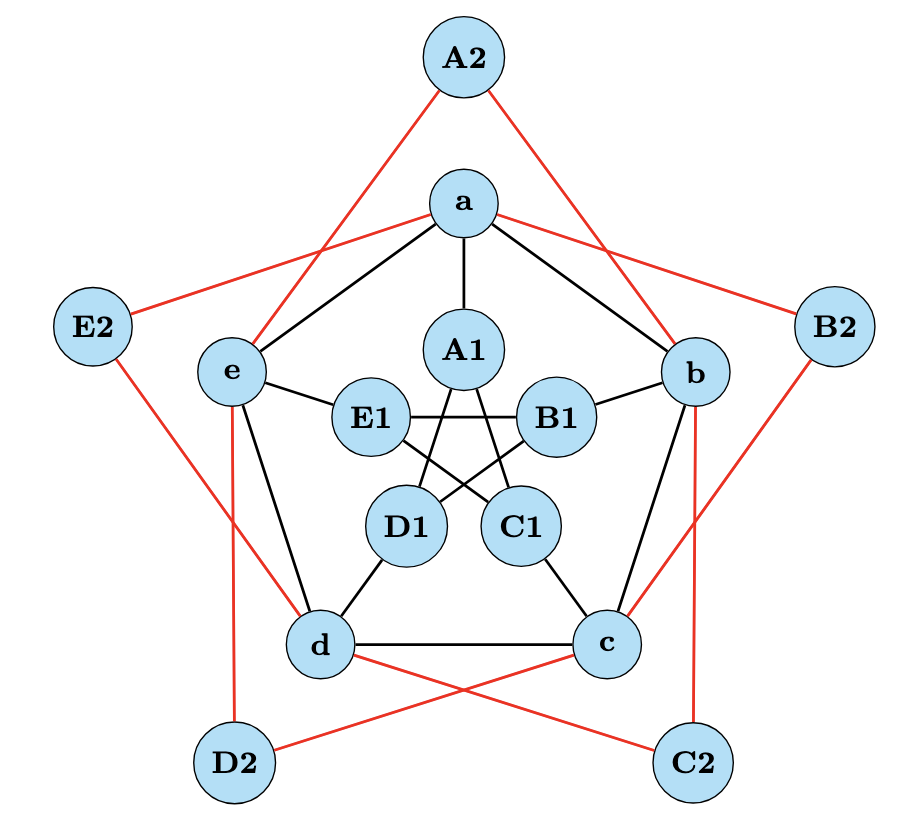}
\end{center}

\begin{definition}[Two neighbors]
    Let $X_2 \subseteq V(G)$ be the set of vertices having exactly two non-consecutive neighbors on the cycle: $A_2$: adjacent to $\{b, e\}$, $B_2$: adjacent to $\{a, c\}$, $C_2$: adjacent to $\{b, d\}$, $D_2$: adjacent to $\{c, e\}$, $E_2$: adjacent to $\{d, a\}$.
\end{definition}

\begin{definition}[Zero neighbor vertices]
   Let $X_0 \subseteq V(G)$ be the set of vertices at distance 2 from $V(C)$. $A_0$ is a set of vertices adjacent to $A_2$, and similarly we define $B_0,C_0,D_0,E_0$. Note that every vertex $x$ of distance two from $C$ is not adjacent to any vertex in $A_1 \cup B_1 \cup C_1 \cup D_1 \cup E_1$, which will be proved later.
\end{definition}

\begin{lemma}[Distance Lemma]\label{lem:distance-2}
When $G$ is connected then every vertex $v \in V(G)$ is at distance at most $2$ from $C$.
\end{lemma}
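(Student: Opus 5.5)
Since $G$ is connected, it suffices to rule out a vertex at distance exactly $3$ from $C$: if some vertex lies at distance $\ge 3$, a shortest path from it to $C$ passes through a vertex at distance exactly $3$. So suppose $x$ has distance $3$ from $C$. Take a shortest path $x - y - z - c_i$ with $c_i \in V(C)$. Then $z$ is adjacent to $C$ and $y, x$ are not. By Observation~\ref{obs1}, $z$ has one or two (non-consecutive) neighbors on $C$, so $z \in X_1$ or $z \in X_2$ for some letter $X$. By the rotational symmetry of the $5$-cycle, I will assume $z \in A_1$ (neighbor $a$) or $z \in B_2$ (neighbors $a,c$).

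In each case I will exhibit an induced $P_6$ of the form $x, y, z, a, \dots$, with the path continuing around $C$ away from the other neighbors of $z$. The vertices $x$ and $y$ have no neighbors on $C$. Also $x$ is not adjacent to $z$, since $x$ has distance $3$. So any path $x, y, z$ followed by cycle vertices is induced, provided $z$ is non-adjacent to the cycle vertices used after $a$.

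\textbf{Case $z\in A_1$.} Take $(x, y, z, a, b, c)$. This is induced, because $z$'s only neighbor on $C$ is $a$ and $C$ is an induced cycle, so $a,b,c$ is an induced path. It is a $P_6$, a contradiction.

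\textbf{Case $z\in B_2$.} Here $z$ is adjacent to $a$ and $c$, and $z$ is non-adjacent to $b, d, e$. Take $(x, y, z, a, e, d)$. The edges present are exactly $xy, yz, za, ae, ed$. The remaining pairs are non-edges: $z$ is non-adjacent to $e$ and $d$; $a$ is non-adjacent to $d$ in $C_5$; and $x, y$ have no neighbors on $C$. So this is an induced $P_6$, again a contradiction.

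Hence no vertex is at distance $3$, and so every vertex is within distance $2$ of $C$. The only delicate point is making sure the chosen continuation along $C$ avoids the second neighbor of $z$ in the two-neighbor case; this is handled by walking from $a$ in the direction away from $c$, through $e$ and then $d$. Triangle-freeness is used only through Observation~\ref{obs1}, which guarantees that $z$ has at most two neighbors on $C$ and that they are non-consecutive.
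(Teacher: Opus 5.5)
Your proof is correct and matches the paper's argument: take a shortest path $x,y,z,u$ to $C$ and extend it along $C$ by two vertices that avoid $z$'s other neighbor, which yields an induced $P_6$. Your split into the cases $z\in A_1$ and $z\in B_2$ simply spells out the paper's one-line appeal to Observation~\ref{obs1}.
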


\begin{proof}
Suppose, for the sake of contradiction, that there exists a vertex in $G$ at a distance of $3$ or more from the cycle $C$.
Let $x$ be a vertex strictly at distance $3$ from $C$. There exists a shortest path $P_{shortest} = (x, y, z, u)$ from $x$ to $C$, where $u \in V(C)$. Thus, $z$ is at distance one from $C$, $y$ is at distance $2$.
Due to Observation \ref{obs1}, we can find a $Q=(u,v,w)$ on $C$ of length two where $z$ is adjacent to $u$ only. Now $P=(x,y,z,u,v,w)$ is an induced $P_6$, a contradiction. 
Therefore, no vertex can be at a distance of $3$ or more from $C$.
\end{proof}

\begin{lemma}[Layer 1 and Layer 2 Dichotomy]\label{X1-X2-dichotomy}
If none of $X_1, X_2, Y_1$ is empty, where $X_1$ and $Y_1$ are not consecutive (i.e. $(X_1,X_2,Y_1)=(A_1,A_2,C_1)$ then  $X_2=X^1_2 \cup X^2_2$ so that $X^2_2$ is complete to $X_1$ and independent from $Y_1$, and $X^1_2$ is complete to $Y_1$ and independent from $X_1$. 
\end{lemma}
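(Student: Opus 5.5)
The plan is to fix a single vertex $x \in X_2$ and show that it is adjacent either to all of $X_1$ and none of $Y_1$, or to all of $Y_1$ and none of $X_1$. Sorting the vertices of $X_2$ by which case occurs then gives the partition $X_2 = X^1_2 \cup X^2_2$. By the symmetry of the $C_5$ it suffices to treat $(X_1,X_2,Y_1)=(A_1,A_2,C_1)$. In this case $x$ is adjacent to $b$ and $e$ and to no other vertex of $C$, every $a' \in A_1$ is adjacent only to $a$ on $C$, and every $c' \in C_1$ is adjacent only to $c$ on $C$. By Lemma \ref{AdjacencyX_1}, $A_1$ is complete to $C_1$.

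\textbf{Step 1 (at most one side).} If $x$ had a neighbour $a' \in A_1$ and a neighbour $c' \in C_1$, then $\{x,a',c'\}$ would be a triangle, since $a'c'$ is an edge by Lemma \ref{AdjacencyX_1}. So $x$ never sees both sides.

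\textbf{Step 2 (every pair is hit).} Let $a' \in A_1$ and $c' \in C_1$. I will show that $x$ is adjacent to at least one of them. Suppose instead that $x$ is adjacent to neither. Consider the path $(a', c', c, d, e, x)$, whose consecutive pairs are all edges. The nonconsecutive pairs give no chords:
\begin{itemize}
\item $a'$ has no neighbour among $c,d,e$ (its only neighbour on $C$ is $a$), and $a'x$ is not an edge by assumption;
\item $c'$ has no neighbour among $d,e$ (its only neighbour on $C$ is $c$), and $c'x$ is not an edge by assumption;
\item $ce$ is not an edge of the induced $C_5$, and $x$ is not adjacent to $c$ or $d$.
\end{itemize}
Hence this path is an induced $P_6$, a contradiction. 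Checking this chord list is the only real work, and it is the step I would verify most carefully. The point of the choice is that the edge $a'c'$ supplied by Lemma \ref{AdjacencyX_1} lets the path leave the cycle through $c$ and return through $x$ while avoiding $a$ and $b$.

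\textbf{Step 3 (conclude).} Combining Steps 1 and 2, for every pair $(a',c') \in A_1 \times C_1$ the vertex $x$ is adjacent to exactly one of $a'$ and $c'$. Suppose $x$ is adjacent to some $a' \in A_1$.
\begin{itemize}
\item By Step 1, $x$ has no neighbour in $C_1$.
\item Pairing any $a'' \in A_1$ with some $c' \in C_1$ (which exists since $C_1 \neq \emptyset$), Step 2 forces $x a''$ to be an edge, so $x$ is complete to $A_1$.
\end{itemize}
Symmetrically, if $x$ has some neighbour in $C_1$, then $x$ is complete to $C_1$ and independent from $A_1$. Since $A_1 \neq \emptyset$, Step 2 guarantees that one of these two cases occurs. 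Define
\[
X^2_2=\{x \in X_2 : x \text{ is complete to } X_1\}, \qquad X^1_2=\{x \in X_2 : x \text{ is complete to } Y_1\}.
\]
These two sets partition $X_2$, and they have the properties stated in Lemma \ref{X1-X2-dichotomy}. The nonemptiness of $X_1$ and $Y_1$ is used exactly in Step 3. The assumption that $X_2$ is nonempty only ensures that the statement is not vacuous.
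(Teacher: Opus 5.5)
Your proof is correct and follows essentially the paper's argument. The triangle $\{x,a',c'\}$, via the completeness of $A_1$ to $C_1$, gives exclusivity, and an induced $P_6$ through an $A_1$--$C_1$ edge forces $x$ to see at least one vertex of each pair. The paper uses the path $(c,c_1,a_1',a,e,x)$ instead of your $(a',c',c,d,e,x)$, and it splits $A_2$ by whether $x$ has a neighbour in $A_1$ rather than first proving your symmetric ``exactly one of each pair'' claim; these are only presentational differences.
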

\begin{proof}
 Without loss of generality, we map the sets to the specific cycle anchors of $C = (a, b, c, d, e)$. Assume $X_1 = A_1$ (adjacent only to $a$), $Y_1 = C_1$ (adjacent only to $c$), and $X_2 = A_2$ (adjacent exactly to $b$ and $e$). 

Assume moreover that $A_1$ is complete to $C_1$ (as given by Lemma~\ref{AdjacencyX_1}). Define
$
A_2^{2}=\{x\in A_2 : N(x)\cap A_1\neq\emptyset\},\qquad
A_2^{1}=A_2\setminus A_2^{2}.$
Then:
\begin{enumerate}
\item $A_2^{2}$ is complete to $A_1$ and independent from $C_1$;
\item $A_2^{1}$ is independent from $A_1$ and complete to $C_1$.
\end{enumerate}

\smallskip
\noindent\textbf{(1) $A_2^{2}$ is complete to $A_1$ and independent  to $C_1$.}
Let $x\in A_2^{2}$. By definition, $x$ has a neighbor $a_1\in A_1$.
Let $c_1\in C_1$. If $xc_1\in E(G)$, then since $A_1$ is complete to $C_1$ we have $a_1c_1\in E(G)$, and thus $\{a_1,x,c_1\}$ is a triangle, contradicting that $G$ is $C_3$-free.
Hence $x$ has no neighbor in $C_1$, i.e., $A_2^{2}$ is independent from $C_1$.

It remains to show that $x$ is adjacent to every vertex of $A_1$.
Suppose for contradiction that there exists $a_1'\in A_1$ such that $a_1'x\notin E(G)$.
Fix any $c_1\in C_1$ and consider  sequence
\[
P=(c,\ c_1,\ a_1',\ a,\ e,\ x).
\]
The consecutive pairs are edges: $cc_1$ (since $c_1\in C_1$), $c_1a_1'$ (since $A_1$ is complete to $C_1$), $a_1'a$ (definition of $A_1$), $ae$ (edge of $C$), and $ex$ (definition of $A_2$).
Moreover, by the layer definitions relative to $C$ and since $C$ is induced, all nonconsecutive pairs in $P$ are non-edges except possibly $c_1x$.
But $c_1x\notin E(G)$ by the independence just proved.
Therefore $P$ induces a $P_6$, contradicting that $G$ is $P_6$-free.
This contradiction shows that $x$ is adjacent to every $a_1'\in A_1$, i.e., $A_2^{2}$ is complete to $A_1$.

\smallskip
\noindent\textbf{(2) $A_2^{1}$ is independent from $A_1$ and complete to $C_1$.}
Let $y\in A_2^{1}$. By definition of $A_2^{1}$, we have $N(y)\cap A_1=\emptyset$, so $A_2^{1}$ is independent from $A_1$.

Now fix an arbitrary $c_1\in C_1$ and choose any $a_1\in A_1$.
Since $y$ has no neighbors in $A_1$, we have $a_1y\notin E(G)$.
Consider the sequence
$
P=(c,\ c_1,\ a_1,\ a,\ e,\ y).
$
As above, the consecutive pairs are edges, and all nonconsecutive pairs are non-edges except possibly $c_1y$ (by the layer definitions and because $C$ is induced).
If $c_1y\notin E(G)$, then $P$ is an induced $P_6$, contradicting that $G$ is $P_6$-free.
Hence $c_1y\in E(G)$.
Since $c_1\in C_1$ was arbitrary, $y$ is adjacent to every vertex of $C_1$, i.e., $A_2^{1}$ is complete to $C_1$. 
\end{proof}

\vspace{3mm}

We establish a sequence of lemmas that characterize the adjacency between consecutive sets in the second neighborhood, $X_2$ and $Y_2$, based on  their connections to $X_1$ and $Y_1$.

\begin{lemma}\label{A1~A2--B2}
Suppose $X_1,X_2$ and $Y_2$ are not empty where $X_2$ and $Y_2$ are consecutive. Let $X^2_2$ be a subset of $X_2$ with neighbors in $X_1$. Then $X^2_2$ is complete to $Y_2$.

\end{lemma}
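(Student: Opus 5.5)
The plan is a direct forbidden-subgraph argument. Fix the labelling up to symmetry: $X_1=A_1$ (only neighbor $a$ on $C$) and $X_2=A_2$ (neighbors exactly $b,e$). The two sets consecutive to $A_2$ are $B_2$ (neighbors $a,c$) and $E_2$ (neighbors $d,a$); these two cases are mirror images under the reflection of $C$ fixing $a$. Take $x\in A_2^2$ together with a neighbor $a_1\in A_1$ of $x$, which exists by the definition of $A_2^2$, and take an arbitrary $y\in Y_2$. Suppose for contradiction that $xy\notin E(G)$. We will build an induced $P_6$. Only the existence of a single neighbor $a_1$ is needed, so Lemma~\ref{X1-X2-dichotomy} is not required.

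First record the auxiliary non-edge $a_1y\notin E(G)$. In both cases $y$ is adjacent to $a$, and $a_1$ is adjacent to $a$. So an edge $a_1y$ would create the triangle $\{a,a_1,y\}$, contradicting $C_3$-freeness.

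\textbf{Case $Y_2=B_2$.} Consider $P=(y,c,d,e,x,a_1)$. Its consecutive pairs $yc,cd,de,ex,xa_1$ are edges by the definitions of $B_2$, $C$, $A_2$ and the choice of $a_1$. Now check the nonconsecutive pairs.
\begin{itemize}
\item $yd$ and $ye$ are non-edges, since $y$'s only cycle neighbors are $a,c$.
\item $yx$ is a non-edge by assumption, and $ya_1$ by the triangle argument above.
\item $ce$ is a non-edge because $C$ is induced.
\item $cx$ and $dx$ are non-edges, since $x$'s only cycle neighbors are $b,e$.
\item $ca_1$, $da_1$ and $ea_1$ are non-edges, since $a_1$'s only cycle neighbor is $a$.
\end{itemize}
Hence $P$ is an induced $P_6$, a contradiction.

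\textbf{Case $Y_2=E_2$.} Use the reflected path $P=(y,d,c,b,x,a_1)$. The verification is identical: $y$ sees only $d,a$ on $C$, $x$ sees only $b,e$, $a_1$ sees only $a$, $bd$ is a non-edge because $C$ is induced, and $ya_1$, $yx$ are non-edges as before.

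Since $x\in A_2^2$ and $y\in Y_2$ were arbitrary, $X_2^2$ is complete to $Y_2$. The only delicate point is choosing the right path. The natural first attempts route through $a$, which fails because $a$ is adjacent to both $y$ and $a_1$. The fix is to go around the far side of the cycle, using $c,d,e$ (or $d,c,b$) to avoid $a$. Once that path is chosen, every non-adjacency follows from the exact cycle-neighborhoods of $x,y,a_1$ together with triangle-freeness for the pair $y,a_1$.
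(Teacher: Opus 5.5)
Your proof is correct and is essentially the paper's argument: the paper uses the same induced path $(a_1,a_2,e,d,c,b_2)$ (your $P$ read backwards) and leaves the $E_2$ case to symmetry. Your write-up is more complete than the paper's, since you state explicitly the non-edge $a_1y$ (which would close the triangle $\{a,a_1,y\}$) and you verify the reflected case.
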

\begin{proof}
Up to symmetry, we assume $X_1=A_1$, $X_2=A_2$. Let $A^2_2$ be a subset of $A_2$ where each vertex has a neighbor in $A_1$. Let $a_2 \in A^1_2$, $b_2 \in B_2$, and $a_1 \in A_1$, where $a_2a_1 \in E(G)$. By definition $a_1e \notin E(G)$. Now $P=(a_1, a_2, e, d, c, b_2)$ is an induced $P_6$ unless $a_2b_2 \in E(G)$. 
\end{proof}

\begin{lemma}\label{A1!~A2--B1!~B2--C1}
Suppose $X_1, X_2, Y_1, Y_2, Z_1$ are non-empty where $X_1$ and $X_2$ are independent, $Y_1$ and $Y_2$ are independent, and $Z_1$ is not next to $X_1, Y_1$. Then $X_2$ and $Y_2$ are independent.
\end{lemma}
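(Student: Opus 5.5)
Write $X_2=A_2$ and $Y_2=B_2$ up to symmetry; they are consecutive, and in the intended reading of the statement $X_1=A_1$, $Y_1=B_1$. The only class in $\{A_1,B_1,C_1,D_1,E_1\}$ that is not consecutive with either $A_1$ or $B_1$ is $D_1$, so $Z_1=D_1$. The plan is to show that $D_1$ is complete to both $A_2$ and $B_2$. Then an edge $a_2b_2$ would close a triangle, which is forbidden.

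\textbf{Step 1: $A_2$ is complete to $D_1$.} Apply Lemma~\ref{X1-X2-dichotomy} to the triple $(A_1,A_2,D_1)$. The lemma is proved for $(A_1,A_2,C_1)$. The reflection of $C$ fixing $a$ swaps $b\leftrightarrow e$ and $c\leftrightarrow d$, so it fixes $A_1$ and $A_2$ (the neighbours of $A_2$ on $C$ are $\{b,e\}$) and maps $C_1$ to $D_1$. Hence the lemma holds for $(A_1,A_2,D_1)$. It gives $A_2=A_2^1\cup A_2^2$, where $A_2^2$ is complete to $A_1$ and $A_2^1$ is complete to $D_1$. By hypothesis $A_2$ is independent of $A_1$, and $A_1\neq\emptyset$, so $A_2^2=\emptyset$. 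Therefore $A_2=A_2^1$ is complete to $D_1$.

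\textbf{Step 2: $B_2$ is complete to $D_1$.} The triple $(B_1,B_2,D_1)$ is the image of $(A_1,A_2,C_1)$ under the rotation $a\mapsto b\mapsto c\mapsto d\mapsto e\mapsto a$. This rotation sends $A_2$ (neighbours $b,e$) to $B_2$ (neighbours $c,a$) and $C_1$ to $D_1$. So Lemma~\ref{X1-X2-dichotomy} applies directly. Since $B_1\neq\emptyset$ and $B_2$ is independent of $B_1$, the same argument gives that $B_2$ is complete to $D_1$.

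\textbf{Step 3: conclusion.} Suppose $a_2\in A_2$ and $b_2\in B_2$ are adjacent. Pick any $d_1\in D_1$, which exists since $Z_1\neq\emptyset$. By Steps 1 and 2, $\{a_2,b_2,d_1\}$ is a triangle, contradicting $C_3$-freeness. Hence $A_2$ and $B_2$ are independent.

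The only delicate point is verifying that the symmetries used in Steps 1 and 2 really carry the configuration of Lemma~\ref{X1-X2-dichotomy} onto $(A_1,A_2,D_1)$ and $(B_1,B_2,D_1)$. This amounts to checking how the two-neighbour classes $X_2$ are permuted by reflections and rotations of $C$.

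If one prefers to avoid symmetry, there is a direct $P_6$ argument. For $a_2\in A_2$ nonadjacent to $d_1\in D_1$, take $b_1\in B_1$ and consider $(a_2,e,d,d_1,b_1,b)$. Here $b_1d_1$ is an edge by Lemma~\ref{AdjacencyX_1}, $ba_2$ is an edge, and $a_2b_1$ is a non-edge since otherwise $\{a_2,b_1,b\}$ is a triangle. This closes up into a $6$-cycle rather than a path. So the symmetry route via Lemma~\ref{X1-X2-dichotomy} is the one I would commit to.
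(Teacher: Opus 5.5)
Your proposal is correct and follows the same route as the paper. The paper likewise uses Lemma~\ref{X1-X2-dichotomy}, together with the independence of $A_2$ from $A_1$ and of $B_2$ from $B_1$, to make $D_1$ complete to both $A_2$ and $B_2$, and then obtains a triangle from any edge $a_2b_2$. Your check that the reflection fixing $a$ and the rotation $a\mapsto b$ carry $(A_1,A_2,C_1)$ onto $(A_1,A_2,D_1)$ and $(B_1,B_2,D_1)$ spells out the symmetry the paper leaves to the word ``similarly''.
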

\begin{proof}
We assume $(X_1, X_2, Y_1, Y_2, Z_1) = (A_1, A_2, B_1, B_2, D_1)$. By Lemma \ref{X1-X2-dichotomy}, since $A_1$ and $A_2$ are independent, $A_2$ and $D_1$ are complete. Similarly, $B_2$ and $D_1$ are complete. Now, for any $a_2 \in A_2$ and $b_2 \in B_2$, they are not adjacent, as otherwise we have a triangle $\{a_2, d_1, b_2\}$ in $G$.
\end{proof}

\begin{lemma}\label{B0!~A1}
Every vertex $x$ of distance two from $C$ is not adjacent to any vertex in $A_1 \cup B_1 \cup C_1 \cup D_1 \cup E_1$.
\end{lemma}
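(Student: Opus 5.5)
The plan is a direct contradiction argument that builds an induced $P_6$, in the same way as the proof of Lemma~\ref{lem:distance-2}. Let $x$ be a vertex at distance two from $C$. Suppose, for contradiction, that $x$ is adjacent to some vertex of $A_1 \cup B_1 \cup C_1 \cup D_1 \cup E_1$. By the rotational symmetry of the $5$-cycle $C=(a,b,c,d,e)$, we may assume $x$ has a neighbor $a_1 \in A_1$. By definition, $N(a_1)\cap V(C)=\{a\}$.

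Next, we extend the edge $xa_1$ along the cycle. Consider the sequence
\[
P=(x,\ a_1,\ a,\ b,\ c,\ d).
\]
The consecutive pairs are edges. The edge $xa_1$ holds by assumption, $a_1a$ by the definition of $A_1$, and $ab$, $bc$, $cd$ are edges of $C$.

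It remains to check that no two non-consecutive vertices of $P$ are adjacent, which makes $P$ induced.
\begin{enumerate}
\item Since $x$ is at distance two from $C$, it has no neighbor on $C$. Hence $x$ is non-adjacent to $a,b,c,d$.
\item The only neighbor of $a_1$ on $C$ is $a$. Hence $a_1$ is non-adjacent to $b,c,d$.
\item Since $C$ is an induced $C_5$, its only edges are its cycle edges. Hence $ac$, $ad$ and $bd$ are non-edges.
\end{enumerate}
Thus $P$ is an induced $P_6$ in $G$, contradicting that $G$ is $P_6$-free. The same argument with the indices rotated handles $B_1,C_1,D_1,E_1$.

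There is no real obstacle here. The only point needing care is that we walk three further steps around the cycle \emph{away} from $a$, in either direction, using four consecutive cycle vertices starting at $a$. This is what guarantees that $a_1$ and $x$ see nothing on the rest of the path. The argument never uses triangle-freeness, only $P_6$-freeness and the definitions of the layers.
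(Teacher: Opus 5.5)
Your proof is correct and matches the paper's proof: both take a neighbor $a_1\in A_1$ of $x$, observe that $(x,a_1,a,b,c,d)$ is an induced $P_6$ because $x$ has no neighbor on $C$, $a_1$ sees only $a$, and $C$ is chordless, and then handle the other four sets by symmetry. Your version just spells out the non-edge checks that the paper leaves implicit.
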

\begin{proof}
Suppose $x$ is adjacent to $a_1 \in A_1$. By definition of distance, $x$ has no neighbors in $V(C)$. Thus, the path $(x,a_1,a,b,c,d)$ is induced. Since $x$ is not adjacent to $a, b, c, d$, this is an induced $P_6$, a contradiction.
\end{proof}

\begin{lemma}\label{2-disjoint-cycle}
Let $G$ be a connected $(P_6, C_3)$-free graph. If $G$ contains two vertex-disjoint induced 5-cycles $C^1$ and $C^2$, then there exists at least one edge with one endpoint in $V(C^1)$ and the other in $V(C^2)$.
\end{lemma}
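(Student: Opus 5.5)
Fix $C=C^1=(a,b,c,d,e)$ and apply the partition of the preceding paragraphs with respect to $C^1$. By the Distance Lemma~\ref{lem:distance-2}, every vertex of $C^2$ lies at distance $1$ or $2$ from $C^1$. We argue by contradiction and assume there is no edge between $V(C^1)$ and $V(C^2)$. Then every vertex of $C^2$ has distance exactly $2$ from $C^1$, so $V(C^2)\subseteq A_0\cup B_0\cup C_0\cup D_0\cup E_0$.

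First, by Lemma~\ref{B0!~A1}, no vertex of $C^2$ has a neighbour in $A_1\cup\dots\cup E_1$. So each vertex $x\in V(C^2)$ has a neighbour in some $X_2$; fix such a neighbour $x'$ for each $x$. Two easy structural facts come next.
\begin{enumerate}
\item[(i)] An $X_2$-vertex cannot be adjacent to two consecutive vertices of $C^2$, since together they would form a triangle.
\item[(ii)] Take $x\in V(C^2)$ with a neighbour $v\in A_2$, so $N(v)\cap V(C^1)=\{b,e\}$. If $y$ is a neighbour of $x$ on $C^2$ with $yv\notin E(G)$, then $(y,x,v,b,c,d)$ is a path with all non-consecutive pairs non-adjacent: $y$ has no neighbour on $C^1$, $x$ has none, and $v\not\sim c,d$. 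This is an induced $P_6$, a contradiction. Hence every neighbour of $v$'s neighbour on $C^2$ must also be adjacent to $v$.
\end{enumerate}
By (ii), if $v$ is adjacent to $x$ then $v$ is adjacent to both $C^2$-neighbours of $x$. Repeating this around the cycle, $v$ is adjacent to all five vertices of $C^2$. This contradicts (i), because $v$ would then be adjacent to two consecutive vertices $x,y$ of $C^2$, which together with $v$ form a triangle.

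Thus no vertex of $C^2$ can lie at distance $2$ from $C^1$. So some vertex of $C^2$ lies at distance $1$, that is, it has a neighbour on $C^1$, which gives the required edge.

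The step needing the most care is checking that the path in (ii) is really induced. The relevant non-adjacencies are: $y$ and $x$ have no neighbours on $C^1$ (distance $2$); $v\in A_2$ is adjacent only to $b,e$ on $C^1$; and $C^1$ itself is induced. The argument is the same for $v$ in $B_2,\dots,E_2$ by symmetry, choosing a path through three cycle vertices starting at one neighbour of $v$ and avoiding the other. For $A_2$ this path is $b,c,d$, which avoids $e$ because $d\sim e$ but $v$ is not adjacent to $d$.
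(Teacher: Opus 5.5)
Your proof is correct, but it takes a different route from the paper's. The paper picks a common neighbour $w_1$ of $u_0\in V(C^1)$ and $v_0\in V(C^2)$ and walks along $C^2$, anchored only by the edge $u_4u_0$ of $C^1$. Since $(u_4,u_0,w_1,v_0,v_1,v_2)$ cannot be an induced $P_6$, it gets $w_1v_2\in E(G)$. Then $(u_4,u_0,w_1,v_2,v_3,v_4)$ gives $w_1v_4\in E(G)$, and $\{w_1,v_0,v_4\}$ is a triangle. That argument needs only $C_3$-freeness to control $w_1$; it never asks how many neighbours $w_1$ has on $C^1$. It does use that $C^2$ is an odd cycle, since stepping by two from $v_0$ lands next to $v_0$, and it needs two $P_6$ checks.

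You instead control the bridging vertex from the $C^1$ side, using the layer partition: the lemma that distance-$2$ vertices have no neighbours in $A_1\cup\dots\cup E_1$, plus the definition of the $X_2$ sets. This lets your path run three vertices along $C^1$, so a single $P_6$ check suffices. The step \emph{repeating this around the cycle} is redundant. As soon as (ii) gives $vy\in E(G)$ for a $C^2$-neighbour $y$ of $x$, the set $\{v,x,y\}$ is already a triangle.

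So your argument never uses that $C^2$ is a cycle, only that it has an edge whose endpoints both lie at distance $2$ from $C^1$. It in fact shows the stronger fact that the vertices at distance exactly $2$ from an induced $C_5$ form an independent set. You could also drop the lemma about $A_1\cup\dots\cup E_1$. By Observation~\ref{obs1}, any vertex adjacent to $C^1$ has a neighbour $u$ on $C^1$ such that the next two cycle vertices, in a suitable direction, are non-neighbours. This is exactly the choice made in the proof of Lemma~\ref{lem:distance-2}.
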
 
\begin{proof}
Assume for contradiction that there is no edge between $V(C^1)$ and $V(C^2)$.
Since $G$ is connected, let $d(C^1,C^2)$ be the minimum distance between a vertex of $C^1$ and a vertex of $C^2$.
Pick any $x\in V(C^2)$. By Lemma~\ref{lem:distance-2}, $d(x,C^1)\le 2$, hence $d(C^1,C^2)\le 2$.
Because we assumed there is no edge between $C^1$ and $C^2$, we have $d(C^1,C^2)\neq 1$, so $d(C^1,C^2)=2$.

Let $u_0\in V(C^1)$ and $v_0\in V(C^2)$ be vertices with $d(u_0,v_0)=2$, and let $w_1$ be a common neighbor on a shortest path
$u_0-w_1-v_0$.
Write the induced 5-cycles as
$C^1=(u_0,u_1,u_2,u_3,u_4,u_0)$ and $C^2=(v_0,v_1,v_2,v_3,v_4,v_0)$.

\smallskip
\noindent\emph{Step 1.} Consider the vertex sequence
$Q=(u_4,u_0,w_1,v_0,v_1,v_2).$
All consecutive pairs are edges by construction. We claim that, among the nonconsecutive pairs in $Q$,
every pair is a non-edge except possibly $w_1v_2$:
\begin{itemize}
\item $u_4w_1\notin E(G)$, otherwise $u_4u_0w_1$ is a triangle, contradicting $C_3$-freeness.
\item $w_1v_1\notin E(G)$, otherwise $w_1v_0v_1$ is a triangle.
\item $u_i v_j\notin E(G)$ for $i\in\{4,0\}$ and $j\in\{0,1,2\}$ by the assumption that there are no edges between $C^1$ and $C^2$.
\item $v_0v_2\notin E(G)$ because $C^2$ is an induced 5-cycle (so it has no chords).
\end{itemize}
Therefore, if $w_1v_2\notin E(G)$, then $Q$ induces a $P_6$, contradicting that $G$ is $P_6$-free.
Hence $w_1v_2\in E(G)$.

\smallskip
\noindent\emph{Step 2.} Now consider
$Q'=(u_4,u_0,w_1,v_2,v_3,v_4).$
Again all consecutive pairs are edges (using $w_1v_2\in E(G)$ from Step 1). As above, all nonconsecutive pairs in $Q'$ are non-edges except possibly $w_1v_4$:
\begin{itemize}
\item $u_4w_1\notin E(G)$ as before.
\item $w_1v_3\notin E(G)$, otherwise $w_1v_2v_3$ is a triangle.
\item $u_i v_j\notin E(G)$ for $i\in\{4,0\}$ and $j\in\{2,3,4\}$ by the assumption of no edges between the cycles.
\item $v_2v_4\notin E(G)$ because $C^2$ is induced.
\end{itemize}
Thus, if $w_1v_4\notin E(G)$ then $Q'$ induces a $P_6$, contradicting $P_6$-freeness.
Hence $w_1v_4\in E(G)$.

Finally, $w_1$ is adjacent to both $v_0$ (by the choice of the shortest path) and $v_4$ (by Step 2), and $v_0v_4\in E(G)$ since they are adjacent on $C^2$.
Therefore $\{w_1,v_0,v_4\}$ spans a triangle, contradicting that $G$ is $C_3$-free.
This contradiction shows that an edge between $V(C^1)$ and $V(C^2)$ must exist.
\end{proof}

\begin{lemma}\label{A1-C2}
Suppose $X_1$ and $Y_2$ are not empty where $X_1,Y_1$ are not consecutive (i.e. $A_1$ and $C_2$ are not empty or $A_1$ and $D_2$ are not empty). Then for every two vertices $x_1,x'_1 \in X_1$, $N(x_1) \cap Y_2 = N(x'_1) \cap Y_2$.

\end{lemma}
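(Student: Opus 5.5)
The plan is to argue by contradiction and produce an induced $P_6$ that uses \emph{both} vertices of $X_1$ together with their common cycle neighbour. By the symmetry of $C$ we may assume $X_1=A_1$. The reflection of $C$ that fixes $a$ and swaps $b\leftrightarrow e$, $c\leftrightarrow d$ exchanges $C_2$ (neighbours $\{b,d\}$) with $D_2$ (neighbours $\{c,e\}$). So it suffices to treat $Y_2=C_2$, and to note that the case $Y_2=D_2$ is the mirror image.

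Suppose the claim fails. Then there are $a_1,a_1'\in A_1$ and $c_2\in C_2$ with $a_1c_2\in E(G)$ and $a_1'c_2\notin E(G)$. A first natural attempt is to walk around the cycle from $a_1'$ to $a_1$, for example $(a_1',a,e,d,c_2,a_1)$. This fails because $a_1$ is itself adjacent to $a$, so any path around $C$ that returns to $a_1$ has a chord. The fix is to use $a$ as the common neighbour of $a_1$ and $a_1'$ in the middle of the path. I would consider
\[
P=(a_1',\ a,\ a_1,\ c_2,\ d,\ c).
\]

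The consecutive pairs of $P$ are edges:
\begin{itemize}
\item $a_1'a$ and $aa_1$ hold by the definition of $A_1$;
\item $a_1c_2$ holds by assumption;
\item $c_2d$ holds by the definition of $C_2$;
\item $dc$ is an edge of $C$.
\end{itemize}
The nonconsecutive pairs are non-edges:
\begin{itemize}
\item $a_1'a_1$, since $A_1$ is independent (Lemma~\ref{AdjacencyX_1});
\item $a_1'c_2$, by assumption;
\item $a_1'd$, $a_1'c$, $a_1d$ and $a_1c$, since vertices of $A_1$ see only $a$ on $C$;
\item $ac_2$ and $c_2c$, since $c_2$ sees only $b,d$ on $C$;
\item $ad$ and $ac$, since $C$ is an induced cycle.
\end{itemize}
Hence $P$ is an induced $P_6$, which is a contradiction. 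So every vertex of $A_1$ has the same neighbourhood in $C_2$.

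For $D_2$ the mirrored path is $(a_1',a,a_1,d_2,c,d)$, with $d_2\in D_2$ adjacent to $a_1$ but not to $a_1'$. It is checked in the same way, using that $d_2$ sees only $c,e$ on $C$.

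The only real obstacle is choosing the path. Once one routes through $a$, between the two $A_1$ vertices, instead of around the cycle, the verification is routine and uses only Lemma~\ref{AdjacencyX_1} and the layer definitions. In particular, no nonemptiness of $C_1$ or $D_1$ is needed.
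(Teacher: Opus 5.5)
Your proof is correct and follows the paper's argument: the paper also reduces to $X_1=A_1$, $Y_2=C_2$ and derives a contradiction from the same induced path $(a_1',a,a_1,c_2,d,c)$. Your explicit check of all ten non-edges and the mirrored path $(a_1',a,a_1,d_2,c,d)$ for the $D_2$ case fill in what the paper covers with ``up to symmetry''.
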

\begin{proof}
Up to symmetry we assume $X_1=A_1$ and $Y_2=C_2$. Suppose $a_1c_2$ is an edge where $a_1 \in A_1$ and $c_2 \in C_2$. Now path $P=(a'_1,a,a_1,c_2,d,c)$ is an induced $P_6$ if $a'_1c_2 \not\in E(G)$, $a'_1 \in A_1$. Therefore, $a'_1c_2 \in E(G)$. This implies $N(a_1) \cap C_2= N(a'_1) \cap C_2$. 
\end{proof}

\vspace{3mm}

\vspace{3mm}
The following Corollary follows from Lemma \ref{A1-C2}.
\begin{corollary}\label{A1~C2_2}
    Suppose $X_1$ and $Y_2$ are not empty where $X_1$ and $Y_1$ are not consecutive (e.g. $A_1$ and $C_2$). Then $Y_2= Y^1_2 \cup Y^2_2$ so that $X_1$ and $Y_2^2$ forms a clique and $X_1 \cup Y^1_2$ is independent. 
\end{corollary}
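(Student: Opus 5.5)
The plan is to read the corollary as a direct restatement of Lemma~\ref{A1-C2} as a partition of $Y_2$. Up to symmetry take $X_1=A_1$ and $Y_2=C_2$; the case $Y_2=D_2$ is symmetric by reflecting the cycle through $a$. Define
\[
Y^2_2=\{y\in Y_2 : N(y)\cap X_1\neq\emptyset\},\qquad Y^1_2=Y_2\setminus Y^2_2 .
\]
This is a partition of $Y_2$ by construction. By definition, $Y^1_2$ has no neighbor in $X_1$. Lemma~\ref{AdjacencyX_1} says $X_1$ is independent, and $Y_2$ is independent because two adjacent vertices of $C_2$ would form a triangle with $b$. Hence $X_1\cup Y^1_2$ is an independent set, which gives the second half of the statement.

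For the first half, fix $y\in Y^2_2$ and let $x_1\in X_1$ be a neighbor of $y$. For any other $x'_1\in X_1$, Lemma~\ref{A1-C2} gives $N(x_1)\cap Y_2=N(x'_1)\cap Y_2$, so $y\in N(x'_1)$. Since $x'_1$ was arbitrary, $y$ is adjacent to every vertex of $X_1$. Therefore $X_1$ is complete to $Y^2_2$.

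One point needs care. In a $C_3$-free graph, ``$X_1$ and $Y^2_2$ forms a clique'' cannot mean that $X_1\cup Y^2_2$ is a clique, since both sides are independent. I will read it as ``$X_1$ is complete to $Y^2_2$'', i.e. $G[X_1\cup Y^2_2]$ is a biclique, and note this reading explicitly. Once that reading is fixed, there is no real obstacle: the completeness is exactly the content of Lemma~\ref{A1-C2}, and the rest is bookkeeping.
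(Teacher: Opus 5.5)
Your proposal is correct and matches the paper's approach: the paper gives no separate proof, only noting that the corollary follows from Lemma~\ref{A1-C2}, and your argument (split $Y_2$ by whether a vertex has a neighbor in $X_1$, then use the equal-neighborhood property) is exactly that derivation made explicit. Your reading of ``clique'' as ``$X_1$ is complete to $Y^2_2$'' is the intended one, consistent with how the paper later uses the corollary (e.g.\ ``$A_1$ is complete to $C^2_2$'' in the proof of Lemma~\ref{A1-D2-C2}).
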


\begin{lemma}\label{A1-D2-C2}
Suppose $X_1,Y_2,Z_2$ (say $A_1,D_2,C_2)$ are not empty where $X_1$ is not consecutive with $Y_1$ and not consecutive with $Z_1$($Y_1,Z_1$ consecutive). Then $Y_2=Y^2_2 \cup Y^1_2$, and $Z_2=Z^1_2 \cup Z^2_2$ so that :
\begin{itemize}
\item $X_1$ is complete to $Y^2_2$ and $Z^2_2$. 
\item $X_1$ is independent of $Y^1_2$ and $Z^1_2$.
\item $Z^1_2$ and $Y^2_2$ form  a complete bipartite graph.
\item $Y^1_2$ and $Z^2_2$ form a complete bipartite graph.

\end{itemize}

\end{lemma}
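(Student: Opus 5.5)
The plan is to reduce to the concrete case $(X_1,Y_2,Z_2)=(A_1,D_2,C_2)$ by symmetry, take the splittings of $Y_2=D_2$ and $Z_2=C_2$ from Corollary~\ref{A1~C2_2}, and then obtain the two biclique claims from one induced-$P_6$ argument each. Recall the adjacencies: $A_1$ sees only $a$ on $C$, $D_2$ sees exactly $\{c,e\}$, and $C_2$ sees exactly $\{b,d\}$.

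\textbf{Step 1 (the splitting).} Apply Corollary~\ref{A1~C2_2}, which rests on Lemma~\ref{A1-C2}, twice: once to the pair $(A_1,D_2)$ and once to $(A_1,C_2)$. By Lemma~\ref{A1-C2}, all vertices of $A_1$ have the same neighbourhood in $D_2$, and likewise in $C_2$. So we can set $D_2^2$ to be the common neighbourhood of $A_1$ in $D_2$ and $D_2^1=D_2\setminus D_2^2$, and define $C_2^2,C_2^1$ in the same way. This immediately gives the first two bullets: $A_1$ is complete to $D_2^2\cup C_2^2$ and independent of $D_2^1\cup C_2^1$.

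\textbf{Step 2 ($D_2^1$ is complete to $C_2^2$).} Fix $d_2\in D_2^1$, $c_2\in C_2^2$ and $a_1\in A_1$, so that $a_1c_2\in E(G)$ and $a_1d_2\notin E(G)$. Consider the sequence $(e,d_2,c,b,c_2,a_1)$. Its consecutive pairs are edges: $ed_2$ and $d_2c$ by the definition of $D_2$, $cb$ on $C$, $bc_2$ by the definition of $C_2$, and $c_2a_1$ by the choice of $c_2$.
\begin{itemize}
\item Pairs among the cycle vertices $e,c,b$ are non-edges because $C$ is induced.
\item Pairs between a cycle vertex and one of $d_2,c_2,a_1$ are non-edges by the defining neighbourhoods on $C$: $d_2\not\sim b$, $c_2\not\sim e,c$, and $a_1\not\sim e,c,b$.
\item The pair $d_2a_1$ is a non-edge because $d_2\in D_2^1$.
\end{itemize}
The only remaining nonconsecutive pair is $d_2c_2$. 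Hence $d_2c_2\in E(G)$, since otherwise the sequence is an induced $P_6$.

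\textbf{Step 3 ($C_2^1$ is complete to $D_2^2$).} Fix $c_2\in C_2^1$ and $d_2\in D_2^2$, so that $a_1d_2\in E(G)$ and $a_1c_2\notin E(G)$. Use the sequence $(b,c_2,d,e,d_2,a_1)$ and check it in the same way. The consecutive pairs $bc_2,c_2d,de,ed_2,d_2a_1$ are edges. Every other pair is a non-edge by the definitions of the layers and inducedness of $C$, except possibly $c_2d_2$ (here $c_2a_1\notin E(G)$ because $c_2\in C_2^1$). $P_6$-freeness therefore forces $c_2d_2\in E(G)$.

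The main obstacle is choosing these two paths. The natural first attempts pass through two adjacent cycle vertices, such as $a$ and $b$, and so pick up a chord. The fix is to route each path through exactly one layer vertex on each side of a short arc of $C$, which is what the sequences above do. After that the verification is routine bookkeeping. As a side remark, $C_3$-freeness with $a_1$ shows that $C_2^2$ and $D_2^2$ are independent, which is consistent with the claimed structure.
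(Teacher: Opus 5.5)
Your proof is correct and follows essentially the same route as the paper. The splitting of $D_2$ and $C_2$ comes from Lemma~\ref{A1-C2}, and each biclique claim is forced by a single induced $P_6$. The paper uses $(c_2^1,d,c,d_2^2,a_1,a)$ for one claim and invokes the reflection symmetry for the other, whereas you write out both paths, $(e,d_2,c,b,c_2,a_1)$ and $(b,c_2,d,e,d_2,a_1)$, and both are indeed induced.
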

\begin{proof}
Up to symmetry assume $X_1=A_1$, $Y_2=D_2$ and $Z_2=C_2$. Suppose $c^1_2 \in C^1_2$ is not adjacent to $d^2_2 \in D^2_2$. Now path $P=(c^1_2,d,c,d^2_2,a_1,a)$ is an induced $P_6$ (note that by definition $c^1_2a_1 \not\in E(G)$. Therefore, $D^2_2$ is complete to $C^1_2$. By symmetric argument we conclude that $D^1_2$ is complete to $C^2_2$. Now using Lemma \ref{A1-C2}, we conclude that for every $a_1,a'_1 \in A_1$ either both of them adjacent to all vertices in $C_2$ or neither of them adjacent to $C_2$. Thus, by definition $A_1$ is complete to $C^2_2$. Similarly, $A_1$ is complete to $D^2_2$. \end{proof}

\vspace{3mm}

\begin{lemma}\label{A0-A2-B2}
If $X_0$, $X_2$, and $Y_2$ ($X_2,Y_2$ are consecutive) are not empty then the following hold. 
\begin{itemize}
    \item [(1)] For every two vertices $a_0,a'_0 \in X_0$ with a common neighbor in $X_2$, $N(a_0) \cap Y_2 = N(a'_0) \cap Y_2$. 
    \item [(2)] For every two vertices $a_0,a'_0 \in X_0$ with a common neighbor in $Y_2$, $N(a_0) \cap X_2 = N(a'_0) \cap X_2$. 
    \item [(3)] For every two vertices $a_2,a'_2 \in X_2$ with a common neighbor in $X_0$, $N(a_2) \cap Y_2 = N(a'_2) \cap Y_2$. 
   
\end{itemize}
\end{lemma}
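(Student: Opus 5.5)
The plan is to prove each of (1)--(3) by contradiction in the same way as the earlier lemmas: assume the two neighbourhoods differ, and exhibit an induced $P_6$ that uses a short stretch of $C$. Up to symmetry, fix $X_0=A_0$, $X_2=A_2$ (adjacent on $C$ exactly to $b,e$) and $Y_2=B_2$ (adjacent on $C$ exactly to $a,c$).

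The non-edges we need come from three sources. First, vertices of $A_0$ have no neighbours on $C$, by the definition of distance two and Lemma~\ref{lem:distance-2}. Second, vertices of $A_2$ and $B_2$ have exactly the stated neighbours on $C$, and $C$ is induced. Third, $C_3$-freeness rules out any edge that would close a triangle with a common neighbour.

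\textbf{(1)} Let $a_0,a_0'\in A_0$ have a common neighbour $a_2\in A_2$. Suppose $b_2\in B_2$ is adjacent to $a_0$ but not to $a_0'$.
\begin{itemize}
\item $a_0a_0'\notin E(G)$, since otherwise $a_0a_0'a_2$ is a triangle.
\item $a_2b_2\notin E(G)$, since otherwise $a_0a_2b_2$ is a triangle.
\end{itemize}
We then check that $(a_0',a_2,a_0,b_2,c,d)$ is an induced $P_6$. Its consecutive pairs are edges. Among the remaining pairs, $a_0'$ and $a_0$ miss $c,d$; $a_2$ misses $c,d$; and $b_2$ misses $d$. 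This contradicts $P_6$-freeness.

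\textbf{(2)} Let $a_0,a_0'\in A_0$ have a common neighbour $b_2\in B_2$. Suppose $a_2\in A_2$ is adjacent to $a_0$ but not to $a_0'$. Triangle-freeness gives $a_2b_2\notin E(G)$ and $a_0a_0'\notin E(G)$. We then use the path $(a_0',b_2,a_0,a_2,e,d)$. It is induced because $b_2$ misses $e,d$, $a_2$ misses $d$, and $A_0$ misses $C$.

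\textbf{(3)} Let $a_2,a_2'\in A_2$ have a common neighbour $a_0\in A_0$. Suppose $b_2\in B_2$ is adjacent to $a_2$ but not to $a_2'$.
\begin{itemize}
\item $a_2a_2'\notin E(G)$, since both are adjacent to $b$.
\item $a_0b_2\notin E(G)$, since otherwise $a_0a_2b_2$ is a triangle.
\end{itemize}
We then use $(a_2',a_0,a_2,b_2,c,d)$. It is induced because $a_2,a_2'$ miss $c,d$, $b_2$ misses $d$, $a_0$ misses $C$, and $a_2'b_2\notin E(G)$ by assumption.

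There is no real obstacle here. The only step needing care is choosing, for each part, the direction along $C$ in which the path continues. It must go towards the cycle vertex that is adjacent to the path's endpoint in $X_2\cup Y_2$ but not to the other set: $c,d$ after $b_2$, and $e,d$ after $a_2$. This choice is what makes every chord vanish.
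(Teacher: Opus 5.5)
Your proof is correct and is essentially the paper's argument. For (1) and (3) the paper uses exactly the induced paths $(a_0',a_2,a_0,b_2,c,d)$ and $(a_2',a_0,a_2,b_2,c,d)$ and calls (2) analogous; your path $(a_0',b_2,a_0,a_2,e,d)$ supplies that case, and you also make explicit the triangle-forced non-edges $a_2b_2$, $a_0b_2$, $a_0a_0'$, $a_2a_2'$ that the paper uses without comment.
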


\begin{proof}
Up to symmetry, let $X_0=A_0$, $X_1=A_1$ and $Y_2=B_2$. Suppose $a_0a_2,a'_0a_2$ are edges of $G$ with $a_0,a'_0 \in A_0$ and $a_2 \in A_2$. Let $b_2 \in B_2$ so that $a_0b_2 \in E(G)$. 
Now $P=(a'_0,a_2,a_0,b_2,c,d)$ is an induced $P_6$ unless $a'_0b_2 \in E(G)$. This proves (1). The proof of (2) is analogues. 

Let $a_2,a'_2 \in A_2$ have a common neighbor $a_0 \in A_0$. Let $a_2b_2 \in E(G)$ where $b_2 \in B_2$. Now $P=(a'_2,a_0,a_2,b_2,c,d)$ is an induced $P_6$ unless $a'_2b_2 \in E(G)$, implying $N(a_2) \cap B_2 = N(a'_2) \cap B_2$. This proves (3). 
\end{proof}

\section{Structural properties of 5-Ring in $(P_6,C_3)$-free graphs }
In this section we develop some structural properties for $5$-ring graph.  
Let $H$ be a connected $(P_6,C_3)$-free graph consisting of a ring of 5 vertex sets $V_1, \dots, V_5$. In the rest of this manuscript when referring to these five sets, $V_6=V_1$, and $V_0=V_5$. Since in $(P_6,C_3)$-free graphs every odd cycle has length $5$, we focus on $C_5$ witness, and hence, we assume every vertex of $H$ lies on a $5$-cycle.  Let $H_j = H[V_j \cup V_{j+1}]$ denote the bipartite subgraph induced by consecutive sets.

\begin{lemma}\label{chain-switch}
Let $a_1, a_2 \in V_j$ and $b_1, b_2 \in V_{j+1}$. Suppose that $a_1b_1 \in E(H)$, $a_2b_2 \in E(H)$, and $b_1a_2 \in E(H)$, but $a_1b_2 \notin E(H)$ (a `Z' shape). Then:
\begin{enumerate}
    \item $N(b_2) \cap V_{j+2} \subseteq N(b_1) \cap V_{j+2}$,
    \item $N(a_1) \cap V_{j-1} \subseteq N(a_2) \cap V_{j-1}$.
\end{enumerate}
\end{lemma}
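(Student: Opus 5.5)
Both parts will be proved by contradiction: assuming the inclusion fails produces a vertex that extends the `Z' into an induced $P_6$. We use only that $H$ is a $5$-ring, that each $V_i$ is independent, and that $H$ is $P_6$-free. For part (1), suppose some $c\in V_{j+2}$ satisfies $cb_2\in E(H)$ but $cb_1\notin E(H)$. Consider the sequence
\[
P=(a_1,\ b_1,\ a_2,\ b_2,\ c,\ x)
\]
for a suitable sixth vertex $x$. Its consecutive pairs are edges by hypothesis. Non-adjacency of the remaining pairs comes from three sources: the ring structure (edges of $H$ only join $V_i$ and $V_{i\pm1}$), the independence of each $V_i$, and the assumptions $a_1b_2\notin E(H)$ and $b_1c\notin E(H)$. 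In particular $a_1c$ and $a_2c$ are non-edges because $V_j$ and $V_{j+2}$ are not consecutive in a $5$-ring. So $(a_1,b_1,a_2,b_2,c)$ is already an induced $P_5$, and we only need one more vertex.

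We extend the path at $c$ by a vertex $x\in V_{j+3}$ adjacent to $c$. Such an $x$ should come from the standing assumption that every vertex lies on a $5$-cycle: a $5$-cycle through $c$ in a $5$-ring must meet each of $V_1,\dots,V_5$ exactly once, so $c$ has a neighbour in $V_{j+3}$. Then $x$ is automatically non-adjacent to $b_1,b_2\in V_{j+1}$ and to $c$'s partner class $V_{j+2}$, since these classes are not consecutive with $V_{j+3}$. The only possible chords are $xa_1$ and $xa_2$, because $V_{j+3}$ and $V_j=V_{j+5-2}$ are not consecutive in a $5$-ring. Checking the indices, $V_{j+3}$ is adjacent to $V_{j+2}$ and $V_{j+4}=V_{j-1}$ only, so $x$ has no neighbour among $a_1,a_2$ either. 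Hence $P$ is an induced $P_6$, a contradiction, which proves (1).

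Part (2) is the mirror image. Suppose $d\in V_{j-1}$ satisfies $da_1\in E(H)$ and $da_2\notin E(H)$. Take $y\in V_{j-2}$ adjacent to $d$, which exists again by the $5$-cycle assumption, and consider
\[
(y,\ d,\ a_1,\ b_1,\ a_2,\ b_2).
\]
The same index bookkeeping shows it is an induced $P_6$. Here $V_{j-2}=V_{j+3}$ is not consecutive with $V_j$ or $V_{j+1}$, and $db_1,db_2$ are non-edges because $V_{j-1}$ and $V_{j+1}$ are not consecutive.

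The main obstacle is justifying the existence of the extension vertex ($x$, resp.\ $y$) with exactly the right adjacencies. The ring definition rules out all chords except those forced by consecutiveness, so the proof hinges on $c$ actually having a neighbour in $V_{j+3}$. If the ``every vertex lies on a $5$-cycle'' hypothesis is interpreted weakly, I would first try to use connectivity of the ring together with the fact that each $V_i$ is nonempty to locate such a neighbour. Failing that, I would use the $5$-cycle through $b_2$ (or $c$) to supply a path $c,x,\ldots$ that stays inside the ring. Once that vertex is in hand, the rest is the mechanical non-edge verification above.
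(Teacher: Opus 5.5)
Your proposal is correct and is essentially the paper's proof: the same two induced $P_6$'s, $(a_1,b_1,a_2,b_2,c,x)$ with $x\in V_{j+3}$ and $(y,d,a_1,b_1,a_2,b_2)$ with $y\in V_{j-2}$. In both, the extension vertex comes from the standing assumption that every vertex lies on a $5$-cycle of the ring. Your observation that such a $5$-cycle must meet each $V_i$ exactly once is what guarantees that $x$ and $y$ exist, so the fallback ideas in your last paragraph are not needed.
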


\begin{proof}
We prove each inclusion by contradiction.

\smallskip
\noindent\textbf{Part 1.}
Assume there exists $c\in V_{j+2}$ such that $cb_2\in E(H)$ but $cb_1\notin E(H)$.
Since every vertex lies on an induced $C_5$ in the ring, $c$ has a neighbor $d\in V_{j+3}$.
Consider the sequence $a_1,b_1,a_2,b_2,c,d$.
All consecutive pairs are edges: $a_1b_1$, $b_1a_2$, $a_2b_2$ (hypothesis), $b_2c$ (choice of $c$), and $cd$ (choice of $d$).
We claim this is an induced $P_6$.
Indeed, $a_1b_2\notin E(H)$ by hypothesis and $b_1c\notin E(H)$ by the choice of $c$.
All other nonconsecutive pairs are non-edges because edges occur only between consecutive sets:
$V_j$ is independent to $V_{j+2}\cup V_{j+3}$ and $V_{j+1}$ is independent to $V_{j+3}$.
Also $b_1b_2\notin E(H)$ since $V_{j+1}$ is independent.
Hence we obtain an induced $P_6$, contradicting that $H$ is $P_6$-free.
Therefore no such $c$ exists and $N(b_2)\cap V_{j+2}\subseteq N(b_1)\cap V_{j+2}$.

\smallskip
\noindent\textbf{Part 2.}
Assume there exists $x\in V_{j-1}$ such that $xa_1\in E(H)$ but $xa_2\notin E(H)$.
Since every vertex lies on an induced $C_5$, $x$ has a neighbor $y\in V_{j-2}$.
Consider the sequence $y,x,a_1,b_1,a_2,b_2$.
Again all consecutive pairs are edges: $yx$ and $xa_1$ by choice of $y,x$, and $a_1b_1$, $b_1a_2$, $a_2b_2$ by hypothesis.
We claim this is an induced $P_6$.
We have $xa_2\notin E(H)$ by assumption and $a_1b_2\notin E(H)$ by hypothesis.
Moreover $x$ is independent to $V_{j+1}$, so $xb_1,xb_2\notin E(H)$; and $y$ is independent to $V_j\cup V_{j+1}$, so
$ya_1,ya_2,yb_1,yb_2\notin E(H)$.
Finally $a_1a_2\notin E(H)$ and $b_1b_2\notin E(H)$ since each set is an independent set.
Thus the sequence induces a $P_6$, contradicting $P_6$-freeness.
Therefore no such $x$ exists and $N(a_1)\cap V_{j-1}\subseteq N(a_2)\cap V_{j-1}$.
\end{proof}

\begin{lemma}\label{2K2}
Let $a_1, a_2 \in V_j$ and $b_1, b_2 \in V_{j+1}$. Suppose $a_1b_1 \in E(H)$ and $a_2b_2 \in E(H)$ form an induced $2K_2$
(i.e., $a_1b_2 \notin E(H)$ and $a_2b_1 \notin E(H)$).
Then
\[
N(a_1)\cap V_{j-1}=N(a_2)\cap V_{j-1}
\qquad\text{and}\qquad
N(b_1)\cap V_{j+2}=N(b_2)\cap V_{j+2}.
\]
\end{lemma}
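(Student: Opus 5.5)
The plan is to argue by contradiction, in the same style as Lemma~\ref{chain-switch}: from a violation of either equality, build an induced $P_6$ inside $H$. The two equalities are symmetric: reversing the ring orientation $V_i\mapsto V_{-i}$ swaps the roles of $V_{j-1}$ and $V_{j+2}$. So it suffices to prove $N(a_1)\cap V_{j-1}=N(a_2)\cap V_{j-1}$. Assume, up to renaming, that some $x\in V_{j-1}$ has $xa_1\in E(H)$ but $xa_2\notin E(H)$. The basic tool is that edges of $H$ only join consecutive sets and each $V_i$ is independent. Hence in any sequence that walks around the ring, most nonconsecutive pairs are automatically non-edges, and only a few explicit pairs need checking.

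\textbf{Step 1 (common neighbour of $b_1,b_2$).} Suppose $b_1$ and $b_2$ have a common neighbour $c\in V_{j+2}$. Consider the sequence $(x,a_1,b_1,c,b_2,a_2)$. The ring structure gives $xb_1,xc,xb_2,a_1c\notin E(H)$. We also have $a_1b_2,b_1a_2\notin E(H)$ from the $2K_2$ hypothesis, $xa_2\notin E(H)$ by the choice of $x$, and $a_1a_2,b_1b_2\notin E(H)$ by independence. So this is an induced $P_6$, a contradiction. Symmetrically, a common neighbour of $a_1,a_2$ in $V_{j-1}$ gives $(c',b_1,a_1,x',a_2,b_2)$ for the second equality. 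The remaining case is that $b_1,b_2$ have no common neighbour in $V_{j+2}$.

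\textbf{Step 2 (no common neighbour).} Every vertex lies on an induced $C_5$ of the ring, so $a_2$ has a neighbour $x'\in V_{j-1}$, and $b_1,b_2$ have neighbours $c_1,c_2\in V_{j+2}$. By assumption $c_1b_2,c_2b_1\notin E(H)$, so $a_1b_1,a_2b_2$ propagate to a $2K_2$ $b_1c_1,b_2c_2$ in $H_{j+1}$.

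I would first try the path $(c_1,b_1,a_1,x,\ast,\ast)$, continuing from $x$ into $V_{j-2}=V_{j+3}$. Take $y\in V_{j+3}$ adjacent to $x$. Then $(c_1,b_1,a_1,x,y,\cdot)$ is induced up to the single pair $yc_1$, since $V_{j+3}$ and $V_{j+2}$ are consecutive. If $yc_1\notin E(H)$, extend by a neighbour of $y$ in $V_{j+4}=V_{j-1}$ other than $x$, or use $x'$ and $a_2$, to obtain a sixth vertex. If $yc_1\in E(H)$, we get a $C_5$ $(x,a_1,b_1,c_1,y)$. Then I would try $(b_2,a_2,x',\cdot)$ against this $C_5$, using the fact that $a_2$ and $b_2$ have no neighbours on it among $a_1,b_1,x$.

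\textbf{Main obstacle.} Step 2 is the main obstacle. The danger is that the $2K_2$ keeps propagating around the ring, giving $c_1c_2$, then $d_1d_2$ in $H_{j+2}$, and so on, while each $P_6$ candidate is spoiled by a chord between $V_{j+2}$ and $V_{j+3}$. I expect the contradiction to come from closing the ring. After five steps the propagated pair returns to $V_j$, and the wrap-around edges then give either a common neighbour, which reduces to Step 1, or an induced $P_6$. If that fails, the fallback is to apply Lemma~\ref{chain-switch} to the pairs $(a_1,x)$ and $(a_2,x')$ to obtain the needed inclusion in each direction.
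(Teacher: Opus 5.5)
Your Step~1 is correct: $(x,a_1,b_1,c,b_2,a_2)$ is an induced $P_6$, and it plays the role of the paper's Claim~2. The gap is Step~2. You leave it open, and the strategies you sketch cannot succeed, because nothing in your argument uses that $H$ is connected, and the lemma is false without connectivity. Take two disjoint $5$-cycles $u_1\cdots u_5$ and $v_1\cdots v_5$ with $V_i=\{u_i,v_i\}$. This is a $(P_6,C_3)$-free $5$-ring in which every vertex lies on a $C_5$, and $u_ju_{j+1}$, $v_jv_{j+1}$ form an induced $2K_2$. Yet $N(u_j)\cap V_{j-1}=\{u_{j-1}\}\neq\{v_{j-1}\}=N(v_j)\cap V_{j-1}$. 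There the propagated $2K_2$ returns to itself after five steps with no common neighbour and no induced $P_6$, so ``closing the ring'' yields nothing. The chain-switch fallback also fails. Lemma~\ref{chain-switch} needs a $Z$-configuration, but (see below) $a_1,a_2$ have no common neighbour in $V_{j-1}$, so $xa_1$ and $x'a_2$ form another $2K_2$, never a $Z$.

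The missing ingredient is Lemma~\ref{2-disjoint-cycle}: two vertex-disjoint induced $C_5$'s in a connected $(P_6,C_3)$-free graph are joined by an edge. This is how the paper concludes.
\begin{itemize}
\item First show that $a_1,a_2$ have no common neighbour $e_2\in V_{j-1}$; otherwise $(x,a_1,e_2,a_2,b_2,c_2)$ is an induced $P_6$ (the paper's Claim~1).
\item Write $E^t=N(a_t)\cap V_{j-1}$ and $C^t=N(b_t)\cap V_{j+2}$. Then $E^1\cap E^2=C^1\cap C^2=\emptyset$, so for $e_t\in E^t$ and $c_t\in C^t$ the paths $e_1a_1b_1c_1$ and $e_2a_2b_2c_2$ are anticomplete.
\item No $d\in V_{j+3}$ has neighbours on both sides. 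Indeed $(a_1,b_1,c_1,d,c_2,b_2)$, $(b_1,a_1,e_1,d,e_2,a_2)$, $(b_1,c_1,d,e_2,a_2,b_2)$ and the mirror of the last are induced $P_6$'s.
\item Let $d\in V_{j+3}$ be adjacent to $c_1$ but not to $e_1$. Then every neighbour $f\in V_{j-1}$ of $d$ lies in $E^1$, since otherwise $(e_1,a_1,b_1,c_1,d,f)$ is an induced $P_6$. Such an $f$ exists because $d$ lies on a $C_5$ of the ring. So $d$ always has a neighbour $f\in E^1$, which closes an induced $C_5$ $(f,a_1,b_1,c_1,d)$. The same holds symmetrically on side~$2$.
\item The two resulting $C_5$'s are vertex-disjoint with no edges between them, contradicting Lemma~\ref{2-disjoint-cycle}.
\end{itemize}
The paper's own write-up of this last step is loose. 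It does not justify choosing $d_t$ adjacent to both $c_t$ and $e_t$, nor does it rule out cross edges through $V_{j+3}$. The $P_6$ checks above are what make it go through.
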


\begin{proof}
We prove the first equality; the second is symmetric (swap the roles of $(V_j,a_1,a_2)$ and $(V_{j+1},b_1,b_2)$ and shift indices).

Assume for contradiction that $N(a_1)\cap V_{j-1}\neq N(a_2)\cap V_{j-1}$.
By symmetry we may assume there exists $e_1\in V_{j-1}$ with $e_1a_1\in E(H)$ and $e_1a_2\notin E(H)$.
Since every vertex lies on an induced $C_5$ and edges exist only between consecutive sets, $a_2$ has some neighbor $e_2\in V_{j-1}$.
Necessarily $e_2\neq e_1$ (because $e_1a_2\notin E(H)$). We first show:

\smallskip
\noindent\textbf{Claim 1.} $e_2a_1\notin E(H)$.

\noindent Indeed, suppose $e_2a_1\in E(H)$. Since $b_2$ lies on a $C_5$, it has a neighbor $c_2\in V_{j+2}$.
Then the vertex sequence
$
P=(e_1,\ a_1,\ e_2,\ a_2,\ b_2,\ c_2)
$
is a $P_6$: all consecutive pairs are edges, and it is induced because
$e_1a_2\notin E(H)$ by choice of $e_1$, $a_1b_2\notin E(H)$ by the $2K_2$ hypothesis, $e_1e_2\notin E(H)$ since $V_{j-1}$ is independent,
and all remaining nonconsecutive pairs are non-edges by the ring distance constraints (edges only between consecutive sets).
This contradicts that $H$ is $P_6$-free. Hence $e_2a_1\notin E(H)$, proving the claim.

Thus, the two edges $e_1a_1$ and $e_2a_2$ form an induced $2K_2$ in $H[V_{j-1}\cup V_j]$.
Next, since every vertex lies on a $C_5$, vertex $b_2$ has a neighbor $c_2\in V_{j+2}$, and vertex $b_1$ has a neighbor $c_1\in V_{j+2}$.
We show these can be chosen so that they are \emph{private} across the $2K_2$:

\smallskip
\noindent\textbf{Claim 2.} $b_1c_2\notin E(H)$ and $b_2c_1\notin E(H)$.

\noindent We prove $b_1c_2\notin E(H)$; the other is symmetric.
If $b_1c_2\in E(H)$, consider the sequence
\[
Q=(a_1,\ b_1,\ c_2,\ b_2,\ a_2,\ e_2).
\]
Consecutive pairs are edges: $a_1b_1$ and $a_2b_2$ by hypothesis, $b_2c_2$ by definition of $c_2$, $e_2a_2$ by definition of $e_2$,
and $b_1c_2$ by assumption. It is induced because
$a_1b_2\notin E(H)$ and $a_2b_1\notin E(H)$ by the $2K_2$ hypothesis, $a_1e_2\notin E(H)$ by Claim~1, and all other nonconsecutive pairs
are forbidden by ring distance constraints (e.g., $V_j$ is independent to $V_{j+2}$).
Thus $Q$ is an induced $P_6$, contradicting $P_6$-freeness. Hence $b_1c_2\notin E(H)$. This proves Claim~2.

\smallskip
Now extend $c_1,c_2$ one more step along the ring.
Since every vertex lies on a $C_5$, each $c_t$ has a neighbor $d_t\in V_{j+3}$ ($t\in\{1,2\}$).
Because $V_{j+3}=V_{j-2}$ in a $5$-ring, each $d_t$ is adjacent to vertices in $V_{j-1}$; moreover we may choose $d_1\in N(e_1)\cap V_{j+3}$ and
$d_2\in N(e_2)\cap V_{j+3}$ (each $e_t$ lies on a $C_5$, hence has a neighbor in $V_{j-2}=V_{j+3}$).
Consider the two cycles
\[
C_1=(e_1,\ a_1,\ b_1,\ c_1,\ d_1,\ e_1)
\qquad\text{and}\qquad
C_2=(e_2,\ a_2,\ b_2,\ c_2,\ d_2,\ e_2).
\]
Each is a $5$-cycle in $H$ because edges exist only between consecutive sets.
Moreover, $C_1$ and $C_2$ are vertex-disjoint (since $e_1\neq e_2$, $a_1\neq a_2$, $b_1\neq b_2$, and Claim~2 ensures $c_1\neq c_2$ unless it would create a common neighbor).
Finally, by construction and Claim~2 there are no edges between $V(C_1)$ and $V(C_2)$ other than possibly edges within the same set;
but each set is independent, so no such cross-edges exist. Hence $C_1$ and $C_2$ are two vertex-disjoint induced $C_5$'s with no edge between them.

This contradicts Lemma~\ref{2-disjoint-cycle} (in a connected $(P_6,C_3)$-free graph, two vertex-disjoint induced $5$-cycles must have an edge between them).
Therefore our assumption was false, and $N(a_1)\cap V_{j-1}=N(a_2)\cap V_{j-1}$.
The equality $N(b_1)\cap V_{j+2}=N(b_2)\cap V_{j+2}$ follows by symmetry.
\end{proof}

We define the direction of a proper chain graph $H_j$ (induced by $V_j \cup V_{j+1}$) based on the inclusion ordering.
\begin{itemize}
    \item $L \to R$: $N(a_{i+1}) \cap V_{j+1} \subseteq N(a_i) \cap V_{j+1}$ (Nested neighborhoods in $V_{j+1}$).
    \item $R \to L$: $N(b_{i+1}) \cap V_j \subseteq N(b_i) \cap V_j$ (Nested neighborhoods in $V_j$).
\end{itemize}

\begin{lemma}\label{lem:alternate-chain}
Suppose $H_j$, $H_{j+1}$, and $H_{j+2}$ are proper chain graphs with alternating inclusion directions ($L \to R$, $R \to L$, and $L \to R$ respectively). Then the adjacent bipartite graphs $H_{j-1}$ and $H_{j+3}$ must be complete bipartite graphs.
\end{lemma}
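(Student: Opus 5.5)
The plan is to turn the three alternating ``Z'' shapes into opposite inclusions on the two outer bipartite graphs $H_{j-1}=H[V_{j-1}\cup V_j]$ and $H_{j+3}=H[V_{j+3}\cup V_{j+4}]=H[V_{j-2}\cup V_{j-1}]$. These inclusions should collapse to equalities, and the equalities, together with the fact that every vertex lies on an induced $C_5$, should then force completeness. I read ``proper chain graph'' as a chain graph that is not a biclique. Such a graph contains a Z shape in the sense of Lemma~\ref{chain-switch}, and the direction ($L\to R$ or $R\to L$) tells us which endpoint has the smaller neighbourhood.

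\textbf{Step 1 (inclusions from $H_j$ and $H_{j+2}$).} Take the Z shapes in $H_j$. Since $H_j$ is $L\to R$, the vertices of $V_j$ are linearly ordered by neighbourhoods in $V_{j+1}$. Part~2 of Lemma~\ref{chain-switch} then says that a vertex of $V_j$ with a smaller $V_{j+1}$-neighbourhood has a $V_{j-1}$-neighbourhood contained in that of a vertex with a larger one. So $N(\cdot)\cap V_{j-1}$ is monotone along the chain order of $V_j$.

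The same part of the lemma, applied to Z shapes in $H_{j+2}$ (direction $L\to R$, with sides $V_{j+2}$ and $V_{j+3}$), gives a monotonicity that involves $V_{j+4}=V_{j-1}$ through part~1. That is, it controls $N(\cdot)\cap V_{j-1}$ for vertices of $V_{j+3}$. The middle graph $H_{j+1}$, with the reversed direction $R\to L$, links the two ends: it orders $V_{j+2}$ against $V_{j+1}$ in the opposite sense. Wherever a pair of vertices of $V_{j+1}$ (or of $V_{j+2}$) is not in a Z configuration, we get an induced $2K_2$, and there we use Lemma~\ref{2K2} to upgrade the inclusion to an equality.

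\textbf{Step 2 (the reverse inclusion, the main obstacle).} I will show that the inclusion on $V_{j-1}$-neighbourhoods of $V_j$ also holds in the opposite direction. Suppose $a_1,a_2\in V_j$ with $N(a_1)\cap V_{j-1}\subsetneq N(a_2)\cap V_{j-1}$, and pick $x\in V_{j-1}$ adjacent to $a_2$ but not to $a_1$. I then intend to build an induced $P_6$ that leaves $x$ and wraps around the ring through $V_{j-2}=V_{j+3}$, $V_{j+2}$ and $V_{j+1}$, and back to $a_1$. It would use the private neighbours guaranteed by the proper (non-biclique) Z shapes in $H_{j+1}$ and $H_{j+2}$, together with the opposite directions, to certify that all chords are absent. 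If some chord is forced, I fall back on Lemma~\ref{2-disjoint-cycle}: I construct two vertex-disjoint induced $C_5$'s with no edges between them, exactly as at the end of the proof of Lemma~\ref{2K2}.

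I expect this chord bookkeeping to be the hard part. The alternation of directions is precisely what should supply a vertex that is private on the correct side at each step. The conclusion of this step is that all vertices of $V_j$ have the same neighbourhood $N^*$ in $V_{j-1}$. Symmetrically (reflecting the ring and exchanging the roles of $H_j$ and $H_{j+2}$), all vertices of $V_{j+3}$ have the same neighbourhood in $V_{j+4}$.

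\textbf{Step 3 (equal neighbourhoods imply a biclique).} Every $x\in V_{j-1}$ lies on an induced $C_5$ of the ring, so $x$ has a neighbour in $V_j$. Since all vertices of $V_j$ share the same $V_{j-1}$-neighbourhood $N^*$, it follows that $x\in N^*$. Hence $N^*=V_{j-1}$, and $V_j$ is complete to $V_{j-1}$, so $H_{j-1}$ is a biclique. The identical argument, applied to $V_{j+3}$ and $V_{j+4}$, shows that $H_{j+3}$ is complete bipartite.
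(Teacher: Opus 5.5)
Your Step~2 is where the argument fails, and it cannot be repaired. The claim that all vertices of $V_j$ have the same neighbourhood in $V_{j-1}$ is false under the hypotheses, and so is the lemma itself.

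Take $V_i=\{x_i,y_i\}$ for $i\in\mathbb{Z}_5$, with edges $x_ix_{i+1}$, $x_iy_{i+1}$ and $y_ix_{i+1}$. Then $x_0x_1x_2x_3x_4$ is an induced $C_5$ and $N(y_i)=\{x_{i-1},x_{i+1}\}$. This is a connected triangle-free $5$-ring, and every vertex lies on an induced $C_5$ (e.g.\ $y_ix_{i+1}x_{i+2}x_{i+3}x_{i+4}$).

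It is also $P_6$-free. The $y$'s are independent, so an induced $P_6$ uses three or four $x$'s.
\begin{itemize}
\item Four $x$'s form $x_ix_{i+1}x_{i+2}x_{i+3}$, which would need a $y$ pendant at each end. But the $y$'s adjacent to $x_i$ are $y_{i-1}$ and $y_{i+1}$, and these also see $x_{i+3}$ and $x_{i+2}$ respectively.
\item Three $x$'s with three $y$'s force either three pairwise non-adjacent vertices of $C_5$, or $x$'s $x_i,x_{i+1},x_{i+3}$ together with an end $y$ adjacent to $x_{i+3}$ but not to $x_i,x_{i+1}$. No such $y$ exists.
\end{itemize}

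Every link $H_i$ is the path $y_i\,x_{i+1}\,x_i\,y_{i+1}$, a chain graph that is not a biclique. Under the paper's definitions every chain graph is both $L\to R$ and $R\to L$. Moreover, $H_j,H_{j+1},H_{j+2}$ contain linked $Z$'s of alternating orientation, e.g.\ $a_1=y_j$, $a_2=x_j$, $b_1=x_{j+1}$, $b_2=y_{j+1}$, $c_1=y_{j+2}$, $c_2=x_{j+2}$, $d_1=x_{j+3}$, $d_2=y_{j+3}$.

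Yet $N(y_j)\cap V_{j-1}=\{x_{j-1}\}\subsetneq\{x_{j-1},y_{j-1}\}=N(x_j)\cap V_{j-1}$. Neither $H_{j-1}$ nor $H_{j+3}$ is complete, since $y_{j-1}y_j,\ y_{j+3}y_{j+4}\notin E(H)$. So the inclusion you get from Lemma~\ref{chain-switch} in Step~1 can be strict. No induced $P_6$ and no appeal to Lemma~\ref{2-disjoint-cycle} can reverse it, and Step~3 never gets started. Separately, a chain graph has no induced $2K_2$, so invoking Lemma~\ref{2K2} inside $H_{j+1}$ or $H_{j+2}$ in Step~1 is vacuous.

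The paper takes a different route, but it does not close either. It assumes a non-edge in $H_{j+3}$ and pushes inclusions from the $Z$'s in $H_{j+2}$ and $H_j$ through Lemma~\ref{chain-switch}. It then uses the candidate path $(d_2,e_2,d_1,c_1,b_1,a_1)$ to force the chord $e_2a_1$. However, it never derives a contradiction from the assumed non-edge; its final step (``repeating the above construction yields an induced $P_6$'') is unsupported. In the example, with the labelling above, all of the paper's intermediate conclusions hold: $x_{j+3}x_{j+4}$, $x_jx_{j+4}$ and $y_jx_{j+4}$ are edges. Still, $H_{j+3}$ is not complete. The statement needs an extra hypothesis or a weaker conclusion before any proof can succeed.
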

\begin{proof}
We prove the statement for $H_{j+3}=H[V_{j+3}\cup V_{j+4}]$; the argument for $H_{j-1}$ is symmetric.
Assume for a contradiction that $H_{j+3}$ is not complete bipartite. Then there exist vertices
$d_1\in V_{j+3}$ and $e_1\in V_{j+4}$ such that $d_1e_1\notin E(H)$.

Because $H_{j+2}=H[V_{j+2}\cup V_{j+3}]$ is a \emph{proper} chain graph and the inclusion direction is $L\to R$,
there exist $c_1,c_2\in V_{j+2}$ and $d_1,d_2\in V_{j+3}$ forming a $Z$-configuration, and hence, we have 
$c_1d_1,\ c_2d_2,\ c_2d_1\in E(H)$
and
$c_1d_2\notin E(H).$
(Equivalently, $N(c_1) \cap V_{j+3}\subsetneq N(c_2) \cap V_{j+3}$.)

Since $H_{j+1}=H[V_{j+1}\cup V_{j+2}]$ is a proper chain graph and the direction alternates ($R\to L$),
we can choose $b_1,b_2\in V_{j+1}$ witnessing strict inclusion in the opposite direction so that
$b_1c_1,\ b_1c_2,\ b_2c_2\in E(H)$
and
$b_2c_1\notin E(H),$
i.e., $N(b_2) \cap V_{j+2} \subsetneq N(b_1) \cap {V_{j+2}}$.

Similarly, since $H_j=H[V_j\cup V_{j+1}]$ is proper and the direction is again $L\to R$,
there exist $a_1,a_2\in V_j$ such that
$
a_1b_1,\ a_2b_2,\ a_2b_1\in E(H)$
and
$a_1b_2\notin E(H),
$
i.e., $N(a_1) \cap V_{j+1} \subsetneq N(a_2) \cap V_{j+1}$.

\smallskip
\noindent\textbf{Step 1: propagate inclusion to $V_{j+4}$.}
Apply Lemma~\ref{chain-switch} to the $Z$-shape in $H_{j+2}$ (with $a_1=c_1$, $a_2=c_2$, $b_1=d_1$, $b_2=d_2$).
We obtain $N(d_2)\cap V_{j+4}\subseteq N(d_1)\cap V_{j+4}.$

Since every vertex lies on a $5$-cycle in the ring, $d_2$ has a neighbor $e_2\in V_{j+4}$.
The above inclusion then implies $d_1e_2\in E(H)$.

\smallskip
\noindent\textbf{Step 2: propagate inclusion to $V_{j-1}=V_{j+4}$.}
Apply Lemma~\ref{chain-switch} to the $Z$-shape in $H_j$ (with $a_1=a_1$, $a_2=a_2$, $b_1=b_1$, $b_2=b_2$).
We obtain
$N(a_1)\cap V_{j-1}\subseteq N(a_2)\cap V_{j-1}.$

Since every vertex lies on a $5$-cycle, $a_1$ has a neighbor in $V_{j-1}$; choose one and denote it by $e_1\in V_{j-1}(=V_{j+4})$.
Then the above inclusion implies $a_2e_1\in E(H)$.

\smallskip
\noindent\textbf{Step 3: force a chord to avoid an induced $P_6$.}
Consider the vertex sequence
\[
P=(d_2,\ e_2,\ d_1,\ c_1,\ b_1,\ a_1).
\]
Consecutive pairs are edges: $d_2e_2$ by choice of $e_2$, $e_2d_1$ from Step~1, $d_1c_1$ by the $Z$-shape in $H_{j+2}$,
$c_1b_1$ by the $Z$-shape in $H_{j+1}$, and $b_1a_1$ by the $Z$-shape in $H_j$.

We check that, except possibly for $e_2a_1$, there are no chords among these six vertices.
Indeed, by the ring property edges occur only between consecutive bags; hence:
$d_2\in V_{j+3}$ is anticomplete to $V_{j+1}\cup V_j$ so $d_2b_1,d_2a_1\notin E(H)$,
$e_2\in V_{j+4}$ is anticomplete to $V_{j+2}\cup V_{j+1}$ so $e_2c_1,e_2b_1\notin E(H)$,
and $c_1\in V_{j+2}$ is anticomplete to $V_j$ so $c_1a_1\notin E(H)$.
Also $d_2c_1\notin E(H)$ since both lie in nonconsecutive bags, and $V_{j+4}$ is independent so $e_1e_2\notin E(H)$.
Therefore, if $e_2a_1\notin E(H)$, the sequence $P$ induces a $P_6$, contradicting that $H$ is $P_6$-free.
Hence we must have $e_2a_1\in E(H)$.

\smallskip
\noindent\textbf{Step 4: conclude $H_{j+3}$ is complete bipartite.}
We have shown that for an arbitrary neighbor $e_2$ of $d_2$ in $V_{j+4}$, we also get $e_2d_1\in E(H)$ (Step~1) and $e_2a_1\in E(H)$ (Step~3).
In particular, the neighborhood inclusion forced by the chain triple prevents missing edges across $V_{j+3}\cup V_{j+4}$:
if some non-edge existed in $H_{j+3}$, then repeating the above construction yields an induced $P_6$.
Therefore $H_{j+3}$ must be complete bipartite.

By symmetry (shifting indices and reversing directions), the same argument shows $H_{j-1}$ is complete bipartite.
\end{proof}

\begin{corollary}\label{disconnected}
Suppose $H_j$ is not connected. Then $H_{j-1}$ and $H_{j+1}$ are both complete bipartite graphs.
\end{corollary}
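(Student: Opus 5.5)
We prove that $H_{j+1}=H[V_{j+1}\cup V_{j+2}]$ is complete bipartite; the claim for $H_{j-1}$ follows by the symmetric argument with indices reflected ($j+1\leftrightarrow j$, $j+2\leftrightarrow j-1$). We argue directly with induced paths rather than through Lemma~\ref{lem:alternate-chain}, since a disconnected $H_j$ need not be a proper chain graph.

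\textbf{Step 1: two edges in different components.} Because every vertex of $H$ lies on a $5$-cycle of the ring, every vertex of $V_j\cup V_{j+1}$ has a neighbour on the other side. So no component of $H_j$ is a single vertex, and disconnectedness gives two components $K_1,K_2$, each containing an edge. Pick $a_1b_1\in E(K_1)$ and $a_2b_2\in E(K_2)$ with $a_t\in V_j$ and $b_t\in V_{j+1}$. Since the components are distinct, these two edges form an induced $2K_2$.

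\textbf{Step 2: equal neighbourhoods in $V_{j+2}$.} Apply Lemma~\ref{2K2} to this $2K_2$. It gives $N(b_1)\cap V_{j+2}=N(b_2)\cap V_{j+2}$. Since $K_1,K_2$ and the chosen edges were arbitrary, any two vertices of $V_{j+1}$ lying in different components of $H_j$ have the same neighbourhood in $V_{j+2}$. Every vertex of $V_{j+1}$ lies in some edge of its component, so this covers all pairs across components.

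\textbf{Step 3: every edge extends to a complete pair.} Suppose for contradiction that $b\in V_{j+1}$ and $c\in V_{j+2}$ are non-adjacent, and let $b$ lie in component $K_1$. Choose a neighbour $c'\in V_{j+2}$ of $b$ and an edge $a_2b_2$ in $K_2$. By Step 2, $b_2c'\in E(H)$ and $b_2c\notin E(H)$.
\begin{itemize}
\item If $c$ has a neighbour $b'\in V_{j+1}$ in a component other than $K_1$, Step 2 applied to $b$ and $b'$ forces $bc\in E(H)$, which is a contradiction.
\item Otherwise every $V_{j+1}$-neighbour $b'$ of $c$ (one exists, since $c$ lies on a $5$-cycle) lies in $K_1$. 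Then Step 2 applied to $b'$ and $b_2$ gives $b_2c\in E(H)$, again a contradiction.
\end{itemize}
Hence $V_{j+1}$ is complete to $V_{j+2}$, and $H_{j+1}$ is complete bipartite.

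\textbf{Main obstacle.} The decisive step is Step 2: the propagation of equal neighbourhoods across a $2K_2$, which is exactly what Lemma~\ref{2K2} supplies and which we take as given. The remaining care is in Step 3, to ensure that every vertex of $V_{j+2}$ actually has a neighbour in $V_{j+1}$ and that at least two components of $H_j$ contain edges. Both facts follow from the standing assumption that every vertex lies on a $C_5$ traversing the ring.
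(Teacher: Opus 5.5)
Your proof is correct and is essentially the paper's argument: two components of $H_j$ give an induced $2K_2$, Lemma~\ref{2K2} equalises the neighbourhoods in the next set, and since every vertex there has a neighbour back across the link (by the $C_5$ assumption), the link must be complete. The paper runs this for $H_{j-1}$ and invokes symmetry for $H_{j+1}$, you do the reverse, and your case split in Step~3 just makes explicit the transitivity through a second component that the paper leaves implicit when it asserts that all of $V_j$ share one neighbourhood (the auxiliary vertex $c'$ is never used and can be dropped).
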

\begin{proof}
Since $H_j$ is not connected and every vertex lies on a 5-cycle (ensuring no isolated vertices), $H_j$ must contain at least two disjoint edges $a_1b_1$ and $a_2b_2$ (where $a_1, a_2 \in V_j$ and $b_1, b_2 \in V_{j+1}$) such that there are no edges between $\{a_1, b_1\}$ and $\{a_2, b_2\}$. This forms an induced $2K_2$.

By Lemma \ref{2K2}, we must have $N(a_1) \cap V_{j-1} = N(a_2) \cap V_{j-1}$. Since $a_1$ and $a_2$ can be chosen from any two distinct components of $H_j$, it follows that all vertices in $V_j$ must share the same neighborhood in $V_{j-1}$.
Suppose $H_{j-1}$ is not complete. Then there exists $x \in V_{j-1}$ and $y \in V_j$ such that $xy \notin E(H)$. Since all vertices in $V_j$ share the same neighborhood, $x$ is not adjacent to \textit{any} vertex in $V_j$. This contradicts the assumption that every vertex lies on a 5-cycle (as $x$ would be isolated from $V_j$, preventing cycle formation through that link).
Therefore, $H_{j-1}$ must be complete. By symmetry, $H_{j+1}$ must also be complete.
\qed
\end{proof}

\begin{lemma}\label{2K2-structure}
Suppose $H_j$ is connected and contains two independent edges $a_1b_1$ and $a_2b_2$. Then $H_j$ is partitioned into $A_1, A_2, A_3$ where $A_1$ and $A_2$ are connected and contain $a_1b_1$ and $a_2b_2$ (respectively), and $A_1$ and $A_2$ are independent. $A_3$ is complete to both $A_1$ and $A_2$. Furthermore, $N(A_1) \cap V_{j-1} = N(A_2) \cap V_{j-1} \subseteq N(A_3) \cap V_{j-1}$ and $N(A_1) \cap V_{j+2} = N(A_2) \cap V_{j+2} \subseteq N(A_3) \cap V_{j+2}$.

\end{lemma}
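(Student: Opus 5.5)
The plan is to build the partition directly from the $2K_2$ and then use $P_6$-freeness, together with Lemma~\ref{2K2} and Lemma~\ref{chain-switch}, to force the adjacencies. Since $a_1b_1,a_2b_2$ form an induced $2K_2$ in $H_j$, the bipartite graph $H_j$ is not a chain graph. I would define $A_3$ as the set of vertices of $H_j$ that are adjacent both to a vertex of the ``side'' containing $a_1b_1$ and to one containing $a_2b_2$. Concretely, consider $H_j - A_3$, where $A_3$ is the set of vertices of $V_j\cup V_{j+1}$ complete to the opposite-side vertices of every induced $2K_2$. Then set $A_1$ and $A_2$ to be the components of what remains containing $a_1b_1$ and $a_2b_2$. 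A cleaner working definition is this: let $A_3$ be the set of vertices $v$ such that $v$ is adjacent to $b_1$ and $b_2$ (if $v\in V_j$) or to $a_1$ and $a_2$ (if $v\in V_{j+1}$). Grow $A_1,A_2$ as the connected pieces of $H_j-A_3$ containing the two edges. Connectivity of $H_j$ guarantees $A_3\neq\emptyset$.

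The key local step is the following claim: if $x\in V_j$ is adjacent to $b_1$ but not $b_2$, then $x$ behaves like $a_1$, i.e. $xb_2\notin E$ and $x$ has no neighbor adjacent to $a_2$. I would prove this by exhibiting induced $P_6$'s of the form $(x,b_1,\dots)$. To extend the path beyond $H_j$, I would use the $C_5$-neighbors in $V_{j-1}$ and $V_{j+2}$, exactly as in the proof of Lemma~\ref{2K2}. The path $(c_2,b_2,a_2,b',a_1,\ldots)$ type argument shows that any vertex of $V_{j+1}$ adjacent to both $a_1$ and $a_2$ must be adjacent to every vertex of $A_1\cup A_2$ on the $V_j$ side. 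Symmetrically for $V_j$. This yields that $A_1$ is anticomplete to $A_2$ and that $A_3$ is complete to $A_1\cup A_2$. The route is to show that any non-edge between $A_3$ and, say, $A_1$ produces a path $u - w - u' - w'$ inside $A_1$ followed by the $A_3$ vertex and a vertex of $A_2$, giving six vertices with only consecutive adjacencies. Distinct sets $V_{j-1},V_{j+2}$ are used only when $A_1$ is too small to supply the path.

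For the neighborhood statements, apply Lemma~\ref{2K2} to every pair of edges $e\in A_1$, $f\in A_2$, which form induced $2K_2$'s since $A_1$ and $A_2$ are anticomplete. This gives that all vertices of $A_1\cap V_j$ and of $A_2\cap V_j$ share one common neighborhood $N$ in $V_{j-1}$. Hence $N(A_1)\cap V_{j-1}=N(A_2)\cap V_{j-1}=N$, and similarly on the $V_{j+2}$ side. For the inclusion into $N(A_3)$, take $z\in A_3\cap V_j$ and $a\in A_1\cap V_j$. We have $z$ adjacent to $b_1$ and $b_2$, while $a_1$ is adjacent to $b_1$ but not $b_2$, so a Z-shape $(a_1,b_1,z,b_2)$ appears. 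Lemma~\ref{chain-switch}(2) then gives $N(a_1)\cap V_{j-1}\subseteq N(z)\cap V_{j-1}$, which is the required inclusion. The $V_{j+2}$ side is handled the same way with part (1).

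The main obstacle will be the middle step: proving that the remainder after removing $A_3$ splits so that $A_3$ is genuinely complete to $A_1\cup A_2$, and that no vertex lies partially between. I would first try the $P_6$ argument using a shortest path inside $A_1$ from the non-neighbor to a neighbor of the $A_3$ vertex. If that path is too short, I would fall back on a $C_5$ extension into $V_{j-1}$ or $V_{j+2}$ together with Lemma~\ref{2-disjoint-cycle}.
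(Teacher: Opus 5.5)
\textbf{There is a genuine gap: the structural core is not proved, and part of the statement is false.}

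Your treatment of the neighbourhoods is sound, and tidier than the paper's. Applying Lemma~\ref{2K2} to every pair of edges across $A_1,A_2$ gives one common neighbourhood in $V_{j-1}$ (and in $V_{j+2}$). The $Z$-shape $(a_1,b_1,z,b_2)$ with Lemma~\ref{chain-switch}(2) is the right tool for $N(A_1)\cap V_{j-1}\subseteq N(z)\cap V_{j-1}$; the paper cites Lemma~\ref{2K2} at this point, but it is the $Z$-lemma that applies.

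The structural core, however, is left as your ``main obstacle''. You never show that $a_1b_1$ and $a_2b_2$ lie in different components of $H_j-A_3$, nor that $A_3$ is complete to $A_1\cup A_2$, and the ingredients you sketch do not work:
\begin{itemize}
\item Your ``key local step'' is false as stated. Take $x=a_1$: it is adjacent to every vertex of $A_3\cap V_{j+1}$, and these are adjacent to $a_2$. What is true is only that such neighbours lie in $A_3$.
\item In $(c_2,b_2,a_2,b',a_1,\dots)$ the vertex $c_2$ is harmful. By Lemma~\ref{2K2} it is adjacent to every vertex of $(A_1\cup A_2)\cap V_{j+1}$, so the path gets a chord as soon as it re-enters $V_{j+1}$.
\end{itemize}

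The missing ideas are short $P_6$ arguments that stay inside $H_j$.
\begin{itemize}
\item \emph{Separation.} A shortest path in $H_j-A_3$ from $\{a_1,b_1\}$ to $\{a_2,b_2\}$ has length at least $3$. Length $1$ is excluded by the $2K_2$, and length $2$ would put its middle vertex in $A_3$. Adding the two partner endpoints then gives an induced path on at least six vertices. Run in $H_j$ instead of $H_j-A_3$, the same argument is what gives $A_3\neq\emptyset$; connectivity alone does not.
\item \emph{Completeness.} This is what the paper's path $(b_1',a_1,b_1,a_3,b_2,a_2)$ does. Take $z\in A_3\cap V_j$ and a path $b_1=y_0,x_1,y_1,\dots$ inside $A_1$. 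At the first $y_i$ with $zy_i\notin E(H)$, the sequence $(y_i,x_i,y_{i-1},z,b_2,a_2)$ is an induced $P_6$. This holds because $A_1$ is anticomplete to $A_2$ and every other pair lies on one side of $H_j$. So three vertices of $A_1$ always suffice, and no fallback to $V_{j-1},V_{j+2}$ or Lemma~\ref{2-disjoint-cycle} is needed.
\end{itemize}
You must also rule out further components of $H_j-A_3$, which your construction silently leaves unassigned; the paper is equally silent here.

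Second, the $V_{j+2}$ inclusion cannot be obtained ``the same way with part (1)''. That needs a vertex of $A_3\cap V_{j+1}$, i.e.\ a common neighbour of $a_1$ and $a_2$, and there may be none. Symmetrically, your $V_{j-1}$ argument needs $A_3\cap V_j\neq\emptyset$. When this fails, the inclusion is actually false.

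Take the $5$-ring $V_{j-1}=\{e\}$, $V_j=\{a_1,a_2,a_3\}$, $V_{j+1}=\{b_1,b_2\}$, $V_{j+2}=\{c\}$, $V_{j+3}=\{d\}$, with edges $ea_1,ea_2,ea_3,a_1b_1,a_2b_2,a_3b_1,a_3b_2,b_1c,b_2c,cd,de$.
\begin{itemize}
\item It is connected, and every vertex lies on one of the induced $5$-cycles $(e,a_1,b_1,c,d)$, $(e,a_2,b_2,c,d)$, $(e,a_3,b_1,c,d)$.
\item It is $(P_6,C_3)$-free. There are $11$ edges, so the two vertices omitted by an induced $P_6$ would have to meet exactly $6$ edges. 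That forces them to be $\{e,a_3\}$, $\{a_3,c\}$ or $\{b_1,b_2\}$, and each choice leaves a vertex of degree at least $3$.
\item $H_j$ is connected and contains the $2K_2$ $a_1b_1,a_2b_2$. Since $a_3$ sees both $b_1$ and $b_2$, any admissible partition has $A_3=\{a_3\}$.
\end{itemize}
Then $N(A_1)\cap V_{j+2}=\{c\}\not\subseteq\emptyset=N(A_3)\cap V_{j+2}$. So only the inclusion on a side where $A_3$ has vertices is provable. The paper's ``Analogously'' overlooks exactly the same point, and the claim has to be weakened accordingly.
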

\begin{proof}
Let $A_1$ and $A_2$ be two maximal connected subgraphs in $H_j$ containing $a_1b_1$ and $a_2b_2$ respectively such that there are no edges between $A_1$ and  $A_2$. By Lemma \ref{2K2}, $N(A_1) \cap V_{j-1} = N(A_2) \cap V_{j-1}$.
Since $H_j$ is connected, there must be a vertex $a_3 \in V_j$ (without loss of generality) such that $b_1a_3, b_2a_3 \in E(H_j)$.
Suppose $e_3a_3 \in E(H_{j-1})$. Consider the path $P=(e_3, a_3, b_2, a_2, e, a_1)$ where $ea_1 \in E(H_{j-1})$. Since $P$ is not an induced $P_6$, one of $e_3a_1, e_3a_2, e_1a_3$ must be in $E(H_{j-1})$.
Applying Lemma \ref{2K2} to the edges $a_3b_1, a_3b_2$ (which share $a_3$) vs $a_2b_2$, and knowing $a_2b_1 \notin E$, we conclude that $N(a_2) \cap V_{j-1} \subseteq N(a_3) \cap V_{j-1}$.
Therefore, $N(A_1) \cap V_{j-1} = N(A_2) \cap V_{j-1} \subseteq N(A_3) \cap V_{j-1}$. Analogously, we conclude $N(A_1) \cap V_{j+2} = N(A_2) \cap V_{j+2} \subseteq N(A_3) \cap V_{j+2}$.

Now consider a neighbor of $a_1$ in $A_1$, say $b'_1$. The sequence $(b'_1, a_1, b_1, a_3, b_2, a_2)$ forms an induced $P_6$ unless $b'_1a_3 \in E(H)$. Hence, $A_1$ and $A_3$ are complete. Similarly, $A_2$ and $A_3$ are complete.\qed
\end{proof}

\begin{lemma}\label{OCT-chain-chain-chain}
Let $H_j, H_{j+1}, H_{j+2}$ be chain graphs. Suppose $X$ is a set of vertices from $V_{j} \cup V_{j+1} \cup V_{j+2}$ such that $H \setminus X$ is bipartite, i.e. $X$ is an OCT. Then $X$ is a vertex cover for $H[V_{j} \cup V_{j+1}]$ or $X$ is a vertex cover for $H[V_{j+1} \cup V_{j+2}]$.

\end{lemma}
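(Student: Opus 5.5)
We argue by contradiction. Suppose $X\subseteq V_j\cup V_{j+1}\cup V_{j+2}$ is an OCT of $H$ that is a vertex cover of neither $H_j$ nor $H_{j+1}$. Then there is an edge $ab$ with $a\in V_j\setminus X$, $b\in V_{j+1}\setminus X$, and an edge $b'c$ with $b'\in V_{j+1}\setminus X$, $c\in V_{j+2}\setminus X$. The goal is to close these two surviving edges into a $5$-cycle that avoids $X$. Since $X$ meets no vertex of $V_{j+3}\cup V_{j+4}$, every vertex there survives. Throughout we use the ring standing assumption that every vertex lies on a $5$-cycle $V_1\to V_2\to\cdots\to V_5$.

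\textbf{Step 1: reduce to a path $a\,b\,c$.} We want a single surviving $b$ adjacent to a surviving $a$ and a surviving $c$. If $b=b'$ we are done.

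Otherwise, use the chain structure of $H_{j+1}$. The neighborhoods $N(b)\cap V_{j+2}$ and $N(b')\cap V_{j+2}$ are comparable. If $N(b)\cap V_{j+2}\supseteq N(b')\cap V_{j+2}$, then $bc\in E(H)$ and we use $a\,b\,c$. If instead $N(b)\cap V_{j+2}\subsetneq N(b')\cap V_{j+2}$, look at $a$ and the chain graph $H_j$. If $ab'\in E(H)$, we use $a\,b'\,c$. If $ab'\notin E(H)$, then $a,b,b'$ together with a suitable neighbor of $b'$ in $V_j$ give a Z-shape or a $2K_2$ in $H_j$. Lemma~\ref{chain-switch} or Lemma~\ref{2K2} then transfers neighborhoods toward $V_{j+2}$; for instance, the Z-shape case gives $N(b')\cap V_{j+2}\subseteq N(b)\cap V_{j+2}$, which contradicts the strict inclusion.

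If this case analysis gets stuck, the fallback is to keep both edges. We then build an odd closed walk $a\,b\,\ldots\,b'\,c\,d\,e\,a$ through $V_{j+3},V_{j+4}$, using a path between $b$ and $b'$ inside the surviving part. Parity is the issue: any odd closed walk avoiding $X$ contains an odd cycle avoiding $X$, contradicting that $H\setminus X$ is bipartite.

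\textbf{Step 2: close the cycle.} Given surviving $a\in V_j$, $b\in V_{j+1}$, $c\in V_{j+2}$ with $ab,bc\in E(H)$, we need $d\in V_{j+3}$ and $e\in V_{j+4}$ with $cd,de,ea\in E(H)$. These vertices are automatically outside $X$, so $a\,b\,c\,d\,e$ is a $5$-cycle in $H\setminus X$, which is the desired contradiction.

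To find them, take $d$ to be any neighbor of $c$ in $V_{j+3}$ and $e$ any neighbor of $a$ in $V_{j+4}$; both exist because every vertex lies on a $5$-cycle. If $de\in E(H)$ we are done. If not, extend $a$'s $5$-cycle to $e'\in V_{j+3}$ adjacent to $e$. Consider the walk $c\,b\,a\,e\,e'$ together with $d$. The sequence $(d,c,b,a,e,e')$ has all consecutive pairs adjacent, and its only possible chord is $de'$, which lies inside the independent set $V_{j+3}$ and so is absent. The sequence therefore induces a $P_6$ unless $de\in E(H)$; this is exactly the argument pattern of Lemma~\ref{lem:alternate-chain}, Step 3. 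Hence $de\in E(H)$.

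\textbf{Main obstacle.} Step 2 is routine. The real difficulty is Step 1: producing a common surviving middle vertex in $V_{j+1}$ when the two surviving edges use different vertices of $V_{j+1}$ and the chain directions of $H_j$ and $H_{j+1}$ are opposite. This is where the chain hypothesis on $H_j,H_{j+1}$ (and possibly $H_{j+2}$ via Lemma~\ref{chain-switch}) must be used carefully, by case analysis on the inclusion directions. I would first try the maximal-neighborhood vertex in $V_{j+1}\setminus X$ with respect to each chain order, and show it can be taken common to both.
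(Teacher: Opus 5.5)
Your proof is correct up to one small slip noted below. It shares the paper's skeleton: produce a path $a\,b\,c$ with $a\in V_j$, $b\in V_{j+1}$, $c\in V_{j+2}$ all outside $X$, then close it into an odd cycle through $V_{j+3}\cup V_{j+4}$, which $X$ cannot touch. You differ in how the common middle vertex is found.

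The paper orders $V_j,V_{j+1},V_{j+2}$ by neighborhood inclusion. It uses an exchange argument to assume $X$ consists of prefixes/suffixes of these orders, reads off $b_{t-1}$ as a surviving common neighbor, and then simply asserts a conflict with the odd cycle constraint. Your local argument is cleaner on three counts:
\begin{enumerate}
\item The paper's exchange step replaces $X$ by a different set, although the lemma is about an arbitrary OCT $X$.
\item The paper's inclusion $N(b_{t-2})\cap V_j\subseteq N(b_{t-1})\cap V_j$ comes from an ordering defined via $V_{j+2}$-neighborhoods. That tacitly needs the transfer of Lemma~\ref{chain-switch}, which you invoke explicitly.
\item You actually close the cycle with a $P_6$ argument.
\end{enumerate}
Your route also shows that only the chain property of $H_j$ is needed, to exclude a $2K_2$. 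The chain hypotheses on $H_{j+1}$ and $H_{j+2}$, and the inclusion directions, are never used.

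Two remarks on the write-up.

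\textbf{Step 1 is already complete.} Drop the fallback and the ``main obstacle'' paragraph. If neither $bc$ nor $ab'$ is an edge, take any neighbor $a'\in V_j$ of $b'$; it exists because $b'$ lies on a $5$-cycle. Then $a'b\in E(H)$, since otherwise $ab,a'b'$ would be a $2K_2$ in the chain graph $H_j$ (so you need not appeal to Lemma~\ref{2K2}). Now $(a,b,a',b')$ is a Z-shape, and Lemma~\ref{chain-switch} gives $c\in N(b')\cap V_{j+2}\subseteq N(b)\cap V_{j+2}$, a contradiction.

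\textbf{The chord check in Step 2 is off.} For $(d,c,b,a,e,e')$, the pair $de'$ is not a candidate at all, since both lie in $V_{j+3}$. But $ce'$ is a candidate, because $V_{j+2}$ and $V_{j+3}$ are consecutive. This is harmless. If $ce'\in E(H)$, then $(a,b,c,e',e)$ is already a $5$-cycle in $H\setminus X$. Otherwise $d\neq e'$ (as $cd\in E(H)$), and the sequence is an induced $P_6$ unless $de\in E(H)$.
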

\begin{proof}
By Lemma \ref{chain-switch}, let $a_1, \dots, a_{l_1}$ be the vertices in $V_j$ ordered such that $N(a_{i+1}) \cap V_{j+1} \subseteq N(a_i) \cap V_{j+1}$ ($L \to R$).
Let $b_1, \dots, b_{l_2}$ be vertices in $V_{j+1}$ ordered such that $N(b_{i}) \cap V_{j+2} \subseteq N(b_{i+1}) \cap V_{j+2}$ ($R \to L$).
Finally, let $c_1, \dots, c_{l_3}$ be vertices in $V_{j+2}$ ordered such that $N(c_{i+1}) \cap V_{j+1} \subseteq N(c_i) \cap V_{j+1}$ ($L \to R$).

If $a_i \notin X$ while $a_{i+1} \in X$, then due to the inclusion property, we can swap $a_{i+1}$ out of $X$ and put $a_i$ into $X$ without losing coverage. Thus, we may assume $X$ contains prefixes/suffixes: $a_1, \dots, a_p \in X$; $b_t, \dots, b_{l_2} \in X$; and $c_1, \dots, c_r \in X$.
This means $c_{r+1}, b_{t-1}, b_{t-2}, a_{p+1} \notin X$.

If there is no edge $b_{t'}c_{r'}$ in $H_{j+1}$ with $t' < t$ and $r < r'$, then $X \cap H_{j+1}$ is a vertex cover for $H_{j+1}$, proving the lemma.
Thus, assume $b_{t-1}c_{r+1} \in E(H_{j+1})$.
Now consider $H_j$. If $X$ is not a vertex cover for $H_j$, there must be an edge $a_{p+1}b_{k}$ surviving, where $k < t$. 
Due to inclusion in $V_{j+1}$, $N(b_{t-2}) \cap V_j \subseteq N(b_{t-1}) \cap V_j$. Thus if $a_{p+1}$ connects to any $b_k (k<t)$, it must connect to $b_{t-1}$.
Therefore, $a_{p+1}b_{t-1} \in E(H)$.
Since $b_{t-1}c_{r+1} \in E(H)$ and $a_{p+1}b_{t-1} \in E(H)$, we have a surviving path of length 2. This creates a conflict with the global odd cycle constraint, contradicting $X$ being an OCT. \qed 
\end{proof}

\begin{lemma}\label{lem:OCT-2K2}
Let $H_{j} = H[V_j \cup V_{j+1}]$ be connected and contain an induced $2K_2$. Suppose $X$ is a minimum OCT for $G$ containing vertices from $V_{j-1}, V_{j}, V_{j+1}$, and $V_{j+2}$. Then $X$ must contain a vertex cover for $H_{j-1}$, $H_j$, or $H_{j+1}$.
\end{lemma}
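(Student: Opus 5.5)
Suppose $X$ covers none of $H_{j-1}$, $H_j$, $H_{j+1}$. The plan is to exhibit an odd cycle of $H\setminus X$, contradicting that $X$ is an OCT.

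We first fix the structure of $H_j$ using Lemma~\ref{2K2-structure}. Since $H_j$ is connected and contains the induced $2K_2$ formed by $a_1b_1,a_2b_2$, we can write $H_j=A_1\cup A_2\cup A_3$. Here $A_1,A_2$ are anticomplete to each other, and $A_3$ is complete to both $A_1$ and $A_2$. Moreover, the neighborhoods of $A_1$ and $A_2$ in $V_{j-1}$ coincide and are contained in that of $A_3$, and likewise in $V_{j+2}$. By Lemma~\ref{2K2}, every vertex of $A_1\cup A_2$ on the $V_j$ side has the same neighborhood $N^-$ in $V_{j-1}$. Similarly, every vertex of $A_1\cup A_2$ on the $V_{j+1}$ side has the same neighborhood $N^+$ in $V_{j+2}$. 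This uniformity is what makes the argument possible: the interface of $H_j$ with the rest of the ring is governed by just two sets, $N^-\subseteq V_{j-1}$ and $N^+\subseteq V_{j+2}$, together with the larger neighborhoods of $A_3$.

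Next we extract uncovered edges. Since $X$ does not cover $H_{j-1}$, there is an edge $ea$ with $e\in V_{j-1}\setminus X$ and $a\in V_j\setminus X$. Since $X$ does not cover $H_{j+1}$, there is an edge $bc$ with $b\in V_{j+1}\setminus X$ and $c\in V_{j+2}\setminus X$. Since $X$ does not cover $H_j$, there is an edge $a'b'$ with $a'\in V_j\setminus X$ and $b'\in V_{j+1}\setminus X$. Using the complete joins between $A_3$ and $A_1,A_2$, and the inclusion $N^\pm\subseteq N(A_3)$, we then try to reroute these three edges into a single path $e-a''-b''-c$ in $H\setminus X$ running through $V_{j-1},V_j,V_{j+1},V_{j+2}$. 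For example, if $a'\in A_3$, then $a'$ is adjacent to all of $A_1\cup A_2$ on the $V_{j+1}$ side. Also, $e$ adjacent to some vertex of $A_1\cup A_2$ means $e\in N^-$, so $e$ is adjacent to all of those vertices. A short case analysis on which of $A_1,A_2,A_3$ contain $a,a',b',b$ should give such a path.

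To close the path into an odd cycle we need a vertex of $V_{j+3}\setminus X$ adjacent to both $c$ and $e$. Since $X$ meets only $V_{j-1},\dots,V_{j+2}$, the set $V_{j+3}$ is untouched by $X$. Every vertex lies on a $C_5$, so $c$ has a neighbor $d\in V_{j+3}$ and $e$ has a neighbor $d'\in V_{j+3}$. If $d=d'$ is possible we obtain a $5$-cycle avoiding $X$. Otherwise we would argue with $P_6$-freeness that $c$ and $e$ have a common neighbor. Consider the sequence $(d',e,a'',b'',c,d)$: it is an induced $P_6$ unless $d'c$ or $de$ is an edge, and either edge yields a common neighbor in $V_{j+3}$ and hence an odd cycle.

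The main obstacle I expect is the rerouting step: combining the three uncovered edges into one induced path $e,a'',b'',c$ avoiding $X$. The difficulty is the case where the surviving edges lie in different parts $A_1$ and $A_2$. There I would use the fact that $N^-$ and $N^+$ are shared by both parts, so $e$ and $c$ see the whole relevant sides of both $A_1$ and $A_2$. This lets me swap $a$ and $b$ onto a common surviving edge. If no such edge survives, minimality of $X$ (an exchange argument similar to that in Lemma~\ref{OCT-chain-chain-chain}) should let me replace $X$ by a vertex cover of $H_j$ of no larger size.
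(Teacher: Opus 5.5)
\paragraph{Verdict: there is a genuine gap.} The rerouting step, which carries the whole content of the lemma, is asserted but never proved, and the fallback does not repair it.

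\paragraph{The rerouting step.} Your closing step is sound. Given a path $e,a'',b'',c$ in $H\setminus X$ through $V_{j-1},V_j,V_{j+1},V_{j+2}$, the sequence $(d',e,a'',b'',c,d)$ is an induced $P_6$ unless $e$ and $c$ have a common neighbour in $V_{j+3}$. Under your reading $X\cap V_{j+3}=\emptyset$, this gives a $C_5$ avoiding $X$.

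The problem is the step before it: you only assert (``a short case analysis should give such a path'') that the surviving edges $ea$, $a'b'$, $bc$ can be combined into such a path. The facts you invoke do not give this. They fix the outer neighbourhoods of $A_1\cup A_2$ (exactly $N^-$ and $N^+$). For $A_3$, however, Lemma~\ref{2K2-structure} only gives $N^\pm\subseteq N(A_3)$ for the union. It says nothing about edges between $A_3\cap V_j$ and $A_3\cap V_{j+1}$.

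Concretely, nothing you cite excludes the following configuration:
\begin{itemize}
\item $N^-\cup N^+\subseteq X$;
\item the surviving outer edges are $ea$ and $bc$, with $a\in A_3\cap V_j$ and $b\in A_3\cap V_{j+1}$;
\item the outer endpoints satisfy $e\notin N^-$ and $c\notin N^+$.
\end{itemize}
Then no vertex of $A_1\cup A_2$ can serve as $a''$ or $b''$, since all its neighbours in $V_{j-1}$ (resp.\ $V_{j+2}$) lie in $X$. So swapping inside $A_1\cup A_2$ is useless, and any such path must use an $A_3$--$A_3$ edge, about which you have no information.

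In this particular configuration one can force $ab\in E(H)$. Take an edge $a_1b_1$ of $A_1$; then $(e,a,b_1,a_1,b,c)$ is an induced $P_6$ otherwise, even though $a_1,b_1$ may lie in $X$. But this kind of argument, applying $P_6$-freeness of $H$ with witnesses taken from $X$, is exactly what your proposal lacks. It would have to be supplied in every case, for example also when the surviving $H_j$-edge meets an $A_3$-vertex whose outer neighbours all lie in $X$.

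\paragraph{The fallback.} The exchange in Lemma~\ref{OCT-chain-chain-chain} works because chain graphs have nested neighbourhoods. Here $H_j$ contains an induced $2K_2$, so it is not a chain graph and that swap does not carry over.

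Even a successful exchange would only show that \emph{some} minimum OCT contains a vertex cover of $H_j$. The lemma asserts this for the given $X$. The weaker form would suffice for Theorem~\ref{tm:ring-cut}, but it is not the statement.

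Note also that your main line never uses the minimality of $X$. A complete rerouting argument would therefore prove the stronger claim that every OCT inside $V_{j-1}\cup\dots\cup V_{j+2}$ covers one of the three links. You should either prove that or make explicit where minimality enters.
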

\begin{proof}
Since $G$ is structured as an odd ring of bipartite graphs, any valid OCT $X$ must act as a vertex cover for at least one bipartite subgraph $H_i$ to successfully break all macroscopic odd cycles.

By Lemma \ref{lem:gen-2K2}, the neighborhoods of the $2K_2$ vertices in $V_{j-1}$ are identical. Let this shared neighborhood be $Y \subseteq V_{j-1}$. 
According to Lemma \ref{2K2-structure}, because $H_j$ is connected and contains this $2K_2$, the vertices of $V_j$ can be partitioned into sets $A_1, A_2, A_3$ such that $Y$ is completely connected to the entire active component of $V_j$ (specifically $A_1 \cup A_2 \cup A_3$). Thus, the subgraph induced by $Y \cup V_j$ contains a complete bipartite graph (a biclique) between $Y$ and $V_j$.

Similarly, let $Z \subseteq V_{j+2}$ be the shared neighborhood of the $2K_2$ vertices in $V_{j+1}$. By symmetry, $Z$ is completely connected to the active component of $V_{j+1}$, forming a biclique between $V_{j+1}$ and $Z$.

To be a valid OCT, $X$ must cover all edges in these dense bicliques. A fundamental property of a minimum vertex cover on a complete bipartite graph $(U, W)$ is that it must completely contain either $U$ or $W$; any partial selection from both sides is strictly sub-optimal or fails to cover the biclique.

We evaluate the structural requirements for $X$:
\begin{enumerate}
    \item \textbf{Covering the Left Biclique $(Y, V_j)$:} 
    $X$ must either fully contain $Y$ or fully contain $V_j$. 
    If $X$ fully contains $Y$, it covers all active edges in $H_{j-1}$. Thus, $X$ contains a vertex cover for $H_{j-1}$.
    
    \item \textbf{Covering the Right Biclique $(V_{j+1}, Z)$:} 
    $X$ must either fully contain $Z$ or fully contain $V_{j+1}$.
    If $X$ fully contains $Z$, it covers all active edges in $H_{j+1}$. Thus, $X$ contains a vertex cover for $H_{j+1}$.
    
    \item \textbf{Covering the Center $H_j$:}
    Suppose $X$ contains neither a vertex cover for $H_{j-1}$ (meaning $Y \not\subseteq X$) nor a vertex cover for $H_{j+1}$ (meaning $Z \not\subseteq X$). 
    To cover the left biclique, $X$ is forced to fully contain $V_j$. To cover the right biclique, $X$ is forced to fully contain $V_{j+1}$. 
    If $X$ contains all active vertices of $V_j$, it covers all edges of $H_j$. Therefore, $X$ serves as a full vertex cover for $H_j$.
\end{enumerate}

Consequently, the dense biclique structures flanking the $2K_2$ force any optimal OCT $X$ to make an ``all-or-nothing'' choice, guaranteeing that $X$ completely covers at least one of the subgraphs $H_{j-1}$, $H_j$, or $H_{j+1}$. \qed

\end{proof}

\begin{lemma}\label{OCT-independent}
Let $H_{j}$ have more than one connected component. Suppose $X$ is a minimum OCT containing vertices from $V_{j-1}, V_{j}, V_{j+1}, V_{j+2}$. Then $X$ must contain a vertex cover in $H_{j-1}$ or in $H_j$ or in $H_{j+1}$.
\end{lemma}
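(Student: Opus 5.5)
The plan is to use Corollary~\ref{disconnected} to turn the flanking graphs into bicliques, and then argue exactly as in the proof of Lemma~\ref{lem:OCT-2K2}.

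\textbf{Step 1: the flanking graphs are bicliques.} Since $H_j$ has more than one connected component and every vertex lies on a $5$-cycle, there are no isolated vertices in $H_j$. So we can pick edges $a_1b_1$ and $a_2b_2$ in two different components; they form an induced $2K_2$. Corollary~\ref{disconnected} then says that $H_{j-1}$ and $H_{j+1}$ are complete bipartite graphs. Lemma~\ref{2K2}, applied to any two components, gives a further fact: all vertices of $V_j$ have the same neighbourhood in $V_{j-1}$, and all vertices of $V_{j+1}$ have the same neighbourhood in $V_{j+2}$.

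\textbf{Step 2: a biclique forces a whole side.} Let $X$ be a minimum OCT. Suppose $X$ contains no vertex cover of $H_{j-1}$. Because $H_{j-1}$ is complete bipartite, this means there are vertices $e\in V_{j-1}\setminus X$ and $a\in V_j\setminus X$, and then $ea$ is an uncovered edge. Similarly, if $X$ contains no vertex cover of $H_{j+1}$, there are uncovered vertices $b\in V_{j+1}\setminus X$ and $c\in V_{j+2}\setminus X$. It remains to show that, in this situation, $X$ covers every edge of $H_j$.

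\textbf{Step 3: closing an odd cycle.} Suppose, for contradiction, that some edge $a'b'$ of $H_j$ survives in $H\setminus X$. Completeness gives that $e$ is adjacent to $a'$ and $b'$ is adjacent to $c$. We want to close a $5$-cycle
\[
(e,\ a',\ b',\ c,\ d,\ e)
\]
with a surviving vertex $d\in V_{j+3}=V_{j-2}$ that is adjacent to both $c$ and $e$. That would be an odd cycle avoiding $X$, a contradiction.

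\textbf{Main obstacle: finding $d$.} I expect this step to be the hard one. It need not hold for an arbitrary $X$; the intended reading, as in Lemma~\ref{lem:OCT-2K2}, is that $X$ must break all ring-wise odd cycles, and every closed walk around the ring is odd. To find $d$ I would use $P_6$-freeness. Take a surviving path $c,d,\dots$ through $V_{j+3}$ coming from some odd cycle of the original ring. Then apply Lemma~\ref{chain-switch} or Lemma~\ref{2K2} on $H_{j+2}$ and $H_{j+3}$ to transfer adjacency, using the shared neighbourhoods from Step~1 to replace the endpoints by $e$ and $c$.

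If that transfer fails, the fallback is to argue directly as in Lemma~\ref{lem:OCT-2K2}. We then use the hypothesis that $X$ meets $V_{j-1}$ through $V_{j+2}$, together with minimality, to assume that $X$ avoids $V_{j+3}$. Then any $5$-cycle through $a'b'$ must be hit inside the four bags. The biclique structure forces the hit to be a full side of $H_{j-1}$, a full side of $H_{j+1}$, or a cover of $H_j$.
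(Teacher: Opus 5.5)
\textbf{There is a genuine gap, at the step you flag as the main obstacle.} Your outline is the right shape: bicliques from Corollary~\ref{disconnected}, a contrapositive, then closing a $5$-cycle through $V_{j+3}$. But finding the surviving vertex $d$ is the entire content of the lemma, and neither of your suggested routes supplies it. Lemma~\ref{chain-switch} and Lemma~\ref{2K2} need a $Z$ or $2K_2$ configuration in $H_{j+2}$ or $H_{j+3}$, which you do not have. The shared neighbourhoods from Step~1 live in $V_{j-1}$ and $V_{j+2}$, so they say nothing about $V_{j+3}$. Your fallback (``the biclique structure forces the hit to be a full side \dots\ or a cover of $H_j$'') simply restates the conclusion. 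Also, minimality cannot give you $X\cap V_{j+3}=\emptyset$. That has to come from the hypothesis, read as $X\subseteq V_{j-1}\cup V_j\cup V_{j+1}\cup V_{j+2}$; this is how the paper's proof uses it, since it ends with ``$X$ is inside $H_j$''.

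\textbf{The missing idea is a direct $P_6$ argument.} Every vertex lies on a $5$-cycle, and in a $5$-ring such a cycle passes through every bag exactly once. Hence $c$ and $e$ have neighbours $d_c,d_e\in V_{j+3}$. If $d_c\neq d_e$, consider $(d_c,c,b',a',e,d_e)$. Consecutive pairs are edges, using completeness of $H_{j-1}$ and $H_{j+1}$. Every nonconsecutive pair lies in the same bag or in non-consecutive bags, except $d_ce$ and $cd_e$. So $P_6$-freeness forces one of these two edges, and either way $c$ and $e$ have a common neighbour $d\in V_{j+3}$. Since $d\notin X$, the cycle $(e,a',b',c,d,e)$ is a $5$-cycle in $H\setminus X$, a contradiction.

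With this step your proof is complete, and it is more careful than the paper's. The paper asserts that if $X$ meets $V_{j-1}$ then $X\supseteq V_{j-1}$. Completeness of $H_{j-1}$ does not give this; it only says that a cover of $H_{j-1}$ contains $V_{j-1}$ or $V_j$. The paper also leaves its final ``hence a vertex cover of $H_j$'' unargued, and that is exactly your $5$-cycle step.
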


\begin{proof}
By Corollary \ref{disconnected}, since $H_{j-1}$ is a complete bipartite graph, if $X$ contains vertices in $V_{j-1}$, it must contain all of $V_{j-1}$, making $X$ a vertex cover for $H_{j-1}$. Similarly, if $X$ contains vertices from $V_{j+2}$, then $X=V_{j+2}$, making it a vertex cover for $H_{j+1}$.
Thus, we may assume $X \cap V_{j+2} = X \cap V_{j-1} = \emptyset$. Thus, $X$ is inside $H_{j}$, and hence, it must be a vertex cover in $H_j$.
\end{proof}

\begin{theorem}[Bottleneck Cut]\label{tm:ring-cut}
Let $H$ be a $(P_6, C_3)$-free graph consisting of a ring of 5 vertex sets $V_1, \dots, V_5$ where every vertex lies on a 5-cycle.
The size of the Minimum Odd Cycle Transversal of $H$ is equal to the minimum size of a Vertex Cover among the five bipartite subgraphs induced by consecutive sets $(V_i, V_{i+1})$.
\[ OCT(H) = \min_{i=1}^5 \{ \tau(H[V_i \cup V_{i+1}]) \} \]
where $\tau(G)$ denotes the vertex cover number.
\end{theorem}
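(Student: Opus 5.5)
The plan is to prove the two inequalities separately.

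\textbf{Upper bound.} Fix $i$ and let $X$ be a minimum vertex cover of $H_i=H[V_i\cup V_{i+1}]$. Then $H\setminus X$ has no edge between $V_i$ and $V_{i+1}$. Since every edge of $H$ joins two consecutive sets, the surviving edges all lie in the path of bipartite graphs $H_{i+1},H_{i+2},\dots,H_{i-1}$. Colour $V_{i+1},V_{i+2},\dots,V_{i+5}=V_i$ alternately with colours $1,2,1,2,1$. Every remaining edge joins $V_t$ to $V_{t+1}$ with $t\neq i$, and these two sets receive different colours. So $H\setminus X$ is bipartite and $OCT(H)\le \tau(H_i)$.

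\textbf{Lower bound.} Let $X$ be a minimum OCT. I want an index $i$ with $|X|\ge\tau(H_i)$, that is, $X\cap(V_i\cup V_{i+1})$ covers $H_i$. Suppose, for contradiction, that for every $i$ some edge $u_iw_i$ of $H_i$ survives in $H\setminus X$. The goal is to stitch these five surviving edges into a surviving closed walk that winds once around the ring. Such a walk has odd length and would contradict bipartiteness of $H\setminus X$. The stitching requires surviving connections inside each $H_i$ from the endpoint in $V_{i}$ reached from $H_{i-1}$ to the endpoint used in $H_i$.

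This is where the structural lemmas enter, and I split into cases according to the type of each $H_j$.
\begin{enumerate}
\item If some $H_j$ is disconnected, apply Lemma~\ref{OCT-independent}, using Corollary~\ref{disconnected}.
\item If some $H_j$ is connected and contains an induced $2K_2$, apply Lemma~\ref{lem:OCT-2K2}, using Lemmas~\ref{2K2} and~\ref{2K2-structure}.
\item Otherwise every $H_j$ is a connected chain graph. Since there are five of them and only two directions, three consecutive ones either alternate or two consecutive ones share a direction.
\begin{itemize}
\item If three consecutive ones alternate, Lemma~\ref{lem:alternate-chain} makes the flanking graphs complete bipartite, and Lemma~\ref{OCT-chain-chain-chain} handles the rest.
\item In the remaining cases I will use the nested orderings directly. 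As in the proof of Lemma~\ref{OCT-chain-chain-chain}, exchange arguments let me assume $X\cap V_t$ is a prefix or suffix of the inclusion order of $V_t$. A surviving edge in $H_t$ then implies that the ``extreme'' surviving vertex of $V_t$ (largest neighbourhood toward $V_{t+1}$) is adjacent to the extreme surviving vertex of $V_{t+1}$. Chaining these extreme vertices gives a surviving $C_5$.
\end{itemize}
\end{enumerate}

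\textbf{Expected main obstacle.} The main difficulty is making the exchange argument consistent around the whole ring. A vertex of $V_t$ has two neighbourhood orders, one toward $V_{t-1}$ and one toward $V_{t+1}$, and these may not be compatible. Lemmas~\ref{chain-switch} and~\ref{2K2} are exactly what make the orders compatible: a Z-configuration forces the corresponding inclusion on the other side. I would first prove, from Lemma~\ref{chain-switch}, that within each $V_t$ the two orders agree up to ties. Once that holds, a single extreme surviving vertex per set is adjacent to the extreme surviving vertices in both neighbouring sets, which closes the odd cycle. If compatibility fails in some configuration, I would fall back to using Lemma~\ref{2-disjoint-cycle} to rule that configuration out.
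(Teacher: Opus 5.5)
Your upper bound is correct. Your plan for the case where every $H_t$ is a chain graph also works once you prove the compatibility you describe. Write $N_s(u)=N(u)\cap V_s$. Lemma~\ref{chain-switch} shows that $N_{t+1}(u)\subsetneq N_{t+1}(v)$ forces $N_{t-1}(u)\subseteq N_{t-1}(v)$, and symmetrically. So each $V_t$ is totally preordered by simultaneous inclusion of both neighbourhoods, and the maximal survivors $m_1,\dots,m_5$ span a $C_5$ in $H-X$. Neither the exchange on $X$ nor the ``alternating'' subcase is needed.

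The genuine gap is in cases~1 and~2. There you hand the work to Lemmas~\ref{OCT-independent} and~\ref{lem:OCT-2K2}, and in the alternating subcase to Lemma~\ref{OCT-chain-chain-chain}. None of these gives what you need for an arbitrary minimum OCT $X$:
\begin{itemize}
\item Lemma~\ref{OCT-chain-chain-chain} assumes $X\subseteq V_j\cup V_{j+1}\cup V_{j+2}$, and its proof ends in an unexplained ``conflict''.
\item The proof of Lemma~\ref{OCT-independent} asserts that a minimum OCT meeting $V_{j-1}$ contains all of $V_{j-1}$. This does not follow from completeness of $H_{j-1}$, since vertices of $V_{j-1}$ may differ on $V_{j-2}$.
\item The proof of Lemma~\ref{lem:OCT-2K2} rests on ``an OCT must cover every edge of the bicliques $(Y,V_j)$ and $(V_{j+1},Z)$''. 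This is false: $V_{j+3}$ is an OCT covering neither.
\end{itemize}
Your own closing device also breaks exactly here. An induced $2K_2$ makes the neighbourhoods non-nested, so no survivor dominates the others. The paper's proof of the theorem only cites these lemmas, so it has the same hole.

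\textbf{The missing idea.} There is a one-sided compatibility that needs no case split: for $u,v\in V_t$, if $N_{t+1}(u)\not\subseteq N_{t+1}(v)$ then $N_{t-1}(v)\subseteq N_{t-1}(u)$. To see it, pick $w\in N_{t+1}(u)\setminus N_{t+1}(v)$.
\begin{itemize}
\item If $N_{t+1}(v)\subsetneq N_{t+1}(u)$, take $b\in N_{t+1}(v)$. Then $v,b,u,w$ is a $Z$ and Lemma~\ref{chain-switch}(2) applies.
\item Otherwise take $b\in N_{t+1}(v)\setminus N_{t+1}(u)$. Then $vb,uw$ is an induced $2K_2$, and Lemma~\ref{2K2} gives equality.
\end{itemize}
Lemma~\ref{2K2} is true, but its final step in the paper needs further induced-$P_6$ checks. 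It also rests on Lemma~\ref{2-disjoint-cycle}, hence on connectedness of $H$, and connectedness is really needed. Take two disjoint complete blow-ups of $C_5$ with part sizes $(1,2,2,2,2)$ and $(2,2,1,2,2)$, and let $V_i$ be the union of their $i$-th parts. Then $OCT=2<3=\min_i\tau(H_i)$.

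\textbf{Stitching.} Now build the odd cycle greedily.
\begin{enumerate}
\item Start from a surviving edge $x_1x_2$ of $H_1$.
\item Given a surviving path $x_1\cdots x_t$ with $x_i\in V_i\setminus X$, take a surviving edge $uw$ of $H_t$ with $u\in V_t$.
\item If $x_tw\in E$, append $w$. Otherwise compatibility with $v=x_t$ gives $x_{t-1}u\in E$, so replace $x_t$ by $u$ and append $w$.
\end{enumerate}
After $H_5$ you hold $x_1,\dots,x_6$ with $x_6\in V_1\setminus X$. If $x_6=x_1$, this is a $C_5$ in $H-X$. Otherwise the only possible chords are $x_1x_5$ and $x_2x_6$, and each would close a $C_5$ in $H-X$. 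So $x_1\cdots x_6$ is an induced $P_6$. Either way you reach a contradiction, so every OCT contains a vertex cover of some $H_i$, which gives the lower bound.
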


\begin{proof}
Let $S$ be a minimum vertex cover of $H[V_i \cup V_{i+1}]$. Removing $S$ eliminates all edges between $V_i$ and $V_{i+1}$. Since $H$ forms a ring, any odd cycle in $H$ must traverse the ring sequence $V_1 \to \dots \to V_5 \to V_1$, removing this link breaks all such cycles. Thus $S$ is an OCT.
Conversely, according to Lemmas \ref{OCT-chain-chain-chain},\ref{lem:OCT-2K2}, and \ref{OCT-independent}, any minimum OCT $X$ must be a vertex cover for at least one $H_j$.
\end{proof}



\section{Cycle decomposition and algorithm for OCT  in $(P_6,C_3)$-free graphs}

Based on the structural decomposition derived previously, the vertex set $V$ is partitioned into a finite collection of independent sets: 
\[ \mathcal{S} = \{A_1, B_1, \dots, E_1, A_2, B_2, \dots, E_2, A_0, \dots, E_0\} \]
where each set in $\mathcal{S}$ represents a specific module or layer relative to the cycle $C$.

We assume there are some 5-cycles in $G_{-5}=G \setminus V(C)$. Otherwise, we check each subset of $V(C)$ to be removed so that the remaining graph becomes bipartite. We also notice that every odd cycle in $(P_6,C_3)$-free graph has length $5$, thus when it comes to the optimal OCT, we assume every vertex in $G_{-5}$ lies on a 5-cycle; otherwise, that vertex can be removed and it won't be part of an optimal OCT solution.

   

\subsection{Algorithm Description}\label{sec:alg}

\noindent\textbf{Input:} A connected $(P_6,C_3)$-free graph $G$.\\
\textbf{Output:} A minimum odd cycle transversal $S\subseteq V(G)$.

\medskip
\noindent\textbf{High-level idea.}
By $P_6$-freeness, every induced odd cycle in $G$ has length $5$ (and $C_3$-freeness excludes triangles). 
We therefore search for a constant-sized \emph{ring witness} inside the coarse partition, refine it into a chordless $5$-ring subgraph $H$, and solve OCT on $H$ using the Bottleneck Cut Theorem (Theorem~\ref{tm:ring-cut}). We then recurse on the remaining graph.

\medskip
\noindent\textbf{Step 0: find a base $C_5$ and build the coarse partition.}
If $G$ is bipartite, return $\emptyset$.
Otherwise find an induced $5$-cycle $C=(a,b,c,d,e)$ (e.g. by BFS from each vertex until an odd cycle is found, then chord-minimize).
Construct the sets $X_1,X_2,X_0$ for $X\in\{A,B,C,D,E\}$ as in Section~3, and let
\[
\mathcal S=\{A_1,B_1,C_1,D_1,E_1,\ A_2,B_2,C_2,D_2,E_2,\ A_0,B_0,C_0,D_0,E_0\}.
\]
Build the coarse graph $G_{\mathcal S}$ whose vertices are the nonempty sets in $\mathcal S$ and where $UW\in E(G_{\mathcal S})$ iff there exists at least one edge between $U$ and $W$ in $G$.

\medskip
\noindent\textbf{Step 1: enumerate coarse $5$-cycles and extract  $5$-rings.}
Enumerate all $5$-cycles $W=(\Pi_1,\Pi_2,\Pi_3,\Pi_4,\Pi_5,\Pi_1)$ in the constant-size graph $G_{\mathcal S}$.
For each such $W$, we construct a $5$-ring instance as follows.

\smallskip
\noindent\emph{Refinement subroutine \textsc{ExtractRing}$(W)$.}
Initialize $\Delta_i:=\Pi_i$ for $i=1,\dots,5$.
If there exists an edge in $G$ between some nonconsecutive pair $\Delta_i$ and $\Delta_{i+2}$ (indices mod $5$), refine the involved layer-2 set(s) using Lemma~3.6 (and its rotated versions) to split $X_2$ into $X_2^1\cup X_2^2$ so that nonconsecutive adjacencies become uniform.
Replace $\Delta_i$ by the appropriate refined part(s) and discard any refined parts that do not participate in any $5$-cycle through the five partitions.
Since each layer-2 set splits into at most two parts, and $|V(G_{\mathcal S})|\le 15$, this refinement produces only constantly many candidate rings per coarse $5$-cycle.

At the end of refinement, we obtain a \emph{chordless} $5$-ring
\[
W_r=(\Delta_1,\Delta_2,\Delta_3,\Delta_4,\Delta_5,\Delta_1),
\]
meaning that every edge of $G[\Delta_1\cup\cdots\cup\Delta_5]$ is between consecutive sets.

\smallskip
\noindent\emph{Ring-core cleanup.}
Let $H$ be the induced subgraph $G[\Delta_1\cup\cdots\cup\Delta_5]$.
We restrict $H$ to the vertices that lie on a (necessarily induced) $5$-cycle within $H$:
compute the set $U\subseteq V(H)$ of vertices that appear in some $C_5$ of $H$,
and replace $H$ by $H[U]$ and $\Delta_i$ by $\Delta_i\cap U$.
(Vertices outside $U$ cannot lie on any odd cycle of $H$, so they are irrelevant for OCT on $H$.)

Return the cleaned ring instance $H$ together with its sets $(\Delta_1,\dots,\Delta_5)$.

\medskip
\noindent\textbf{Step 2 (Branch on the selected edge).}
Let $H=G[\Delta_1\cup\cdots\cup\Delta_5]$ be the induced subgraph of the cleaned ring.
We branch on the selected edge $e=(\Delta_1,\Delta_2)$ as follows.

\begin{itemize}
\item
\emph{Cut $e$:} We commit to breaking the ring at $e$.
\begin{itemize}
\item If $H[\Delta_1\cup\Delta_2]$ is a biclique, then every minimum vertex cover of $H[\Delta_1\cup\Delta_2]$ is either $\Delta_1$ or $\Delta_2$.
We branch into two cases:
$$
\text{Result}_1=\Delta_1\cup \textsc{Solve}(G\setminus \Delta_1,M),
\qquad
\text{Result}_2=\Delta_2\cup \textsc{Solve}(G\setminus \Delta_2,M).
$$
\item Otherwise, compute a minimum vertex cover $S$ of the bipartite graph $H[\Delta_1\cup\Delta_2]$ and recurse:
$
\text{Result}_1=S\cup \textsc{Solve}(G\setminus S,M).
$
\end{itemize}

\item 
\emph{Skip $e$:} We do not break the ring at $e$. We mark $e$ as indestructible and recurse:
\[
\text{Result}_3=\textsc{Solve}(G,\,M\cup\{e\}).
\]
\end{itemize}
The algorithm returns a minimum-size solution among the branches explored at this call.



\medskip
\noindent\textbf{Key Lemmas used in the correctness proof.}

\begin{lemma}[Local MVC optimality]\label{lem:mvc-local}
Let $W_r=(\Delta_1,\dots,\Delta_5)$ be a (cleaned) $5$-ring instance and let $H=G[\Delta_1\cup\cdots\cup\Delta_5]$.
If $S$ is a minimum OCT of $H$ and $S\cap(\Delta_1\cup\Delta_2)\neq\emptyset$, then there exists a minimum OCT $S'$ of $H$ such that
$S'\cap(\Delta_1\cup\Delta_2)$ is a minimum vertex cover of the bipartite graph $H[\Delta_1\cup\Delta_2]$.
\end{lemma}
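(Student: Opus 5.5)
The plan is to lean on the Bottleneck Cut Theorem (Theorem~\ref{tm:ring-cut}) together with the ring structure of $H$. Since $H$ is a cleaned $5$-ring in which every vertex lies on a $5$-cycle, Theorem~\ref{tm:ring-cut} gives
\[
|S| = OCT(H) = \min_i \tau(H_i),
\]
with $H_i = H[\Delta_i\cup\Delta_{i+1}]$. By the converse direction of that theorem (Lemmas~\ref{OCT-chain-chain-chain}, \ref{lem:OCT-2K2}, \ref{OCT-independent}), the minimum OCT $S$ contains a vertex cover of some $H_j$. Minimality then forces $|S\cap V(H_j)|=\tau(H_j)$ and $S\subseteq V(H_j)$: $S\cap V(H_j)$ is already an OCT, and it has size at least $\tau(H_j)\ge |S|$.

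The argument then splits on which index $j$ arises.

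\textbf{Case $j=1$.} Here $S\cap(\Delta_1\cup\Delta_2)$ is a vertex cover of $H_1$ of size $\tau(H_1)$, so $S'=S$ works.

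\textbf{Case $j\neq 1$.} The hypothesis $S\cap(\Delta_1\cup\Delta_2)\neq\emptyset$ together with $S\subseteq \Delta_j\cup\Delta_{j+1}$ forces $j\in\{5,2\}$, since the bag pair for $H_j$ must then meet $\Delta_1\cup\Delta_2$. Suppose $j=2$, so $S$ is a minimum vertex cover of $H_2=H[\Delta_2\cup\Delta_3]$ that meets $\Delta_2$. I would show $\tau(H_1)\le\tau(H_2)$ by exchange. Take $S\cap\Delta_2$ and replace $S\cap\Delta_3$ by a set covering the remaining edges of $H_1$, namely those with both endpoints outside $S$, using $\Delta_1$-vertices. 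Lemma~\ref{chain-switch} and Lemma~\ref{2K2} give neighbourhood inclusions between consecutive layers, which should allow matching each $\Delta_3$-vertex of $S$ to a distinct $\Delta_1$-vertex covering the corresponding residual $H_1$ edges. (Both $\Delta_1$ and $\Delta_3$ are neighbourhoods of $\Delta_2$ in the ring.) If this injection works, $\tau(H_1)\le|S|$, and Theorem~\ref{tm:ring-cut} makes any minimum vertex cover of $H_1$ a minimum OCT $S'$ with the required property. The case $j=5$ is symmetric, by reflecting the ring.

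I expect the exchange step to be the main obstacle. In general ring graphs a minimum OCT cutting $H_2$ need not yield a cover of $H_1$ of equal size, so the argument must really use $P_6$-freeness. If the direct injection fails, the fallback is to read the lemma more weakly: take $S'$ to be a minimum vertex cover of $H_1$ whenever $\tau(H_1)=OCT(H)$. This is exactly what the branching in Step~2 needs, since the ``skip $e$'' branch covers the remaining cases. The cost is arguing that when $\tau(H_1)>OCT(H)$, every minimum OCT lies inside a different $H_j$ and can be chosen disjoint from $\Delta_1\cup\Delta_2$. I would attempt this via Lemma~\ref{lem:alternate-chain} and Corollary~\ref{disconnected}, which force bicliques next to the chosen cut and so allow shifting $S$ away from $\Delta_2$.
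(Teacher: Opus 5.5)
\textbf{Your case $j=1$ is correct, but the exchange step for $j\in\{2,5\}$ cannot be carried out: the lemma as stated is false}, already for $(P_6,C_3)$-free cleaned rings. Here is a counterexample.
\begin{itemize}
\item Give every bag three vertices: $\Delta_1=\{x_1,x_2,x_3\}$, $\Delta_2=\{p_1,p_2,p_3\}$, $\Delta_3=\{q_1,q_2,q_3\}$, $\Delta_4=\{z_1,z_2,z_3\}$, $\Delta_5=\{y_1,y_2,y_3\}$.
\item Make $H_1,H_3,H_4,H_5$ complete bipartite. Let $H_2$ be the double star with edges $p_1q_1,p_1q_2,p_1q_3,p_2q_1,p_3q_1$.
\end{itemize}
This is a chordless $5$-ring. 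It is triangle-free, and every vertex lies on an induced $C_5$.

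It is also $P_6$-free. Two vertices of the same bag with the same neighbourhood cannot both lie on an induced path with at least four vertices. The quotient by these twin classes has $7$ vertices, $9$ edges and maximum degree $3$. So every $6$-vertex induced subgraph of the quotient has at least $6$ edges and is not a path.

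Now compute the optimum. The set $\{p_1,q_1\}$ destroys every edge of $H_2$, so it is an OCT. The cycles $(x_1,p_1,q_2,z_1,y_1)$ and $(x_2,p_2,q_1,z_2,y_2)$ are vertex-disjoint $5$-cycles. Hence $OCT(H)=2$, and $S=\{p_1,q_1\}$ is a minimum OCT meeting $\Delta_2$. But $\tau(H_1)=\tau(K_{3,3})=3>2$, so no minimum OCT can restrict to a minimum vertex cover of $H_1$. The inequality $\tau(H_1)\le\tau(H_2)$ you hoped to extract from Lemmas~\ref{chain-switch} and~\ref{2K2} simply fails here, $P_6$-freeness notwithstanding.

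\textbf{Your fallback does not repair this.} It proves a different dichotomy. When $\tau(H_1)>OCT(H)$, it would only exhibit some other minimum OCT avoiding $\Delta_1\cup\Delta_2$, which says nothing about the required $S'$ under the given hypothesis. Even that claim fails in the example. Deleting any two vertices of $\Delta_3\cup\Delta_4\cup\Delta_5$ leaves some $q\in\Delta_3$ (which has a neighbour in $\Delta_2$), some $z\in\Delta_4$, some $y\in\Delta_5$, and all of $\Delta_1$, hence a $5$-cycle.

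This fallback is essentially the paper's own proof (``if $i\neq 1$ \dots\ we may choose such an optimal solution that does not use $\Delta_1\cup\Delta_2$ at all''), and it has the same defect. A minimum cover of $H_2$ or $H_5$ typically uses vertices of $\Delta_2$ or $\Delta_1$. In any case, the hypothesis concerns a given $S$, not a conveniently chosen one.

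\textbf{What is true}, and what your case-$j=1$ argument proves, is the version with the hypothesis ``$S$ contains a vertex cover of $H_1$.'' Then $S\cap(\Delta_1\cup\Delta_2)$ is already an OCT. Minimality forces $S\subseteq\Delta_1\cup\Delta_2$ and $|S|=OCT(H)\le\tau(H_1)\le|S|$, so $S$ itself is a minimum vertex cover of $H_1$. This corrected statement is what the cut branch of Step~2 actually needs.
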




\begin{lemma}[Ring extraction]\label{lem:ring-extraction}
If $G_{\mathcal S}$ contains a $5$-cycle, then after at most a constant number of refinements
we obtain a \emph{cleaned} $5$-ring $W=(\Delta_1,\dots,\Delta_5)$ such that
$H=G[\Delta_1\cup\cdots\cup\Delta_5]$ contains an induced $C_5$ and every vertex of $H$ lies on an induced $C_5$ in $H$.
\end{lemma}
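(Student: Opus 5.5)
We prove Lemma~\ref{lem:ring-extraction} by tracing a single coarse $5$-cycle through \textsc{ExtractRing} and checking three invariants: (i) the refined sets are still independent; (ii) every edge of the refined ring runs between consecutive sets; (iii) some induced $C_5$ survives, and the cleanup is harmless.

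\textbf{Choosing a witness.} Let $W=(\Pi_1,\dots,\Pi_5)$ be a $5$-cycle of $G_{\mathcal S}$. For each coarse edge $\Pi_i\Pi_{i+1}$ pick a realizing edge of $G$. The path segments joining consecutive realizing edges inside each $\Pi_i$ need not close up directly. Instead we use a shortest odd closed walk through the five sets and chord-minimize it. Since $G$ is $(P_6,C_3)$-free, every induced odd cycle has length $5$. So this produces an induced $C_5$, $Q=(q_1,\dots,q_5)$, with $q_i\in\Pi_i$ for all $i$ after relabeling. This is the step I expect to be delicate. A chord of a closed walk could skip a set, so that the resulting $C_5$ uses only four of the partitions, or reuses one. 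I would handle it as follows. Each set in $\mathcal S$ is independent (Lemma~\ref{AdjacencyX_1} for $X_1$; for the layers $X_2,X_0$, use that two adjacent vertices with a common cycle neighbor would form a triangle). Hence an induced $C_5$ meeting only four sets would need two consecutive vertices in one set, which is impossible. If instead the cycle wraps through a different coarse $5$-cycle, we take that coarse cycle as $W$; enumerating all coarse $5$-cycles makes this harmless.

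\textbf{Refinement removes chords and keeps $Q$.} Suppose $G$ has an edge between $\Pi_i$ and $\Pi_{i+2}$. By Lemmas~\ref{AdjacencyX_1} and \ref{B0!~A1}, the adjacency between two layer-$1$ sets is all or nothing, and layer-$0$ sets see no layer-$1$ vertices. So partial adjacencies can arise only with a layer-$2$ set. Lemma~\ref{X1-X2-dichotomy}, Corollary~\ref{A1~C2_2} and Lemma~\ref{A1-D2-C2} split each such $X_2$ into $X_2^1\cup X_2^2$, with uniform (complete or anticomplete) adjacency to the relevant $X_1$. Lemma~\ref{A0-A2-B2} gives the analogous uniformity for layer-$0$ vertices through their common neighbors. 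After splitting, each pair of nonconsecutive refined sets is either complete or anticomplete. A complete nonconsecutive pair $\Delta_i,\Delta_{i+2}$ together with the path through $\Delta_{i+1}$ would create a triangle whenever $\Delta_{i+1}$ has a common neighbor, or else it contradicts the fact that $Q$ is induced. So along the branch that keeps the part containing $q_i$, every nonconsecutive pair is anticomplete. Each $X_2$ splits at most once and $|\mathcal S|\le 15$, so the number of refinements is constant.

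\textbf{Cleanup.} The refined ring $\Delta_1,\dots,\Delta_5$ is chordless and contains $Q$. In a chordless $5$-ring with independent sets, every $5$-cycle uses exactly one vertex from each set and is automatically induced. Deleting the vertices that lie on no $C_5$ therefore keeps $Q$, and every remaining vertex lies on an induced $C_5$ in $H[U]$. It remains to check that deleting vertices cannot destroy a $C_5$ through a surviving vertex. This holds because any $C_5$ through a surviving vertex consists entirely of vertices that lie on a $C_5$, so all its vertices are in $U$. Thus the cleanup is idempotent and the lemma follows.
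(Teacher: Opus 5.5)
\textbf{The gap is in ``Choosing a witness'', and that step carries the whole lemma.} Once you have an induced $C_5$ $Q$ with $q_i\in\Pi_i$, the singletons $\Delta_i=\{q_i\}$ already satisfy the statement. So everything depends on producing $Q$, and your argument does not produce it.

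A $5$-cycle in $G_{\mathcal S}$ only says that each pair $\Pi_i,\Pi_{i+1}$ carries \emph{some} edge of $G$. The realizing edges at $\Pi_i$ may have different endpoints, and since $\Pi_i$ is independent there is no path ``inside $\Pi_i$'' joining them. Nothing you say guarantees an odd closed walk in $G[\Pi_1\cup\cdots\cup\Pi_5]$ at all, let alone one meeting each set once. For some coarse cycles a one-vertex-per-set $C_5$ is impossible outright: in a coarse cycle $(A_2,D_1,B_2,C_2,A_1)$, the nonconsecutive sets $D_1$ and $A_1$ are complete to each other by Lemma~\ref{AdjacencyX_1}.

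Your fallback does not rescue this. The claim that an induced $C_5$ meeting only four of the sets must have two consecutive vertices in one set is false. Pigeonhole gives two vertices in a common set, but they can sit at distance $2$ on the cycle; independence of the sets only forbids consecutive repeats. This is the typical situation in the paper's own proof. The ring $(A_1,D_2^2,C_2^1,D_2^1,C_2^2)$ lives in the three coarse sets $A_1,D_2,C_2$. The ring $(B_2^2,A_2^1,B_2^1,B_1,D_1)$ uses $B_2$ twice. The cycle $C''=(b_0,b_2^2,a_2^1,b_2^1,a_2^2)$ in case $W_7$ uses only $B_0,B_2,A_2$. The shadow of such a $C_5$ in $G_{\mathcal S}$ is a closed $5$-walk (e.g.\ around the triangle $A_1D_2C_2$), not a $5$-cycle. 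So there may be no coarse cycle to ``take as $W$'', and your scheme of one refined part of $\Pi_i$ per ring position cannot express these rings at all.

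What is missing is what the paper's proof supplies: a case analysis over the possible coarse $5$-cycles ($W_1,\dots,W_8$) with three ingredients. First, $P_6$-freeness is used to \emph{force} the edges that close a ring, e.g.\ $B_2^2$ complete to $A_2^1$ in case $W_4$, or $D_1$ complete to $B_2^2$ in case $W_7$. Second, two refined parts of one coarse set may occupy different ring positions. Third, cycles through $X_0$ are reduced to earlier cases by re-centering.

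A secondary issue is your uniformity claim after splitting. The dichotomy of Lemma~\ref{X1-X2-dichotomy} needs a nonempty $Y_1$ nonconsecutive to $X_1$. For consecutive $X_2,Y_2$ the paper only shows that the part of $X_2$ with neighbors in $X_1$ is complete to $Y_2$; the rest is unconstrained. Lemma~\ref{A0-A2-B2} only equates neighborhoods of vertices sharing a common neighbor, which is a componentwise statement, not a two-way split. So neither ``each $X_2$ splits at most once'' nor ``every nonconsecutive refined pair is complete or anticomplete'' follows. This part would be dispensable if you had $Q$, since you could shrink the sets. Your cleanup paragraph is correct.
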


\begin{lemma}[Safe branching]\label{lem:safe-branching}
Algorithm~1 is complete under its preference rule: for every instance $G$, at least one root-to-leaf branch explored by the algorithm
is consistent with an optimal OCT of $G$.
\end{lemma}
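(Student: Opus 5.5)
We argue by induction on the recursion of \textsc{Solve}, measured first by $|V(G)|$ and then by the number of consistent-but-unmarked ring edges, i.e. by $5\cdot(\text{number of coarse rings})-|M|$. The invariant we carry along a branch is: there is an optimal OCT $S^\ast$ of the current instance $G$ such that $S^\ast$ contains no vertex incident to an edge of any marked pair in $M$ in the relevant sense (``indestructible'' edges are not cut by $S^\ast$). We also require that the vertices removed so far, together with $S^\ast$, form an optimal OCT of the original graph. At the root $M=\emptyset$ and the invariant is trivial. At a leaf, $G$ is bipartite or every ring edge is marked, and the invariant then says the accumulated set is optimal.

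For the inductive step, fix the current call and the cleaned ring $W_r=(\Delta_1,\dots,\Delta_5)$ produced by Lemma~\ref{lem:ring-extraction}, with $H=G[\Delta_1\cup\cdots\cup\Delta_5]$ and $e=(\Delta_1,\Delta_2)$. Let $S^\ast$ be the optimal OCT promised by the invariant. Since $H$ contains an induced $C_5$ and every vertex of $H$ lies on one, $S^\ast\cap V(H)$ is an OCT of $H$.

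\textbf{Case 1: $S^\ast\cap(\Delta_1\cup\Delta_2)=\emptyset$.} Then $S^\ast$ does not cut $e$, so $S^\ast$ is consistent with the skip branch $M\cup\{e\}$ and induction applies.

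\textbf{Case 2: $S^\ast\cap(\Delta_1\cup\Delta_2)\neq\emptyset$.} Here we apply Lemma~\ref{lem:mvc-local} to replace $S^\ast\cap V(H)$ by an OCT $S'_H$ of $H$ of no larger size whose trace on $\Delta_1\cup\Delta_2$ is a minimum vertex cover $T$ of $H[\Delta_1\cup\Delta_2]$. We then need a global exchange: $S'=(S^\ast\setminus V(H))\cup S'_H$ must still be an OCT of $G$ of the same size. For this we use that every odd cycle of $G$ is a $5$-cycle which, in the coarse partition, either lives in $H$ or passes through the module structure given by Lemmas~\ref{AdjacencyX_1}--\ref{A0-A2-B2}. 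The homogeneity in those lemmas (identical neighbourhoods across $X_1$, and the splits $X_2^1,X_2^2$ used in refinement) lets us argue that vertices of one $\Delta_i$ are interchangeable with respect to cycles leaving $H$.

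Finally, we reconcile $T$ with the specific cover the algorithm computes. If $H[\Delta_1\cup\Delta_2]$ is a biclique, every minimum vertex cover is $\Delta_1$ or $\Delta_2$, and both are branched on. Otherwise the algorithm picks one minimum vertex cover $S$, and we must show that $S$ can replace $T$. We plan to do this with the chain and $2K_2$ structure: by Lemmas~\ref{chain-switch}, \ref{2K2} and \ref{2K2-structure}, vertices of $\Delta_1$ (and of $\Delta_2$) are nested in their outside neighbourhoods consistently with their order inside $H_1$. Hence the prefix/suffix swapping argument of Lemma~\ref{OCT-chain-chain-chain} converts any minimum cover into the canonical one without creating new odd cycles. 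Once this is done, the cut branch on $G\setminus S$ inherits the invariant with a smaller vertex set.

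The main obstacle is the global exchange in Case~2. Lemma~\ref{lem:mvc-local} is local to $H$, while odd cycles of $G$ may use vertices of $\Delta_1\cup\Delta_2$ together with vertices outside $H$. The plan is to prove a claim that any two minimum vertex covers of $H[\Delta_1\cup\Delta_2]$ differ by vertices whose neighbourhoods outside $H$ are comparable in the direction that makes the swap safe. We would try to obtain this first from the nested-neighbourhood conclusions of Lemma~\ref{chain-switch} applied to $Z$-shapes crossing into the adjacent coarse sets. If that fails, we would strengthen the algorithm by branching over all canonical (prefix/suffix) minimum covers, of which there are polynomially many, which preserves completeness directly.
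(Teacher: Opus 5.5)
The step you call ``the main obstacle'' is the entire content of the lemma, and the proposal leaves it unproved. Lemma~\ref{lem:mvc-local} speaks only about $H$. Everything therefore depends on showing that $(S^\ast\setminus V(H))\cup S'_H$, and then the particular cover $S$ the algorithm computes, still meets every odd cycle of $G$ that leaves $H$.

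Your justification is that the homogeneity lemmas make the vertices of each $\Delta_i$ interchangeable with respect to such cycles. Those lemmas do not give this:
\begin{itemize}
\item Lemma~\ref{A0-A2-B2} equalizes neighbourhoods only for vertices that already share a neighbour.
\item Lemma~\ref{X1-X2-dichotomy} shows that a coarse set such as $A_2$ is not homogeneous. \textsc{ExtractRing} refines only along adjacencies that create chords inside the current ring, not along adjacencies to sets outside it.
\item Two minimum covers of $H[\Delta_1\cup\Delta_2]$ typically differ by trading a vertex of $\Delta_1$ for one of $\Delta_2$ (e.g.\ across a component of a disconnected $A_1$--$A_2$ link). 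Such vertices generally lie on different rings outside $H$, so no inclusion between their outside neighbourhoods is available.
\end{itemize}
Your fallback, branching over all canonical covers, changes Algorithm~1, so it proves nothing about the algorithm in the statement. Also, the prefix/suffix normal form exists only when the link is a chain graph; a link containing a $2K_2$ (e.g.\ a matching) can have exponentially many minimum covers.

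The skeleton has further problems before that point:
\begin{itemize}
\item Lemma~\ref{lem:mvc-local} requires a \emph{minimum} OCT of $H$. But $S^\ast\cap V(H)$ need not be one, since a global optimum may spend extra vertices in $H$ that also break rings outside $H$.
\item You split on whether $S^\ast$ meets $\Delta_1\cup\Delta_2$, not on whether it covers $H[\Delta_1\cup\Delta_2]$. Take an optimum that breaks $H$ at another link but uses a few vertices of $\Delta_1\cup\Delta_2$ for other rings. Your argument pushes it into the cut branch, where the exchange may cost extra vertices. With your invariant as stated, it cannot go to the skip branch either, because every vertex of a cleaned ring is incident to an edge of each of its two links.
\item The statement asserts completeness \emph{under the preference rule}, and your argument never uses that rule.
\end{itemize}

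The paper's proof targets exactly the interaction you defer. It takes two rings $W_1,W_2$ sharing a coarse set $\Delta_1$, with links $e_{12}=(\Delta_1,\Delta_2)$ and $e_{13}=(\Delta_1,\Delta_3)$, and does a case analysis on which sets these are. The cases are $\Delta_1=A_1$ with $\Delta_2,\Delta_3$ among $C_1,D_1,A_2,C_2,D_2$, and $\Delta_1=A_2$ with $\Delta_2,\Delta_3$ among $A_1,C_1,D_1,B_2,E_2,A_0$. Using Lemmas~\ref{AdjacencyX_1}--\ref{A0-A2-B2}, it shows that in each case the minimum cover chosen on one link has one of three properties:
\begin{itemize}
\item it must contain an entire side complete to its partner, so that, with biclique links processed first by the preference rule, deleting that side breaks both rings; or
\item it is disjoint from the relevant part of the other link's cover; or
\item it agrees with that cover because the inclusion orders on the shared set are consistent.
\end{itemize}
To close your gap you need a concrete compatibility statement of this kind for covers on links incident to a common coarse set. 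A general interchangeability claim will not supply it.
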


\begin{theorem}\label{thm:algo-P6}
Algorithm~1 runs in polynomial time and outputs a minimum odd cycle transversal of a $(P_6,C_3)$-free graph $G$.
\end{theorem}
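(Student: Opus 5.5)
The plan is to prove Theorem~\ref{thm:algo-P6} in two parts, running time and correctness, following the three key lemmas stated just before it.

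\textbf{Running time.} I would bound the recursion by a potential argument.
\begin{itemize}
\item Step~0 (finding an induced $C_5$, building $\mathcal S$ and the coarse graph $G_{\mathcal S}$) is polynomial. $G_{\mathcal S}$ has at most $15$ vertices, so it has $O(1)$ coarse $5$-cycles.
\item Each call to \textsc{ExtractRing} produces $O(1)$ candidate rings, since each layer-2 set splits into at most two parts via Lemma~\ref{X1-X2-dichotomy}.
\item Computing the set $U$ of vertices on a $C_5$ takes $O(n^5)$ time, and a bipartite minimum vertex cover is polynomial by K\H{o}nig's theorem and matching.
\item For the branching, every call either deletes a nonempty set of vertices (the Cut branches) or adds an edge of the constant-size coarse graph to $M$ (the Skip branch). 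The Skip depth is therefore bounded by the number of coarse edges, a constant.
\item The only non-unary branching is the biclique case, which has two children. I would argue that after choosing $\Delta_1$ or $\Delta_2$ the subproblems can be memoized on the pair (remaining vertex set up to the coarse partition, $M$). Alternatively, the comparison between the two children is local: by Theorem~\ref{tm:ring-cut} the ring cost is fixed, so only one child needs to be explored further.
\end{itemize}
Together these give a polynomial bound of the form $n^{O(1)}$ times a constant. I expect making this collapse of the binary branching rigorous to be a delicate point.

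\textbf{Correctness.} I would induct on $|V(G)|+(\text{number of unmarked coarse edges})$. The claim is that the returned set is an OCT and has minimum size.

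\emph{Feasibility.} This is by construction. When no coarse $5$-cycle remains, every odd cycle is a $C_5$ (by $P_6$- and $C_3$-freeness) whose vertices lie in distinct classes of $\mathcal S$, so it appears as a coarse $5$-cycle. Lemma~\ref{lem:ring-extraction} then guarantees that the base case returns a bipartite remainder, or else the algorithm handles $V(C)$ by brute force over its $2^5$ subsets.

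\emph{Optimality.} Fix an optimal OCT $S^*$. By Lemma~\ref{lem:safe-branching}, some branch is consistent with $S^*$. Along that branch:
\begin{itemize}
\item If $S^*$ meets $\Delta_1\cup\Delta_2$, then Lemma~\ref{lem:mvc-local}, combined with Theorem~\ref{tm:ring-cut} applied to the cleaned ring $H$, lets me replace $S^*\cap(\Delta_1\cup\Delta_2)$ by a minimum vertex cover $S$ of $H[\Delta_1\cup\Delta_2]$ without increasing the cost. In the biclique case $S$ is exactly $\Delta_1$ or $\Delta_2$, matching the two branches.
\item If $S^*$ avoids $\Delta_1\cup\Delta_2$, the Skip branch is consistent.
\end{itemize}
The induction hypothesis then applies to $G\setminus S$ or to $(G,M\cup\{e\})$.

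\textbf{Main obstacle.} The hardest step is the exchange argument. Replacing $S^*$'s portion inside the ring by the vertex cover $S$ must not hurt the odd cycles outside $H$ that pass through $\Delta_1\cup\Delta_2$. My first approach is to use the module-like uniformity of adjacencies from Lemmas~\ref{AdjacencyX_1}, \ref{A1-C2}, \ref{A1-D2-C2} and~\ref{A0-A2-B2}. These show that vertices within a refined class have the same neighbourhoods outside the ring, so any outside odd cycle through a vertex of $S^*$ can be rerouted through a vertex of $S$, or else is already hit elsewhere.
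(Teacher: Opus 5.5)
\textbf{Running time.} Your potential argument bounds the \emph{depth} of the recursion, not the number of calls. Every call branches. Option~(B) already has two children (Cut via an MVC, and Skip), and Option~(A) has three, so the biclique case is not the only non-unary one. A constant bound on Skip-chains only bounds the subtree between two consecutive deletions. A root-to-leaf path can contain up to $n$ deletion levels, since each cut is only guaranteed to remove one vertex. With constant branching at every node, the only bound you have is $c^{O(n)}$, and nothing in your argument reduces the number of deletion levels to a constant.

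Neither proposed collapse is justified. Memoizing on the remaining vertex set needs a polynomial bound on the number of distinct remaining sets, and you do not have one once non-biclique cuts delete arbitrary MVCs. Also, Step~0 rebuilds the base $C_5$ and the partition after each deletion, so $M$ and the coarse edges are not even stable across calls. The fact that the ring cost $\tau$ is fixed says nothing about which of $\Delta_1$, $\Delta_2$ (or Cut versus Skip) is globally better, because they meet other rings differently; that is exactly why the algorithm branches. The paper's own bound (depth at most $n$ times $O(n^{2.5})$ per call) has the same blind spot: it counts one call per level and overlooks that Skip deletes nothing. So there is nothing in the paper to borrow here.

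\textbf{Correctness.} Three steps fail.

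(i) An induced $C_5$ need not have its vertices in distinct classes of $\mathcal S$. The classes are independent sets, but nonconsecutive cycle vertices may share one. Examples are $(b_0,b^2_2,a^1_2,b^1_2,a^2_2)$ in Case~C and $(A_1,D^2_2,C^1_2,D^1_2,C^2_2)$ in $W_6$ of the proof of Lemma~\ref{lem:ring-extraction}. Such a cycle projects to a closed walk of $G_{\mathcal S}$, not a $5$-cycle. Moreover, the vertices of $C$ lie in no class of $\mathcal S$. That lemma only goes from a coarse $5$-cycle to a ring, so it says nothing about the base case.

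(ii) Lemma~\ref{lem:mvc-local} concerns \emph{minimum} OCTs of $H$, and $S^*\cap V(H)$ need not be one. Your split ``meets / avoids $\Delta_1\cup\Delta_2$'' is also the wrong one. Suppose $S^*$ uses a few vertices of $\Delta_1\cup\Delta_2$ only to hit odd cycles outside $H$, and breaks $H$ at another link. Then replacing those vertices by an MVC of the link can increase the size, and the consistent branch is Skip. The split must be on whether $S^*$ contains a vertex cover of the link.

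(iii) The exchange you identify as the main obstacle rests on vertices of a refined class having identical neighbourhoods outside the ring. The cited lemmas do not give this, and it is false in general. Lemmas~\ref{X1-X2-dichotomy} and~\ref{A1-D2-C2} exist precisely because $A_2$, $C_2$, $D_2$ split by neighbourhood. Lemma~\ref{A0-A2-B2} gives uniformity only among vertices with a common neighbour. Inside $H$ the links are chain graphs with strictly nested neighbourhoods, so a cycle using ring edges of a dropped vertex cannot be rerouted.

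The paper attempts no such twin argument. It puts the entire burden on Lemma~\ref{lem:safe-branching}, a case analysis of two rings sharing a set that uses the facts that biclique links force whole sides and that chain orderings are compatible. It then asserts $S^*\cap V(\widehat H)=S_{\widehat H}$ along the consistent branch. You have located the right obstacle, but the tool you propose would not close it.
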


\begin{proof}
\textbf{Running time.}
The coarse graph $G_{\mathcal S}$ has at most $15$ vertices, hence contains only $O(1)$ many $5$-cycles.
For each coarse $5$-cycle, \textsc{ExtractRing} performs only a constant number of splits (each layer-2 set splits into at most two parts),
and therefore produces only $O(1)$ ring instances.
For each ring instance $(H;\Delta_1,\dots,\Delta_5)$ we compute at most one minimum vertex covers per each graph $H[\Delta_i\cup\Delta_{i+1}]$.
Each such computation can be done via Hopcroft--Karp in time $O(m\sqrt{n})\subseteq O(n^{2.5})$.
Hence each recursive call runs in $O(n^{2.5})$ time up to a constant factor.

Each recursive call deletes at least one vertex whenever $G$ is non-bipartite (since any ring instance contains an induced $C_5$, so its OCT is nonempty),
and thus the recursion depth is at most $n$. Therefore the overall running time is polynomial (in particular, $O(n^{3.5})$).

\medskip
\textbf{Correctness.}
If $G$ is bipartite, Algorithm~1 returns $\emptyset$, which is optimal.
Assume $G$ is not bipartite. Since $G$ is $(P_6,C_3)$-free, every induced odd cycle in $G$ is an induced $C_5$.

\smallskip
\noindent\emph{Ring witness exists.}
By Lemma~\ref{lem:ring-extraction}, $G_{\mathcal S}$ contains a $5$-cycle and \textsc{ExtractRing} returns a cleaned chordless $5$-ring instance
$(H;\Delta_1,\dots,\Delta_5)$ with $H\subseteq G$ such that every vertex of $H$ lies on an induced $C_5$ in $H$.
In particular, Theorem~\ref{tm:ring-cut} applies to $H$.

\smallskip
\noindent\emph{There is an optimal way to break the ring at some edge.}
By Theorem~\ref{tm:ring-cut},
\[
\mathrm{OPT}(H)=\min_{i\in\{1,\dots,5\}}\ \tau\!\bigl(H[\Delta_i\cup\Delta_{i+1}]\bigr),
\]
and moreover there exists a minimum OCT of $H$ obtained by cutting a minimizing ring-edge
$e^\star=(\Delta_{i^\star},\Delta_{i^\star+1})$ using a minimum vertex cover of the link graph
$H[\Delta_{i^\star}\cup\Delta_{i^\star+1}]$.
(When the link is a biclique, a minimum vertex cover is exactly one full side.)

\smallskip
\noindent\emph{Branching on a selected edge is complete for $H$.}
Consider the call where Algorithm~1 has selected the ring instance $H$.
If it selects $e^\star$ and takes the \emph{cut} branch, then:
\begin{itemize}
\item if $H[\Delta_{i^\star}\cup\Delta_{i^\star+1}]$ is not a biclique, the algorithm adds an MVC of that link and therefore performs an optimal cut at $e^\star$;
\item if it is a biclique, the two cut branches “take $\Delta_{i^\star}$” and “take $\Delta_{i^\star+1}$” exactly enumerate the minimum vertex covers of that biclique link.
\end{itemize}
If the algorithm instead selects some other edge $e\neq e^\star$, then either it cuts $e$ (yielding a valid OCT branch) or it skips $e$ and marks it.
Since a $5$-ring has only five edges, along any root-to-leaf branch the algorithm can skip at most four ring-edges before it must cut an unmarked edge.
Therefore there exists at least one explored branch in which the algorithm eventually cuts the minimizing edge $e^\star$ optimally, yielding a minimum OCT for $H$.

\smallskip
\noindent\emph{Global optimality under ring/edge preference.}
When multiple ring witnesses are present in $G$, Algorithm~1 chooses which ring (and which edge within it) to process next according to the preference rule
(preferring rings with biclique links and preferring biclique edges within a ring).
Lemma~\ref{lem:safe-branching} states that this preferred branching is complete: at least one root-to-leaf branch explored by Algorithm~1 is consistent with an optimal OCT of $G$.

\noindent
We formalize the inductive step as follows. By Lemma~\ref{lem:safe-branching}, there exists a root-to-leaf branch that is consistent with some optimal OCT $S^*$ of $G$.
Consider any recursive call on an instance $\widehat G$ along this branch, and let $\widehat H\subseteq \widehat G$ be the cleaned ring instance selected by the algorithm.
By the argument above (and Theorem~\ref{tm:ring-cut}), among the explored cut-branches there is one that deletes a set $S_{\widehat H}$ which is a minimum OCT of $\widehat H$.
Since the branch is consistent with $S^*$, we may assume $S^*\cap V(\widehat H)=S_{\widehat H}$; hence removing $S_{\widehat H}$ reduces the optimum by exactly $|S_{\widehat H}|$.
Therefore the remainder of the branch solves $\widehat G\setminus S_{\widehat H}$ optimally, and by induction the entire branch returns a solution of size $\mathrm{OPT}(G)$.
Since Algorithm~1 returns the minimum over all explored branches, it outputs a minimum OCT of $G$.
\end{proof}

\subsection{Pseudocode  Algorithm and Detailed Proofs}

\noindent \textbf{Input:} A $(P_6, C_3)$-free graph $G$ and its coarse partition $\mathcal{S}$. \\
\textbf{Output:} A minimum set $S_{OCT} \subseteq V(G)$ such that $G \setminus S_{OCT}$ is bipartite.

\begin{enumerate}
    \item \textbf{Base Case:} 
    If there are no 5-cycles in $G_{\mathcal{S}}$, return $\emptyset$. (The remainder graph is bipartite).

    \item \textbf{Selection Step:} 
    Consider the set of rings $\mathcal{R}$. Select a ring $W_r \in \mathcal{R}$. Refine the sets in the ring $W_r$ so that each vertex in the set lies on a $5$-cycle within the ring. If possible, choose a ring that contains at least one edge $e = (\Delta_i, \Delta_{i+1})$ such that the induced subgraph $G[\Delta_i \cup \Delta_{i+1}]$ is a biclique (a complete bipartite graph). 
    
    Within the chosen ring $W_r$, select an edge $e = (\Delta_1, \Delta_2)$, giving strict preference to an edge that is a biclique.

    \item \textbf{Branching Step:} \\
    We branch based on the structural properties of the selected edge $e = (\Delta_1, \Delta_2)$:

    \begin{itemize}
        \item \textbf{Option (A): The edge $e$ is a biclique.} 
        Since $G[\Delta_1 \cup \Delta_2]$ is a complete bipartite graph, any optimal vertex cover must fully contain either $\Delta_1$ or $\Delta_2$. We branch into three cases:
        \begin{enumerate}[label=\arabic*.]
            \item \textbf{Take $\Delta_1$:} We assume the OCT contains $\Delta_1$. We recurse on the remainder.
            \[ \text{Result}_1 = \Delta_1 \cup \text{Solve}(G \setminus \Delta_1) \]
            \item \textbf{Take $\Delta_2$:} We assume the OCT contains $\Delta_2$. We recurse on the remainder.
            \[ \text{Result}_2 = \Delta_2 \cup \text{Solve}(G \setminus \Delta_2) \]
            \item \textbf{Skip Edge $e$:} We assume the optimal solution does not sever this link completely, meaning $W_r$ must be broken at a different edge. We mark $e$ as ``indestructible'' (assigning it cost $\infty$) and recurse.
            \[ \text{Result}_3 = \text{Solve}(G, \text{marked } e) \]
        \end{enumerate}
        \textbf{Return} $\arg\min \{ |\text{Result}_1|, |\text{Result}_2|, |\text{Result}_3| \}$.

        \item \textbf{Option (B): There is no biclique edge in the ring $W_r$.} 
        Compute the Minimum Vertex Cover (MVC) $S$ of the bipartite subgraph $G[\Delta_1 \cup \Delta_2]$. We branch on breaking the ring via this MVC versus skipping the edge:
        \begin{enumerate}[label=\arabic*.]
            \item \textbf{Cut Edge $e$ via MVC:} 
            \[ \text{Result}_1 = S \cup \text{Solve}(G \setminus S) \]
            \item \textbf{Skip Edge $e$:} Mark $e$ indestructible and recurse.
            \[ \text{Result}_2 = \text{Solve}(G, \text{marked } e) \]
        \end{enumerate}
        \textbf{Return} $\arg\min \{ |\text{Result}_1|, |\text{Result}_2| \}$.
    \end{itemize}
\end{enumerate}

\paragraph{Correctness and Complexity}


In what follow we prove Lemma \ref{lem:mvc-local}, Lemma \ref{lem:ring-extraction} and Lemma \ref{lem:safe-branching}. \\

\noindent\textbf{Proof of Lemma \ref{lem:mvc-local}: }
By Theorem~\ref{tm:ring-cut}, there exists an index $i\in\{1,\dots,5\}$ such that a minimum OCT of $H$ is given by a minimum vertex cover of
$H[\Delta_i\cup\Delta_{i+1}]$.
If $i\neq 1$, then $H[\Delta_1\cup\Delta_2]$ is not the bottleneck interface and we may choose such an optimal solution that does not use
$\Delta_1\cup\Delta_2$ at all.
If $i=1$, then any minimum OCT that breaks $H$ via the interface $(\Delta_1,\Delta_2)$ can be replaced (without increasing its size)
by a minimum vertex cover of $H[\Delta_1\cup\Delta_2]$, since deleting vertices outside $\Delta_1\cup\Delta_2$ cannot reduce the number
of vertices needed to cover all edges of the link graph.
Thus, whenever an optimal OCT intersects $\Delta_1\cup\Delta_2$, there exists an optimal OCT whose intersection with $\Delta_1\cup\Delta_2$ is an MVC.
\qed \vspace{3mm}

\noindent \textbf{Proof of Lemma \ref{lem:ring-extraction}:} 
We analyze the possible 5-cycles in $\mathcal{G}$ based on the layers involved ($X_1, X_2, X_0$).

\paragraph{Case A: Standard Cycles ($W_1$ -- $W_4$)}
\begin{itemize}
    \item $W_1 = (A_1, D_1, B_1, E_1, C_1)$: This forms a 5-ring because consecutive $X_1$ sets (e.g., $A_1, B_1$) are strictly independent by definition.
    \item $W_2 = (A_2, B_2, C_2, D_2, E_2)$: This forms a 5-ring. Non-consecutive sets in Layer 2 are  independent. First notice that $W_2$ is a ring, as there is no edge from $X_2$ to $Y_2 \cup Z_2$, where $Y_2,Z_2$ are not consecutive with $X_2$,  as otherwise, we get a triangle in $G$.

    \item $W_3 = (A_2, D_1, B_1, E_1, C_1)$: $A_2$ is independent of $B_1$ and $E_1$. If $A_2$ were connected to $B_1$, then for $a_2 \in A_2, b_1 \in B_1$, the triangle $\{a_2, b_1, b\}$ would exist (impossible). Thus, $W_3$ is a 5-ring.
    
\end{itemize}

\paragraph{Case B: Cycles with chords ($W_5, W_6$) and not having $X_0$ }
\begin{itemize}
    \item $W_4 = (A_2, B_2, B_1, D_1, A_1)$: $A_1$ is independent of $B_1$ and $B_2$. $A_2$ is independent of $B_1$. However, there could be a chord as there are some possible edges between $D_1$ and $A_2$. In this case by Lemma \ref{X1-X2-dichotomy}, $B_2=B^1_2 \cup B^2_2$ so that $D_1$ is complete to $B^2_1$, and independent from $B^2_2$. 
    Similarly $A_2=A^2_2 \cup A^1_2$ so 
    $D_1$ is complete to $A^1_2$ and independent from $A^2_2$.  
    Notice that $B^2_2$ is complete to $A^1_2$ as otherwise, $b^2_2,c,d,e,a^1_2,a_1$ for $b^2_2 \in B^2_2$, $a_1 \in A_1$ and $a^1_2 \in A^1_2$ is an induced $P_6$. Similarly, $A^2_2$ is complete to $B^1_2$, as otherwise, $a_1, a^2_2, e,d,c,b^1_2$ (for $a^2_2 \in A^2_2$ and $b^1_2 \in B^1_2)$ is an induced $P_6$. Now $(B^2_2,A^1_2,B^1_2,B_1,D_1,B^2_2)$ is a $5$-ring because there is no edge between $A^2_2$ and $D_1$ and there is no edge between $B_1$ and $B^2_2$.  \\
    
    \item $W_5 = (A_2, A_1, D_1,D_2, E_2, A_2)$: $A_1$ is independent of $E_2$ (otherwise triangle via $a$). Similarly, $D_1$ is independent of $A_2$. There could be some edges between $A_1$ and $D_2$ (possibly between $D_1$ and $A_2$). 
    By Lemma \ref{X1-X2-dichotomy}, 
    $D^2=D^1_2 \cup D^2_2$ so that $A_1$ is complete to $D^1_2$ and independent from $D^2_2$. Similar to what we have shown in case $W_4$, we have $D^2_2$ is complete to $E_2$ (consider path $P=(d_1,d^2_2,c,b,a,e_2)$ for details). Notice that $A_2=A^1_2 \cup A^2_2$ so that $A_1$ is complete to $A^2_2$ and independent from $A^1_2$ and $D_1$ is complete to $A^1_2$ and independent from $A^2_2$. By similar argument, we also conclude that $A_2^2$ is complete to $E_2$. Now we have a 5-ring $(A^2_2,A_1,D_1,D^2_2,E_2,A^2_2)$. To see that it is enough to observe that $D_1$ is independent from $A^2_2$ and $D_1$ is independent from $E_2$. Note that $A_1$ and $D^1_2$ are independent and $D^2_2$ is independent from $A^2_2$ (otherwise we get a triangle). 

    Note that if there are some edges between $D^1_2$ and $E_2$, then we have a ring $$(A_1,D^1_2,E_2,D^2_2,D_1,A_1).$$ This is because $A_1$ is independent from $D^2_2$, and $A_1$ is independent from $E_2$, furthermore, $D_1$ is independent from $D^1_2$ and $D_1$ is independent from $E_2$. \\
    
    \item $W_6 = (A_2, A_1, D_2, C_2, B_2, A_2)$: Potential chords exist between $A_1$ and $C_2$. If edges exist between $A_1$ and $C_2$, we refine the partition. By Lemma \ref{A1-D2-C2}, $D_2$ is partitioned into $D^2_2 \cup D^1_2$ and $C_2=C^2_2 \cup C^2_1$ so that $A_1$ is independent from $D^1_2 \cup C^1_2$ and it is complete to $D^2_2 \cup C^2_2$. Furthermore, $C^2_2$ is complete to $D^1_2$ and $D^2_2$ is complete to $C^1_2$. In order to have a $5$-cycle we must have some edges between $D^1_2$ and $C^1_2$. Now by definition $R_1=(A_1, D_2^2, C_2^1, D_2^1, C_2^2,A_1)$ is a 5-ring.

    \end{itemize}

\paragraph{Case C: Cycles involving Distance-2 vertices:\\ }

\begin{itemize}

\item $W_7 = (A_0, A_2, B_2, C_2, D_2)$.
If such a cycle exists, we apply a \textbf{Re-centering Argument}. Let $C' = (a_0, a_2, b_2, c_2, d_2)$ be a specific instance of this cycle in $G$. We can redefine our coordinate system (sets $A'_1, B'_1 \dots$) relative to $C'$.
In this new reference frame:
\begin{enumerate}
   
    \item The vertex $e$ from the original cycle becomes a member of $A'_2$ (adjacent to $d_2, a_2$).
    \item The vertex $b$ becomes a member of $C'_2$.
    \item The original sets map to a configuration containing a cycle of type $W_6$ or similar.
\end{enumerate}
Since we have already shown how to handle $W_6$ (via refinement) and $W_5$, this case reduces to previously solved cases.  Thus, we assume that both cycles $(A_2,D_1,D_2,C_2,C_1,A_1)$ (as $W_5$) and $(B_0,B_2,A_2,E_2,D_2,B_0)$ exists. Suppose there is a $5$-cycle $C''$ in $B_0 \cup B_2 \cup A_2$. Thus by using Lemma \ref{A0-A2-B2}, this cycle has one vertex in $B_0$, two vertices in $A_2$ and two vertices in $B_2$.  $C''=(b_0,b^2_2,a^1_2,b^1_2,a^2_2,b_0)$. Thus, we may assume there exists $5$-cycle $W'=(B_0,B^2_2,A^1_2,B^1_2,A^2_2,B_0)$ in $\mathcal{G}$. Notice that $B^2_2$ are vertices of $B_2$ that have neighbors in $B_0$ and $A^2_2$ are vertices of $A_2$ with neighbors in $B_0$. Since path $(d_1,d,c,b,a^2_2,b_0)$ is not induced $P_6$, we have $d_1b^2_2 \in E(G)$. This implies $D_1$ is complete to $A^2_2$. Since path $b_0,b^2_2,a,e,d,d_1$ is not induced $P_6$ we have $d_1b^2_2 \in E(G)$, implying that $D_1$ is complete to $B^2_2$. Thus, there is no edge from $B^2_2$ to $A^2_2$ as otherwise we have a triangle in $G$. Thus, $W'$ forms a $5$-ring.   \\


\item $W_8=(A_0, A_2^1,B^1_2,A^2_2,B^2_2,A_0)$ which is handled using the re-centering argument as in the previous case. 

\end{itemize}
\qed \vspace{3mm}

\noindent \textbf{Proof of Lemma \ref{lem:safe-branching}:}
Suppose $G$ contains two (cleaned) $5$-rings $W_1$ and $W_2$ that share exactly one sets $\Delta_1$, and let
$e_{12}=(\Delta_1,\Delta_2)$ be an edge of $W_1$ and $e_{13}=(\Delta_1,\Delta_3)$ be an edge of $W_2$ such that:
\begin{itemize}
\item $W_1$ does not use the edge $e_{13}$, and $W_2$ does not use the edge $e_{12}$;
\item Algorithm~1 gives preference to branching on a ring that contains a biclique link, and within such a ring, gives preference to a biclique link.
\end{itemize}
Then there exists an optimal OCT $S^*$ consistent with the branch decision taken on $e_{12}$ and with the subsequent optimal decision on $e_{13}$.

\noindent \textbf{Case 1.} Up to symmetry, assume $\Delta_1=A_1$. Now if $\Delta_2=X_1$ ($C_1$ or $D_1$) then we know that $A_1$ is complete to $X_1$ and hence, MVC contain entire $A_1$ and by removing $A_1$ we break both rings containing $A_1$. Similarly of $\Delta_3=C_1$ (or $D_1$) So we may assume that $\Delta_2, \Delta_3 \in \{C_2,A_2,D_2\}$. First suppose $\Delta_2=A_2$ and $\Delta_3=C_2$. 

By Corollary \ref{A1~C2_2} $A_1$ is complete to $N(A_1) \cap C_2$. Thus, MVC contain the entire $A_1$ and hence, it would be part of optimal $OCT$. This implies that we also break $W_1$. \\

\noindent\textbf{Case 2.} Up to symmetry $\Delta_1=A_2$.

\noindent\textbf{Subcase 2.1.}  $\Delta_2=A_1$. If $\Delta_3=C_1$ then since, $A_1$ is complete to $C_1$, $A_2=A^1_2 \cup A^2_2$, so that $A^1_2$ have neighbors in $A_1$ and not to $C_1$ and $A^2_2$ have neighbors in $C_1$ and no neighbor in $A_1$. Thus, MVC for $G[\Delta_1 \cup \Delta_2] $ is disjoint from MVC for $G[\Delta_1 \cup \Delta_3]$. So their contribution to OCT is disjoint. 

Next suppose $\Delta_3=B_2$. Now by Lemma \ref{A1~A2--B2}, the part of $A_2$, say $A^2_2$ that have neighbor in $A_1$ is complete to $B_2$. Thus MVC for $G[\Delta_1 \cup \Delta_3]$ contains $A_2^2$ and consequently it should be part of OCT and choosing $A^2_2$ would break $W_1$.\\ 

\noindent\textbf{Subcase 2.2}  $\Delta_1=A_2$, $\Delta_2=C_1$ and $\Delta_3=D_1$. 
Next suppose $\Delta_3=C_1$. 

Let $A_{2,c}$ be a set of vertices in $A_2$ that have a neighbor in $C_1$ and $A_{2,d}$ be a set of vertices in $A_2$ that have neighbor in $D_1$. By Corollary \ref{A1~C2_2}, $C_1$ is complete to $A_{2,c}$ and $D_1$ is complete to $A_{2,d}$.  
Let $a_2,a'_2 \in A_2$. Let  $c_1 \in C_1$ so that $a_2c_1 \in E(G)$.  Let $d_1 \in D_1$ be a neighbor of $a'_2$. Now consider $P_1=(d_1,a'_2,e,a_2,c_1,c)$ and $P_2=(c_1,a_2,b,a'_2,d_1,d)$. Since neither of $P_1$ and $P_2$ is induced $P_6$ we have $d_1a_2 \in E(G)$ or $c_1a'_2 \in E(G)$. This implies that 

$A_{2,c} \subset A_{2,d}$ or $A_{2,d} \subseteq A_{2,c}$. Suppose $A_{2,c} \subseteq A_{2,d}$. Thus, MVC for $G[\Delta_1 \cup \Delta_3]$ contain entire $A_2$ (because it is a clique) and hence it is part of OCT and it also break ring $W_1$. \\

\noindent\textbf{Subcase 3}. $\Delta_2=C_1$, and $\Delta_3=E_2$. Let $A^1_2$ be the set of 
of $A_2$ which has neighbor in $A_2$. We notice that by Lemma \ref{A1-C2}, $C_1$ and $A^1_2$ form a biclique. Therefore, if MVC for $G[\Delta_1 \cup \Delta_2]$ contains some vertices of $A_2$ then it must contain entire $A^1_2$ and hence, it must be include to OCT.  Now the  MVC for $G[\Delta_2 \cup \Delta_3]$ is independent from the choice of $A^1_2$, and it does not affect the OCT. 

\noindent\textbf{Subcase 4}. $\Delta_2=A_0$ and $\Delta_3=B_2$.  

\textbf{Observation I} Suppose there exists $a_0 \in A_0$ so that $a_2a_0 \in E(G)$. Let $a'_2$ be a vertex in $A_2$ that has a neighbor $b_2$ in $B_2$. Now $P=(a_0,a_2,e,a'_2,b_2,c)$ is an induced $P_6$ unless $a_2b_2 \in E(G)$. 

\textbf{Observation II} Suppose there exists $a_0 \in A_0$ so that $a_2a_0,a'_2a_0 \in E(G)$. Let $b_2 \in B_2$ so that $a_2b_2 \in E(G)$. Now $a'_2,a_0,a_2,b_2,c,d$ is an induced $P_6$ unless $a'_2b_2 \in E(G)$. This means $N(a_2) \cap B_2= N(a'_2) \cap B_2$. 

Let $A{0,2}$ be a set of vertices in $A_2$ with some neighbor in $A_0$ (concerning $W_1$ ring). Let $A_{2,2}$ be a set of vertices in $A_2$ that have some neighbor in $B_2$ (concerning $W_2$ ring). Let $A_{2,0,2}$ be the set of vertices in $A_2$ that have some neighbor in $A_0$ and have some neighbor in $B_2$. We also observe that $A_{2,0,2}$ is complete to $N(A_{2,0,2}) \cap B_2$. Furthermore, there is no path from $A'=A_{0,2} \setminus A_{2,0,2}$ to $A_{2,0,2}$ because otherwise by Observation II,  $A'$ is empty.

However, we conclude that MVC for $G[B_2 \cup A_2]$ must include $A_{2,0,2}$ if it contains some vertices in $A_2$ and hence it must be part of OCT. Now the selection of $A_{2,0,2}$ does not affect the MVC for $G[A' \cup (N(A') \cap A_0)]$. \\

\noindent \textbf{Subcase 5}. $\Delta_2=A_0$ and $\Delta_3=A_1$. For simplicity we may assume every vertex in $A_2$ has some neighbor in $A_0$ and has some neighbor in $A_1$. This is because if there are some vertices in $A_2$ without any neighbor in $A_0$ then they won't appear in any OCT for ring $W_1$ (the same for $W_2$ and $A_1$). First suppose $G_{1,2}$ is not connected. 

Suppose there are two vertices $a_2,a'_2 \in A_2$. Let $a_0 \in A_0$ and $a'_0 \in A_0$ so that $a_2a_0,a'_2a'_0 \in E(G)$, and let $a_1,a'_1 \in A_1$ so that $a_2a_1,a'_2a'_1 \in E(G)$. Since the path $P=(a_0,a_2,a_1,a,a'_1,a'_2)$ is not induced $P_6$, one of the $a_0a'_2,a_2a'_1,a'_2a_1,a'_0a_2$ is an edge of $G$.  

Let $H_1$ and $H_2$ be two connected component in $G_{1,2}$. Let $a_2a_1 \in E(H_1)$ and
$a'_2a'_1 \in E(H_2)$ where $a_1,a'_1 \in A_1$ and $a_2,a'_2 \in A_2$. Suppose $a_2a_0 \in E(G)$ for $a_0 
\in A_0$. Now $(a'_2,a'_1,a,a_1,a_2,a_0)$ is an induced $P_6$ unless $a_0a'_2 \in E(G)$. This means that $A_0$ is complete to $A_2$. Thus, MVC in $G[A_0 \cup A_2]$ contains all the vertices in $A_2$ (or none of them). If we take $A_0$ then it also break the ring $W_2$. 

So we may assume that $G_{1,2}$ is connected. Suppose there is a $2K_2$ in $G_{1,2}$. Then by using the same argument in the proof of Lemma \ref{2K2-structure}, $G_{1,2}$ there are $B_1,B_2,B_3$ where $B_1$ and $B_2$ are disjoint and $B_3$ is complete to both $B_1$ and $B_2$. $N(B_1) \cap A_0 =N(B_2) \cap A_0 \subseteq N(B_3) \cap A_0$ and  $G[B_1 \cup (N(B_1) \cap A_0)], G[B_2 \cup (N(B_2) \cap A_0)]$ are bicliues and every vertex in $B_3$ is adjacent to every vertex in $N(B_1) \cap A_0$. This implies that MVC for $G_{0,2}=G[A_2 \cup A_0]$ must include $B_3$ and hence it must be part of $OCT$. Furthermore, MVC for $G_{1,2}$ (concerning $A_2$) must contain $B_3$. Furthermore, $A_2 \setminus B_3$ is complete to $A_2$, and hence it must be included in MVC for $G_{0,2}$ and hence part of OCT. Thus, we end up removing entire $A_2$, which breaks the ring $W_2$.

Finally we consider the case where $G_{1,2}$ is a chain graphs from right to left inclusion ordering $\pi$ with respect to $A_2$. By using the same argument in Lemma \ref{chain-switch}, one can show that each connected component of $G_{0,2}$ must be a chain graph from right to left inclusion ordering $\pi_1$ with respect to $A_2$. Note that any MVC for $G_{1,2}$ it contains a set of vertices of $A_2$ from right to left in $\pi$ and any MVC in $G_{0,2}$ contains a consecutive vertices according to the $\pi_1$ ordering. Since $\pi$ and $\pi_1$ are consistent then the choice for $G_{1,2}$ does not affect the choice for $G_{0,2}$. \qed

\section{Extensions to $P_k$-free Graphs}

We propose that the structural and algorithmic techniques developed for $P_6$-free graphs can be generalized to $P_k$-free graphs for $k > 6$. While the problem remains NP-complete for $k \ge 6$, we can achieve constant-factor approximations by leveraging the tractability of graphs excluding small odd cycles.

\subsection{Approximation Algorithm for $P_k$-free Graphs}

Let $\mathcal{C}_{<L}$ denote the set of all odd cycles with length strictly less than $L$.
Our strategy relies on a standard greedy packing approach: iteratively find and remove vertex-disjoint small odd cycles until the graph is free of them. The approximation factor is determined by the size of the largest cycle removed.

\begin{theorem}\label{thm:pk-approx}
Let $k\ge 6$ be an integer. There is a polynomial-time algorithm that, given a $P_k$-free graph $G$, outputs an OCT, $S$ such that $|S|\le (k-3)\cdot \mathrm{OPT}(G)$ when $k$ is even and $|S|\le (k-2)\cdot \mathrm{OPT}(G)$ when $k$ is odd. Here $\mathrm{OPT}(G)$ denotes the size of a minimum OCT of $G$.
\end{theorem}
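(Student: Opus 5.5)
The plan is a greedy packing of short odd cycles, followed by an exact solve of the residual instance with the ring machinery. The key preliminary fact is that in a $P_k$-free graph every induced odd cycle $C_m$ satisfies $m\le k$, because $C_m$ contains an induced $P_{m-1}$. Hence the longest possible induced odd cycle has length $L_{\max}=k$ when $k$ is odd and $L_{\max}=k-1$ when $k$ is even. Set the threshold $\ell=L_{\max}-2$, which is $k-2$ for odd $k$ and $k-3$ for even $k$.

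The algorithm repeatedly computes a shortest odd cycle of the current graph. This takes polynomial time via BFS from each vertex, in a bound whose degree does not depend on $k$, and a shortest odd cycle is automatically induced. If its length is at most $\ell$, it deletes all its vertices and continues. Let $\mathcal{P}=\{Q_1,\dots,Q_p\}$ be the vertex-disjoint odd cycles removed, and let $G'$ be the remainder. Then $G'$ is $P_k$-free, since the class is hereditary, and every odd cycle of $G'$ has length exactly $L_{\max}$. On $G'$ we compute an exact minimum OCT $S'$ and output $S=V(\mathcal{P})\cup S'$.

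The accounting is where the ratio comes from; the naive bound gives only $\ell+1$. Let $O$ be a minimum OCT of $G$. Since $O$ hits each disjoint $Q_i$, we have $|O\cap V(\mathcal{P})|\ge p$. Moreover $O\setminus V(\mathcal{P})$ is an OCT of $G'$, so $\mathrm{OPT}(G')\le \mathrm{OPT}(G)-p$. Therefore
$$|S|\le \ell p+\mathrm{OPT}(G)-p=(\ell-1)p+\mathrm{OPT}(G)\le \ell\cdot\mathrm{OPT}(G),$$
using $p\le\mathrm{OPT}(G)$. This gives $k-2$ for odd $k$ and $k-3$ for even $k$, as claimed. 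For $k=6$ we get $L_{\max}=5$ and $\ell=3$, so $G'$ is $(P_6,C_3)$-free and Theorem~\ref{thm:algo-P6} supplies $S'$ directly.

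The main obstacle is the exact step for general $k$: solving OCT in polynomial time on $P_k$-free graphs whose only odd cycles have length $L_{\max}$. The paper's stated plan is to generalize Section~3. Fix an induced $C_{L_{\max}}$ and partition the vertices by their neighbourhood on it. Absence of shorter odd cycles and of $P_k$ should force each attached vertex to see at most two vertices of the cycle, at distance two apart, and should force every vertex to lie within bounded distance of the cycle, in analogy with Lemma~\ref{lem:distance-2}. Every odd cycle then becomes a ring $V_1,\dots,V_{L_{\max}}$ of independent sets. For $k>6$ the rings have length at least $7$, so induced paths along a ring are longer, and I expect chain and $2K_2$ arguments like Lemmas~\ref{chain-switch}--\ref{2K2} to be easier rather than harder. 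This should yield the analogue of Theorem~\ref{tm:ring-cut}, namely $\mathrm{OCT}(H)=\min_i\tau(H[V_i\cup V_{i+1}])$. The branching of Algorithm~1 would then be applied over the constantly many coarse rings, with each link's vertex cover computed by bipartite matching.

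I would first try to prove the bottleneck-cut statement for long rings directly: an OCT that covers no link fully leaves a surviving transversal path around the ring, which closes into an odd cycle. The delicate point, as in Lemma~\ref{lem:safe-branching}, is the interaction between overlapping rings, and I would handle it by the same preference-for-biclique-links branching.
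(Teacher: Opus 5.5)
Your threshold $\ell$, the greedy removal of vertex-disjoint odd cycles of length at most $\ell$, and the charging $|S|\le \ell p+\mathrm{OPT}(G)-p\le \ell\cdot\mathrm{OPT}(G)$ are exactly the paper's proof. The paper splits $O^*$ into $O^*\cap S_{\mathrm{pack}}$ and the rest, which gives the same bound. Your use of Theorem~\ref{thm:algo-P6} for $k=6$ is more careful than the paper: its proof cites Theorem~\ref{thm:ring-structure} for every even $k$, although that theorem is stated only for $k\ge 8$.

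\textbf{The gap.} For $k\ge 7$ the exact step is missing. It is the only nontrivial ingredient, and you leave it as ``should'' and ``I expect''. The paper supplies it through separate structure theorems: Theorem~\ref{tm:k-odd-main} for odd $k$, and Theorem~\ref{thm:ring-structure} with its corollary for even $k\ge 8$.

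Your proposed direct route to a bottleneck-cut statement also fails as stated. An OCT that leaves some edge in every link $H[V_i\cup V_{i+1}]$ need not leave an odd cycle. The surviving edges of different links need not chain together: in a general ring they can split into several bipartite pieces, each spanning only some of the links. So you need a property of the links themselves, not just of the ring.

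\textbf{The odd case.} For odd $k$ that property is easy, and the exact step is far simpler than a Section~3 analogue. Fix an induced $C=(a_0,\dots,a_{k-1})$ in a non-bipartite component of $G'$.
\begin{enumerate}
\item A vertex whose only neighbour on $C$ is $a_0$ gives the induced $P_k$ $(v,a_0,a_1,\dots,a_{k-2})$.
\item Parity of the arcs cut out by a vertex's neighbours on $C$ then forces $N(v)\cap C=\{a_{i-1},a_{i+1}\}$.
\item A vertex $y$ at distance two from $C$, adjacent to some $x$ with $N(x)\cap C=\{a_{k-1},a_1\}$, gives the induced $P_k$ $(y,x,a_1,\dots,a_{k-2})$.
\end{enumerate}
So the component is the ring $V_i=\{a_i\}\cup\{v: N(v)\cap C=\{a_{i-1},a_{i+1}\}\}$.

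Moreover every link is complete. Indeed, $a_i$ is adjacent to all of $V_{i+1}$, and $a_{i+1}$ is adjacent to all of $V_i$. A non-adjacent pair $u\in V_i\setminus\{a_i\}$, $w\in V_{i+1}\setminus\{a_{i+1}\}$ would yield the induced $P_k$ $(w,a_{i+2},a_{i+3},\dots,a_{i-1},u)$.

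Hence an OCT containing no whole $V_i$ leaves a survivor in each $V_i$, and these survivors span a $C_k$. The optimum on the component is therefore $\min_i|V_i|$, with no vertex covers or branching needed. This completeness is also the fact behind the conclusion of Theorem~\ref{tm:k-odd-main}, which its proof asserts without showing it.

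\textbf{The even case.} For even $k$ the residual can contain vertices with a single neighbour on $C$ and vertices at distance two from $C$, and links need not be complete. This is where the paper's refined partition $V_j=B_j\cup A^{+}_{j-1}\cup G^{+}_{j-1}\cup A^{-}_{j+1}\cup G^{-}_{j+1}$ comes in, and your plan does not yet provide a substitute for it.
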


\begin{proof}
Fix $k\ge 6$ and let $G$ be a $P_k$-free graph. Set $L=k-3$ if $k$ is even, otherwise set $L=k-2$. 
We describe an algorithm that returns a set $S$ with $|S|\le L\cdot \mathrm{OPT}(G)$.

\medskip
\noindent\textbf{Phase 1: greedy packing of short odd cycles.}
Initialize $S_{\mathrm{pack}}:=\emptyset$ and $G_0:=G$.
While $G_0$ contains an odd cycle of length at most $L$, select one such cycle $C$ (e.g., the shortest odd cycle), add its vertex set to $S_{\mathrm{pack}}$, and delete $V(C)$ from $G_0$.
Let $G':=G\setminus S_{\mathrm{pack}}$.

By construction, the cycles chosen in Phase 1 are vertex-disjoint and each has length at most $L$. Thus, if $t$ is the number of cycles chosen, we have $|S_{\mathrm{pack}}| \le L\cdot t$.

Let $O^*$ be a global minimum odd cycle transversal for $G$, so $|O^*| = \mathrm{OPT}(G)$. We can partition $O^*$ into two disjoint sets: $O^*_1 = O^* \cap S_{\mathrm{pack}}$ and $O^*_2 = O^* \setminus S_{\mathrm{pack}}$. 
Because $O^*$ is a valid OCT, it must intersect every odd cycle in $G$, and in particular, it must pick at least one vertex from each of the $t$ vertex-disjoint odd cycles found in Phase 1. Therefore, $|O^*_1| \ge t$. 
Substituting this into our packing bound yields $|S_{\mathrm{pack}}| \le L \cdot |O^*_1|$.

\medskip
\noindent\textbf{Phase 2: exact solution on the remainder.}
By definition of $G'$, it contains no odd cycle of length at most $L$.
Equivalently:
\begin{itemize}
\item if $k$ is even, then $G'$ is $(P_k, C_3, C_5, \dots, C_{k-3})$-free, so the smallest possible odd cycle has length at least $k-1$;
\item if $k$ is odd, then $G'$ is $(P_k, C_3, C_5, \dots, C_{k-2})$-free, so the smallest possible odd cycle has length at least $k$.
\end{itemize}
In these two cases, we can compute a minimum odd cycle transversal of $G'$ in polynomial time using the structural theorems proved later in the paper:
when $k$ is odd, by Theorem~\ref{tm:k-odd-main}, and when $k$ is even, by Theorem~\ref{thm:ring-structure}.

Let $S_{\mathrm{rem}}$ be such a minimum OCT of $G'$. Finally, output $S := S_{\mathrm{pack}} \cup S_{\mathrm{rem}}$.

Since $G\setminus S = (G'\setminus S_{\mathrm{rem}})$ is bipartite, $S$ is a valid OCT of $G$.
To bound its size, observe that $O^*_2$ is a valid odd cycle transversal for $G'$, meaning the optimal solution on the remainder graph cannot be larger than $|O^*_2|$. Thus, $|S_{\mathrm{rem}}| = \mathrm{OPT}(G') \le |O^*_2|$.
Combining the bounds from both phases, and noting that $L \ge 1$ for all $k \ge 6$, we obtain:
$$|S| = |S_{\mathrm{pack}}| + |S_{\mathrm{rem}}| \le L \cdot |O^*_1| + |O^*_2| \le L(|O^*_1| + |O^*_2|) = L \cdot \mathrm{OPT}(G)$$

So in particular $|S|\le (k-3)\mathrm{OPT}(G)$ for even $k$ and $|S|\le (k-2)\mathrm{OPT}(G)$ for odd $k$, as claimed.
\end{proof}

\section{Exact OCT on $(P_k, \mathcal{C}_{<k})$-free Graphs when $k$ is odd}

Let $k \ge 5$ be an odd integer. Let $G$ be a $(P_k, \mathcal{C}_{<k})$-free graph, meaning $G$ contains neither an induced path on $k$ vertices nor any induced odd cycle of length strictly less than $k$.
Let $C = (a_0, a_1, \dots, a_{k-1})$ be an induced cycle of length $k$ in $G$.

We define the sets $V_i$ for $i \in \{0, \dots, k-1\}$ (with indices taken modulo $k$) as:
\[ V_i = \{ a_i \} \cup \{ v \in V(G) \setminus V(C) \mid N(v) \cap V(C) = \{a_{i-1}, a_{i+1}\} \} \]

\begin{theorem}\label{tm:k-odd-main}
The vertex set of $G$ is partitioned into $\bigcup_{i=0}^{k-1} V_i$. Moreover, each $V_i$ is an independent set, and edges exist only between consecutive sets $V_i$ and $V_{i+1}$. Consequently, the Minimum OCT is simply the set $V_i$ of minimum cardinality.
\end{theorem}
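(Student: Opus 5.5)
The plan is to analyse how an arbitrary vertex $v\notin V(C)$ attaches to $C$, using only two tools: short odd cycles (forbidden by $\mathcal{C}_{<k}$-freeness) and long induced paths (forbidden by $P_k$-freeness). As in Lemma~\ref{lem:distance-2}, I will assume $G$ is connected. I will also assume, as in Section~4, that vertices lying on no odd cycle have been discarded, since they never belong to an optimal OCT.

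\textbf{Step 1: at most two neighbours on $C$, and never adjacent ones.} Let $v$ have at least two neighbours on $C$, and consider the gaps $d_1,\dots,d_r$ between cyclically consecutive neighbours. Triangle-freeness gives $d_s\ge 2$, and $\sum d_s=k$ is odd, so some gap $d$ is odd. That gap closes an odd cycle through $v$ of length $d+2$, which must be $\ge k$. Hence $d\ge k-2$, and since the remaining gaps are each $\ge 2$ and sum to $k-d$, we get $d=k-2$, $r=2$, and the other gap equals $2$. So $N(v)\cap V(C)=\{a_{i-1},a_{i+1}\}$ for some $i$.

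\textbf{Step 2: exactly two neighbours on $C$.}
\begin{itemize}
\item If $N(v)\cap V(C)=\{a_i\}$, then $(v,a_i,a_{i+1},\dots,a_{i+k-2})$ is an induced $P_k$.
\item If $v$ is at distance $2$ from $C$ via $y$, then by Step 1 $y$ has neighbours exactly $a_{i-1},a_{i+1}$ (it cannot have exactly one, by the previous bullet). Then $(v,y,a_{i+1},a_{i+2},\dots,a_{i-2})$ has $2+(k-2)=k$ vertices and is induced, again a contradiction.
\item Vertices at distance $\ge 3$ are excluded exactly as in Lemma~\ref{lem:distance-2}.
\end{itemize}
Hence every vertex lies in exactly one $V_i$, which gives the partition.

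\textbf{Step 3: the ring structure.}
\begin{itemize}
\item \emph{Independence.} If $u,w\in V_i$ with $uw\in E(G)$, then $\{u,w,a_{i+1}\}$ is a triangle (with the obvious adaptation when one of them is $a_i$).
\item \emph{Only consecutive edges.} Let $u\in V_i$, $w\in V_j$ with $uw\in E(G)$. Walk from $a_{i+1}$ to $a_{j-1}$ along $C$ in each of the two directions; the two resulting cycles through $u$ and $w$ have lengths summing to $k+O(1)$ with opposite parities. The odd one is shorter than $k$ unless $j=i\pm1$. I would do this case check carefully, including the degenerate cases where $u$ or $w$ is a cycle vertex.
\end{itemize}
This establishes that $V_0,\dots,V_{k-1}$ form a chordless $k$-ring.

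\textbf{Step 4: the OCT statement, which I expect to be the main obstacle.} Deleting any $V_i$ leaves the ring cut open, hence bipartite, so $\min_i|V_i|$ is an upper bound. For the lower bound, every odd cycle must wind around the ring an odd number of times, and an OCT must destroy all such windings. In general this only shows that the OCT equals a minimum over vertex covers of a link $G[V_i\cup V_{i+1}]$ (in the spirit of Theorem~\ref{tm:ring-cut}), which can be strictly smaller than $\min_i|V_i|$. To obtain the stated form, I would first try to show that each link is complete bipartite. The idea is to take a non-edge $xy$ with $x\in V_i$, $y\in V_{i+1}$ and build an induced path of length $k$ through $x$, the cycle, and $y$, using the fact that every vertex of $V_i$ is adjacent to $a_{i\pm1}$ (for instance $x,a_{i-1},a_{i-2},\dots$ ending at $y$). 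If completeness of the links fails, I would instead state and prove the conclusion as $\mathrm{OCT}(G)=\min_i\tau(G[V_i\cup V_{i+1}])$, which is still polynomial-time computable and suffices for Theorem~\ref{thm:pk-approx}.
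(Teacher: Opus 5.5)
Your Steps 1--3 follow the paper's route: one cycle neighbour or a vertex at distance two gives an induced $P_k$, parity of the cycles through the attachment points forces the pattern $\{a_{i-1},a_{i+1}\}$, and triangles and short odd cycles give the ring. Your gap count in Step 1 is a tidier version of the paper's pairwise argument, because it excludes three or more cycle neighbours explicitly.

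The real difference is Step 4, and there your instinct is right. The paper's proof simply asserts that $G$ becomes bipartite only if an entire $V_i$ is deleted. The ring structure by itself does not force this, since a vertex cover of a non-complete link can be smaller than both of its sides. So that assertion silently needs every link to be a biclique, which the paper never proves. Your path supplies exactly this. For off-cycle $x\in V_i$, $y\in V_{i+1}$ with $xy\notin E(G)$, the sequence $(x,a_{i-1},a_{i-2},\dots,a_{i+2},y)$ has $k$ vertices and is induced: the remaining cycle neighbours of $x$ and $y$ are $a_{i+1}$ and $a_i$, and both are off the path. Adjacencies involving $a_i$ or $a_{i+1}$ hold by definition. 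So drop the fallback and finish with the explicit lower bound: if an OCT $X$ misses some $v_i\in V_i$ for every $i$ (each $V_i\ni a_i$ is nonempty), then $v_0v_1\cdots v_{k-1}$ is an odd cycle in $G-X$.

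One step needs repair. In Step 3, with the fixed attachment points $a_{i+1}$ and $a_{j-1}$, your claim is false for $j\equiv i+3$ and $j\equiv i+5$. The two closed walks then have lengths $4$ and $k+2$ when $j\equiv i+3$, and $6$ and $k$ when $j\equiv i+5$. Using the pair $a_{i-1},a_{j+1}$ instead (equivalently, swapping the roles of $u$ and $w$) handles these two cases. It gives odd closed walks of lengths $k-2$ and $k-4$ respectively (a triangle when $k=7$, $j\equiv i+5$). The first pair fails only for those two values and for $j\equiv i\pm1$. You should also note that an odd closed walk contains an odd cycle no longer than itself, and that a shortest odd cycle is induced, so you really do contradict $\mathcal{C}_{<k}$-freeness. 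The cases where $u$ or $w$ lies on $C$ are immediate from the definition of $V_j$.

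Finally, three smaller points. Your connectivity assumption is genuinely needed; the paper uses it tacitly, and the partition claim fails for a disjoint union. The assumption that vertices on no odd cycle are discarded is never used and should be dropped. Vertices at distance at least $3$ are excluded simply because a shortest path from such a vertex to $C$ passes through a vertex at distance exactly $2$.
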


\begin{proof}
\noindent \textbf{1. Classification of Neighbors:} 
We first show that every vertex outside $C$ with a neighbor in $C$ must be adjacent to at least two vertices of $C$. Suppose $v \in V(G) \setminus V(C)$ is adjacent to exactly one vertex on $C$, say $a_0$. Consider the path $P = (v, a_0, a_1, \dots, a_{k-2})$. This path contains exactly $k$ vertices. Because $G$ is free of odd cycles of length strictly less than $k$ (specifically $C_3, \dots, C_{k-2}$), there can be no chords between $a_0$ and $a_{k-2}$. Furthermore, since $v$ has no other neighbors on the cycle, it introduces no chords. Therefore, $P$ is an induced $P_k$, which contradicts the assumption that $G$ is $P_k$-free. 

\textbf{2. Adjacency Constraints:} 
Next, suppose $v \notin V(C)$ is adjacent to two vertices $a_i$ and $a_j$ on $C$. The edges $va_i$ and $va_j$, together with the two paths connecting $a_i$ and $a_j$ along $C$, create two new cycles. The sum of the lengths of these two cycles is exactly $k+2$. Since $k$ is odd, $k+2$ is odd, meaning exactly one of these two cycles must be of odd length. Let $L_{odd}$ denote the length of this odd cycle. By our hypothesis, we must have $L_{odd} \ge k$. 

Because $v$ is not on $C$, the maximum possible length for a cycle passing through $v$ and a segment of $C$ is $(k-1) + 2 = k+1$. The only way to achieve an odd cycle of length at least $k$ is if the segment of $C$ between $a_i$ and $a_j$ has length $k-2$. This implies that $a_i$ and $a_j$ must be at distance exactly 2 on the cycle (e.g., $a_{i-1}$ and $a_{i+1}$). If the distance were 1, the vertices $v, a_i, a_{i+1}$ would form an induced $C_3$, which is a forbidden small odd cycle. Thus, every neighbor of $C$ must connect to exactly two vertices at distance 2, meaning every neighbor of $C$ belongs to exactly one set $V_i$.

\textbf{3. No Distance-2 Vertices:} 
We now show that no vertex can be at distance 2 from $C$. Suppose, for the sake of contradiction, that there exists a vertex $y$ at distance 2 from $C$. Let $y$ be adjacent to a vertex $x \in V_0 \setminus \{a_0\}$. By definition, $x$ is adjacent to $a_{k-1}$ and $a_1$. Consider the path $P = (y, x, a_1, a_2, \dots, a_{k-2})$. This path consists of exactly $k$ vertices. Since $y$ is at distance 2 from $C$, it has no neighbors on the cycle. Furthermore, $x$ is adjacent only to $a_{k-1}$ (which is not in $P$) and $a_1$, so $x$ does not introduce any chords to the path. Thus, $P$ forms an induced $P_k$, yielding a contradiction. This establishes that every vertex in $G$ belongs to $\bigcup_{i=0}^{k-1} V_i$.

\textbf{4. Ring Structure:} 
Finally, we establish the adjacencies between the sets. 
\begin{itemize}
    \item \textbf{Internal edges:} Suppose there is an edge between two vertices $u, v \in V_i$. By definition, both $u$ and $v$ are adjacent to $a_{i-1}$, which means the set $\{u, v, a_{i-1}\}$ forms an induced $C_3$. This is a contradiction, so each $V_i$ must be an independent set.
    \item \textbf{Non-consecutive edges:} Suppose there is an edge between $u \in V_i$ and $v \in V_{i+2}$. Both vertices share the neighbor $a_{i+1}$, which again forms a forbidden induced $C_3$. Furthermore, if the distance between indices $i$ and $j$ on the ring is strictly greater than 2, any edge between $V_i$ and $V_j$ would close an induced odd cycle of length strictly less than $k$, which is forbidden.
\end{itemize}
Thus, edges can only exist between consecutive sets $V_i$ and $V_{i+1}$.

\textbf{Conclusion:} 
The graph $G$ is structured strictly as a ring of independent sets $V_0, V_1, \dots, V_{k-1}, V_0$. Because $k$ is odd, this ring structure forms a macroscopic odd cycle. The graph becomes bipartite if and only if we remove all vertices in at least one set $V_i$, which breaks the odd ring into a bipartite chain. Therefore, the Minimum Odd Cycle Transversal is exactly the set $V_i$ that has the minimum cardinality.\qed

\end{proof}

\section{Structure of $(P_k, \mathcal{C}_{<k-1})$-free Graphs when $k$ is even}

Let $k \ge 8$ be an even integer. Let $G$ be a $(P_k, \mathcal{C}_{<k-1})$-free graph, meaning $G$ contains neither an induced path on $k$ vertices nor any induced odd cycle of length strictly less than $k-1$. 
Let $C = (a_0, a_1, \dots, a_{k-2})$ be an induced cycle of length $k-1$ in $G$. Since $k$ is even, the length of $C$ is odd.

We classify the vertices of $V(G) \setminus V(C)$ that have neighbors in $C$ into specific sets. All index arithmetic is performed modulo $k-1$.
\begin{itemize}
    \item $A_i$: The set of vertices in $G \setminus C$ adjacent to only $a_i$.
    \item $B_i$: The set of vertices in $G \setminus C$ adjacent exactly to $a_{i-1}$ and $a_{i+1}$.
    \item Let $G_i$ be the set of vertices at distance 2 from the cycle $C$ that are adjacent to $B_i$. By definition vertices in $G_i$ are independent of the vertices of $C$.
\end{itemize}

\begin{lemma}[Mutual Exclusion of $A_i$ and $A_{i+2}$]\label{lem:mutual-exclusion-A}
    For any index $i$, if $A_i \neq \emptyset$, then $A_{i+2} = \emptyset$.
\end{lemma}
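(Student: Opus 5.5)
The plan is a direct contradiction argument: take one vertex from each set and exhibit either a short induced odd cycle or an induced $P_k$. Suppose $A_i \neq \emptyset$ and $A_{i+2} \neq \emptyset$, and pick $x \in A_i$ and $y \in A_{i+2}$.

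First, $x \neq y$, since $N(x)\cap V(C)=\{a_i\}$ and $N(y)\cap V(C)=\{a_{i+2}\}$. These are different vertices because $k-1 \ge 7$, so $i+2 \not\equiv i \pmod{k-1}$. I then split on whether $xy$ is an edge.

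\emph{Case $xy \in E(G)$.} Consider $(x, a_i, a_{i+1}, a_{i+2}, y, x)$.
\begin{itemize}
\item All five pairs used are edges.
\item There is no chord: $x$ sees only $a_i$ on $C$, $y$ sees only $a_{i+2}$, and $a_ia_{i+2}\notin E(G)$ because $C$ is an induced cycle of length $k-1\ge 7$.
\end{itemize}
So this is an induced $C_5$. Since $k \ge 8$, we have $5 < k-1$, so $C_5 \in \mathcal{C}_{<k-1}$, which is forbidden.

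\emph{Case $xy \notin E(G)$.} Go around $C$ the long way from $a_i$ to $a_{i+2}$, avoiding $a_{i+1}$, and consider
\[
P = (x,\ a_i,\ a_{i-1},\ a_{i-2},\ \dots,\ a_{i+3},\ a_{i+2},\ y).
\]
\begin{itemize}
\item Vertex count: the cycle part is $V(C)\setminus\{a_{i+1}\}$, which has $k-2$ vertices, so $P$ has $k$ vertices in total.
\item The cycle part is an induced path, because deleting one vertex from an induced cycle leaves an induced path.
\item $x$ contributes no chord, since its only neighbor on $C$ is $a_i$.
\item $y$ contributes no chord, since its only neighbor on $C$ is $a_{i+2}$.
\item The end pair $xy$ is a non-edge by the case assumption.
\end{itemize}
Hence $P$ is an induced $P_k$, contradicting $P_k$-freeness. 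In both cases we reach a contradiction, so $A_{i+2}=\emptyset$.

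There is no real obstacle here. The only points needing care are the bookkeeping that the long arc has exactly $k-2$ vertices, and the use of $k \ge 8$ both to forbid the $C_5$ and to guarantee $a_ia_{i+2}\notin E(G)$.
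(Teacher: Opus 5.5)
Your proof is correct and follows the paper's argument. It uses the same case split on whether $xy$ is an edge: the long arc of $C$ avoiding $a_{i+1}$ gives an induced $P_k$ when $xy\notin E(G)$, and $(x,a_i,a_{i+1},a_{i+2},y)$ gives an induced $C_5$, forbidden since $k\ge 8$, when $xy\in E(G)$.
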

\begin{proof}
    Suppose, for the sake of contradiction, that there exist vertices $x \in A_i$ and $y \in A_{i+2}$. We consider two cases based on the adjacency of $x$ and $y$:

    \textbf{Case 1: $x$ and $y$ are not adjacent.}
    Consider the path $P$ that traverses the cycle $C$ the ``long way'' around from $a_i$ to $a_{i+2}$. 
    Let $P = (x, a_i, a_{i-1}, \dots, a_{i+3}, a_{i+2}, y)$. 
    The segment of the cycle from $a_i$ backwards to $a_{i+2}$ contains exactly $(k-1) - 2 + 1 = k-2$ vertices. Adding the endpoints $x$ and $y$ brings the total number of vertices in $P$ to exactly $k$. 
    By definition, $x$ is adjacent only to $a_i$ and $y$ is adjacent only to $a_{i+2}$. Because $C$ is an induced cycle, there are no chords between the vertices of $C$. Since $x$ and $y$ are not adjacent to each other or to any internal vertices of the cycle path, $P$ is an induced $P_k$. This contradicts the assumption that $G$ is $P_k$-free.
    
    \textbf{Case 2: $x$ and $y$ are adjacent.}
    Consider the sequence of vertices $Z = (x, y, a_{i+2}, a_{i+1}, a_i, x)$. 
    This sequence forms a cycle of length 5. Since $x$ and $y$ have no other neighbors on $C$, and $a_i, a_{i+1}, a_{i+2}$ form an induced $P_3$, $Z$ is an induced $C_5$. 
    Because $k \ge 8$ is an even integer, we have $k-1 \ge 7$. Thus, a cycle of length 5 is strictly smaller than $k-1$. Since $5$ is odd, $Z$ is a forbidden small odd cycle, yielding a contradiction.
    
    Since both cases lead to a contradiction, it is impossible for both $A_i$ and $A_{i+2}$ to be non-empty simultaneously.
\end{proof}

\begin{lemma}[Adjacency Restrictions for $B_i$]\label{lem:adj-restrictions-B}
    Let $b \in B_i$. The vertex $b$ can only be adjacent to vertices in the sets $A_i, A_{i-2},$ and $A_{i+2}$. It is independent of $A_j$ for all $j \notin \{i, i-2, i+2\}$.
\end{lemma}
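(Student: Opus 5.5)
The plan is to fix $b\in B_i$ (so $N(b)\cap V(C)=\{a_{i-1},a_{i+1}\}$) and $x\in A_j$ with $bx\in E(G)$, and show that $j\in\{i,i-2,i+2\}$. The only tool needed is that $G$ contains no induced odd cycle of length less than $k-1$. In particular $G$ is triangle-free, and every short closed walk through $b$, $x$ and an arc of $C$ must have even length or length at least $k-1$.

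First I rule out $j=i\pm1$. If $j=i+1$, then $x$, $b$ and $a_{i+1}$ form a triangle, since $x a_{i+1}$ is an edge by the definition of $A_{i+1}$. The case $j=i-1$ is symmetric. Both contradict $C_3$-freeness.

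For the remaining indices, consider the cycle $Z=(b,x,a_j,\dots,a_{i+1},b)$, taking the arc of $C$ from $a_j$ to $a_{i+1}$ that avoids $a_{i-1}$. Also consider the analogous cycle $Z'$ through $a_{i-1}$, using the other arc. Let $d=\mathrm{dist}_C(a_j,a_{i+1})$ along the chosen arc. Then $|Z|=d+3$. I will check that $Z$ is induced: $x$ has only the neighbor $a_j$ on $C$; $b$ has only $a_{i-1},a_{i+1}$ on $C$, and $a_{i-1}$ is off the arc; and $C$ is chordless. Similarly $|Z'|=d'+3$, where $d'$ is the distance from $a_j$ to $a_{i-1}$ along the arc avoiding $a_{i+1}$. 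Taking $j$ on the side of $a_{i+1}$ away from $a_i$, the arcs satisfy $d+d'=(k-1)-2$, so $d+d'=k-3$, which is odd. Hence exactly one of $|Z|,|Z'|$ is odd. For that odd one, say $Z$, we need $d+3\ge k-1$, i.e.\ $d\ge k-4$, which forces $d'\le 1$. Similarly, if $Z'$ is odd then $d\le 1$. So $a_j$ lies within distance $1$ of $a_{i-1}$ or $a_{i+1}$ on the far side. That gives $j\in\{i+2,i-2\}$ or $j\in\{i\pm1\}$, and the latter was already excluded.

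The remaining bookkeeping is to also allow $j=i$; here no contradiction arises, and $(b,x,a_i)$ is simply a path. I also need to handle the degenerate arcs where $a_j$ coincides with an endpoint: when $d=0$, $Z$ is a triangle, which is covered above. The main obstacle is getting the parity and arc-length accounting right modulo $k-1$, and verifying inducedness of whichever cycle is odd. As a fallback in case the parity computation misfires for some $j$, I can use a direct induced path instead. If $j$ is far from both $a_{i\pm1}$, then $(x,a_j,a_{j+1},\dots)$ extended along $C$ with $b$ gives an induced path, and I would check whether it has $k$ vertices. I would try the cycle-parity argument first, since it uses only the short-odd-cycle hypothesis and seems cleanest.
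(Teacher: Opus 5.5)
Your proof is correct and takes essentially the paper's route: you rule out each forbidden $A_j$ by exhibiting an induced odd cycle through $b$, $x$ and an arc of $C$ whose length is less than $k-1$, using a triangle when $j=i\pm1$. Your two-arc parity count is a rigorous version of the paper's ``extending the logic'' step: since $d+d'=k-3$ is odd, one of $Z,Z'$ is odd, and it has length at most $k-2$ unless $\min(d,d')=1$. The paper checks only $A_{i+3}$ explicitly and leaves $A_{i\pm1}$ implicit.
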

\begin{proof}    
Let $b \in B_i$. By definition, $b$ is adjacent to exactly $a_{i-1}$ and $a_{i+1}$ on the cycle. We test the potential adjacencies between $b$ and vertices in various $A$-sets:
    \begin{itemize}
        \item \textbf{Adjacency to $A_i$:} Suppose $b$ is adjacent to $x \in A_i$. The sequence $(b, x, a_i, a_{i-1}, b)$ forms an induced $C_4$. Because this cycle has an even length, it does not violate the $(\mathcal{C}_{<k-1})$-free property. Thus, this edge is permissible.
        \item \textbf{Adjacency to $A_{i+2}$:} Suppose $b$ is adjacent to $y \in A_{i+2}$. The sequence $(b, y, a_{i+2}, a_{i+1}, b)$ similarly forms an induced $C_4$, which is permissible. By symmetry, an edge to $z \in A_{i-2}$ is also permissible via the cycle $(b, z, a_{i-2}, a_{i-1}, b)$.
        \item \textbf{Adjacency to $A_{i+3}$:} Suppose $b$ is adjacent to $w \in A_{i+3}$. We can construct the cycle $Z = (b, w, a_{i+3}, a_{i+2}, a_{i+1}, b)$. This forms an induced cycle of length 5. As established previously, for $k \ge 8$, $C_5$ is a forbidden small odd cycle. Thus, $b$ cannot be adjacent to any vertex in $A_{i+3}$.
    \end{itemize}
    By extending the logic of the third case, if $b$ is adjacent to any vertex in $A_j$ where the distance between $a_j$ and the anchor vertices $\{a_{i-1}, a_{i+1}\}$ is greater than 1, it will invariably form an induced odd cycle of length at least 5 but strictly less than $k-1$. Therefore, $B_i$ must be independent of all sets $A_j$ except those at distance at most 1 from its anchors, namely $A_i, A_{i-2}$, and $A_{i+2}$.
\end{proof}

\begin{lemma}\label{lem:Gi-B-i+1-even}
    No vertex in $G_i$ is adjacent to a vertex in $B_{i+1}$. Moreover, $G_i$ cannot be adjacent to any $B_j$ for $j>i+2$.
\end{lemma}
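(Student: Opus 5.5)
Fix $g \in G_i$ with a neighbour $b \in B_i$, so that $b$ is adjacent exactly to $a_{i-1}$ and $a_{i+1}$ on $C$, and $g$ has no neighbour on $C$. The plan is to assume that $g$ also has a neighbour $b' \in B_j$ and then exhibit either a forbidden short odd cycle or an induced $P_k$. The basic tool is the closed walk
\[ g \to b \to a_{i+1} \to \cdots \to a_{j-1} \to b' \to g \]
together with the complementary walk that runs around the other side of $C$. Before this, note that $b$ and $b'$ are non-adjacent whenever $j \neq i$ and $b' \neq b$, since otherwise $g,b,b'$ would form a $C_3$.

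\textbf{Case $j=i+1$.} Here $b$ is adjacent to $a_{i+1}$ and $b'$ is adjacent to $a_i$ and $a_{i+2}$. The sequence $(g,b,a_{i+1},a_{i+2},b',g)$ is a closed walk on five vertices. All of its consecutive pairs are edges, and the remaining pairs need checking:
\begin{itemize}
\item $g$ has no neighbours on $C$;
\item $b$ is not adjacent to $a_{i+2}$, because $N(b)\cap V(C)=\{a_{i-1},a_{i+1}\}$;
\item $b'$ is not adjacent to $a_{i+1}$;
\item $bb'\notin E(G)$, by the triangle remark above.
\end{itemize}
So this is an induced $C_5$. Since $k\ge 8$ gives $5<k-1$, this contradicts the hypothesis that $G$ is $\mathcal{C}_{<k-1}$-free, exactly as in Lemma~\ref{lem:mutual-exclusion-A}. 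The same argument handles $j=i-1$.

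\textbf{Case $j=i+d$ with $3\le d\le (k-1)/2$, indices on the short side.} I would build the cycle
\[ g,\; b,\; a_{i+1},\, a_{i+2},\, \dots,\, a_{j-1},\; b',\; g. \]
It has $(j-1)-(i+1)+1+3=d+2$ vertices. It is induced, because $b$ and $b'$ each meet $C$ only in their two anchors and $bb'\notin E(G)$. If $d$ is odd, this is an odd induced cycle of length $d+2<k-1$, a contradiction.

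If $d$ is even, I would use parity on the other side instead. The cycle through $a_{i-1}, a_{i-2},\dots,a_{j+1}$ has length $(k-1)-d+2$, which is odd because $k-1$ is odd. This is short enough to be forbidden unless $d$ is very small. When it is too long to be forbidden, I would instead take an induced path such as
\[ g,\; b,\; a_{i-1},\, a_{i-2},\, \dots \]
running around the long way, which has enough vertices to be an induced $P_k$. Its induced-ness follows from the anchor restrictions of $b$ and from $g$ having no cycle neighbours.

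\textbf{Main obstacle.} I expect the obstacle to be the intermediate index $j=i+2$, which the statement deliberately leaves open, together with the bookkeeping for general $j$. Both the short and the long cycle through $g$ may be even, so the argument must fall back on a $P_k$ obtained by extending along $C$ from $g$. I would first try to show that this extension is chordless and verify that the vertex count reaches $k$. Finally, the condition "$j>i+2$" must be read modulo $k-1$, so the symmetric range $j<i-2$ must be handled by reflecting the cycle.
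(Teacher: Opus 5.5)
Your proposal is correct and follows the paper's route. For $j=i\pm1$ you use an induced $C_5$, closing it through $a_{i+1},a_{i+2}$ where the paper uses $a_{i-1},a_i$. For larger separation you use a short induced odd cycle, which you make precise by noting that the two cycles through $g$ have lengths $d+2$ and $k+1-d$ of opposite parity; the paper only asserts this in one line.

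Your hedged $P_k$ fallback is never needed and can be dropped. For even $d\ge 4$ the long-side cycle already has odd length $k+1-d\le k-3<k-1$, so it is always forbidden. The only separation where neither cycle is forbidden is $d=2$, which the statement excludes.
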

\begin{proof}
    Suppose, for the sake of contradiction, that there exists an edge $g_i b_{i+1}$ with $g_i \in G_i$ and $b_{i+1} \in B_{i+1}$.
    By the definition of $G_i$, $g_i$ must be adjacent to some $b_i \in B_i$.
    Recall the specific neighborhoods on the cycle $C$:
    \begin{itemize}
        \item $b_i$ is adjacent to $\{a_{i-1}, a_{i+1}\}$.
        \item $b_{i+1}$ is adjacent to $\{a_i, a_{i+2}\}$.
    \end{itemize}
    Consider the cycle of vertices $Z = (g_i, b_i, a_{i-1}, a_i, b_{i+1}, g_i)$, which has length 5.
    We exhaustively check for chords to determine if $Z$ is induced:
    \begin{itemize}
        \item $b_i b_{i+1}$: If this edge exists, the vertices $\{g_i, b_i, b_{i+1}\}$ form an induced $C_3$. Because $G$ is $C_3$-free, $b_i b_{i+1} \notin E(G)$.
        \item $g_i a_{i-1}$ and $g_i a_i$: Since $g_i$ is strictly at distance 2 from $C$, these edges cannot exist.
        \item $b_i a_i$ and $b_{i+1} a_{i-1}$: By the definitions of $B_i$ and $B_{i+1}$, these edges do not exist.
    \end{itemize}
    Since no chords can exist, $Z$ is an induced $C_5$. For $k \ge 8$, $C_5$ is a forbidden small odd cycle. This contradiction implies that $G_i$ must be independent of $B_{i+1}$. If there is an edge from a vertex in $G_i$ to $B_{j}$ with $j>i+2$ we get a shorter odd cycle $< k-1$. 
    \end{proof}

\begin{lemma}\label{lem:Gi-Gi+1-even}
    No vertex in $G_i$ is adjacent to a vertex in $G_{j}$. 
\end{lemma}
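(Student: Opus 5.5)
The plan is to show directly that a vertex of $G_i$ can have no neighbour at distance $2$ from $C$ at all. This is stronger than the statement and does not depend on the index $j$. I will follow the pattern of part~3 of the proof of Theorem~\ref{tm:k-odd-main}: start with a hypothetical edge, extend it along a long arc of $C$, and obtain an induced $P_k$.

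\textbf{Setup.} Suppose for contradiction that $g\in G_i$ and $g'\in G_j$ are adjacent; the case $j=i$ is allowed. By definition of $G_i$, $g$ has a neighbour $b\in B_i$, so $N(b)\cap V(C)=\{a_{i-1},a_{i+1}\}$. Consider the sequence
\[
P=(g',\ g,\ b,\ a_{i+1},\ a_{i+2},\ \dots,\ a_{i-2}),
\]
which walks from $a_{i+1}$ forward around $C$ to $a_{i-2}$, omitting $a_{i-1}$ and $a_i$. Since $|V(C)|=k-1$, the arc $a_{i+1},\dots,a_{i-2}$ has $(k-1)-2=k-3$ vertices, so $P$ has exactly $3+(k-3)=k$ vertices. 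Consecutive pairs are edges by construction.

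\textbf{Checking that $P$ is induced.}
\begin{itemize}
\item The arc is an induced path, because $C$ is an induced cycle and the arc omits two consecutive vertices of $C$.
\item $b$ has no neighbour on the arc other than $a_{i+1}$, since its only neighbour on $C$ besides $a_{i+1}$ is $a_{i-1}$, which is not on $P$.
\item $g$ and $g'$ lie at distance $2$ from $C$, so neither has a neighbour on the arc.
\item $g'b\notin E(G)$, since otherwise $\{g,g',b\}$ is a triangle, and $C_3$ is forbidden because $3<k-1$.
\end{itemize}
Hence $P$ is an induced $P_k$, contradicting $P_k$-freeness.

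\textbf{Main obstacle.} The only delicate point is the vertex count: the arc must contain exactly $k-3$ vertices while still avoiding $a_{i-1}$, which is what keeps $b$ chord-free. This works precisely because $C$ has length $k-1$, and that length is available since $k\ge 8$.

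A first attempt might instead close an odd walk through $g,g',b,b'$ and an arc of $C$, where $b'\in B_j$ is a neighbour of $g'$, and invoke the absence of short odd cycles. That route fails in general, because the even-length arc can be nearly all of $C$, so the resulting odd cycle need not be short. I will therefore use the induced-path argument above, which also settles $j=i$ and all other $j$ uniformly.
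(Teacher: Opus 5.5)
Your proof is correct and is essentially the paper's argument: extend the edge $gg'$ through a $B_i$-neighbour $b$ along a long arc of $C$ to get an induced $P_k$. Your arc, which omits both $a_{i-1}$ and $a_i$, is actually more careful than the paper's path. The paper's path runs from $a_{i+1}$ all the way around to $a_{i-1}$, so it contains the chord $b_ia_{i+1}$; dropping $a_{i+1}$ there gives the mirror image of your $k$-vertex path.
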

\begin{proof}
    Suppose, for the sake of contradiction, that there exists an edge $g_i g_{j}$ with $g_i \in G_i$ and $g_{j} \in G_{j}$.
    By definition, there exist $b_i \in B_i$ and $b_{j} \in B_{j}$ such that $g_i b_i \in E(G)$ and $g_{j} b_{j} \in E(G)$.
    Observe that $b_i$ is adjacent to $a_{i-1}$. 
    Now the path $$P=(a_{i+1},a_{i+2},...,a_{k-2},a_0,\dots, a_{i-1},b_i,g_i,g_j)$$ has length $k$, a contradiction. 
   \end{proof}

\begin{lemma}\label{lem:Gj-Ai-even}
    No vertex in $G_j$ is adjacent to a vertex in $A_{i}$. 
\end{lemma}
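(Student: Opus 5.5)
The plan is to argue by contradiction. Suppose $g\in G_j$ is adjacent to $x\in A_i$. By the definition of $G_j$ there is $b\in B_j$ with $gb\in E(G)$, where $N(b)\cap V(C)=\{a_{j-1},a_{j+1}\}$, $N(x)\cap V(C)=\{a_i\}$, and $g$ has no neighbour on $C$. Two non-edges come for free: $xb\notin E(G)$, since otherwise $\{g,x,b\}$ is a $C_3$; and the only chords from $C$ to $b$ or $x$ are the defining ones. I will then close two cycles through the walk $b,g,x,a_i$, one along each arc of $C$ back to a neighbour of $b$, and show that one of them is a forbidden short odd induced cycle. In the remaining configurations I will instead exhibit an induced $P_k$.

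\textbf{Parity count.} Remove $a_j$ from $C$, leaving the path from $a_{j+1}$ to $a_{j-1}$ with $k-3$ edges. Assume first that $a_i\neq a_j$. Then $a_i$ splits this path into an arc from $a_i$ to $a_{j+1}$ of length $p$ and an arc from $a_i$ to $a_{j-1}$ of length $q$, with $p+q=k-3$. The cycles
\[
Z_1=(g,x,a_i,\dots,a_{j+1},b,g),\qquad Z_2=(g,x,a_i,\dots,a_{j-1},b,g)
\]
have lengths $p+4$ and $q+4$. Their sum $k+5$ is odd because $k$ is even, so exactly one of them is odd. Each $Z_t$ is induced: $g$ has no neighbour on $C$; $x$ sees only $a_i$; $b$ sees only $a_{j\pm1}$, and the arc used by $Z_1$ avoids $a_{j-1}$ (symmetrically for $Z_2$); $xb,gb$ are handled as above; and $C$ is chordless. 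If the odd cycle has length at most $k-3$, it contradicts $\mathcal{C}_{<k-1}$-freeness.

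\textbf{Remaining cases.} If $a_i=a_j$, use $(g,x,a_j,a_{j-1},b,g)$ instead. This is an induced $C_5$, which is forbidden for $k\ge 8$, matching the argument of Lemma~\ref{lem:Gi-B-i+1-even}. Otherwise the odd cycle has length $k-1$ or $k+1$, which forces $\{p,q\}$ to be $\{0,k-3\}$ or $\{1,k-4\}$, i.e.\ $i\in\{j\pm1,j\pm2\}$. For these I will build an induced $P_k$ out of $x,g,b$ and a long arc of $C$ started at $a_{j\pm1}$ and running away from $a_i$. For instance, if $i=j+1$, take $(x,g,b,a_{j-1},a_{j-2},\dots)$ continued along $C$ for $k-3$ cycle vertices, stopping just before $a_{j+1}$. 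The chord check for this path is the same as above.

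\textbf{Main obstacle.} I expect the boundary cases $i\in\{j\pm1,j\pm2\}$ to be the hardest part. There the vertex count of the path must be checked carefully so that it reaches exactly $k$ without touching $a_i$ or the second neighbour of $b$. If $i=j\pm2$ does not yield $k$ vertices cleanly, I will fall back on Lemma~\ref{lem:mutual-exclusion-A} and Lemma~\ref{lem:adj-restrictions-B} to restrict which $A$-sets can be non-empty near $j$.
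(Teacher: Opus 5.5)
There is a genuine gap in your case split after the parity count. If the odd cycle among $Z_1,Z_2$ has length $k-1$, then its arc has length $k-5$ and the other arc has length $2$. So $\{p,q\}=\{2,k-5\}$, i.e.\ $i=j\pm3$, not $\{1,k-4\}$. The case $\{p,q\}=\{1,k-4\}$ ($i=j\pm2$) is already closed by your parity step, because one of the $Z_t$ is then an induced $C_5$.

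The case you actually leave open is $i=j\pm3$. For $i=j+3$, $Z_1$ is a $C_6$ and $Z_2=(g,x,a_{j+3},a_{j+4},\dots,a_{j-1},b,g)$ is an induced $C_{k-1}$, which is allowed, so no contradiction arises. Your proposed fix does not cover this case either. A path $(x,g,b,a_{j-1},a_{j-2},\dots)$ cannot contain $a_{j+3}$ (adjacent to $x$) or $a_{j+1}$ (adjacent to $b$). It therefore stops at $a_{j+4}$ and has only $k-2$ vertices, and starting at $a_{j+1}$ is worse. The fallback to Lemma~\ref{lem:mutual-exclusion-A} and Lemma~\ref{lem:adj-restrictions-B} does not help: neither forbids $A_{j\pm3}\neq\emptyset$, and the only $b$--$A$ non-adjacency they supply you already have from the triangle argument.

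The missing observation is that $b$ is not needed at all. For any $i$, take $(g,x,a_i,a_{i+1},\dots,a_{i-2})$, that is, $g$, $x$, and the arc of $C$ omitting only $a_{i-1}$. It has $2+(k-2)=k$ vertices and is induced:
\begin{itemize}
\item $g$ has no neighbour on $C$;
\item $x$ has only $a_i$;
\item the arc is chordless because $a_ia_{i-2}\notin E(G)$ when $k-1\ge 7$.
\end{itemize}
This is the paper's proof. It handles every pair $(i,j)$ uniformly and makes both the parity count and the boundary-case analysis unnecessary. If you want to keep $b$, the case $i=j+3$ can also be closed by the induced path $(a_j,a_{j+1},b,g,x,a_{j+3},\dots,a_{j-2})$, but that is more work than the direct path.
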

\begin{proof}
    Suppose, for the sake of contradiction, that there exists an edge $x_i g_{j}$ with $x_i \in A_i$ and $g_{j} \in G_{j}$.
    By definition, there exist $a_i \in C$ such that $a_i x_i \in E(G)$. 
    Now the path $P=(g_j,x_i, a_i, a_{i+1},a_{i+2},...,a_{k-2},a_0,\dots, a_{i-2})$ has length $k$, a contradiction. 
\end{proof}

\begin{lemma}\label{lem:Gj-Bi-even}
    No vertex in $G_i$ is adjacent to a vertex in $B_{j}$ except $i=j \pm 2$. 
\end{lemma}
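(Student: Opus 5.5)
The plan is a parity argument on two short cycles through the edges $gb_i$ and $gb_j$, in the spirit of Lemma~\ref{lem:Gi-B-i+1-even}. Let $g\in G_i$ be adjacent to some $b_j\in B_j$. By definition of $G_i$, $g$ also has a neighbour $b_i\in B_i$. The statement tacitly allows $j=i$ (that is how $G_i$ is defined), so I read it as: $j\in\{i,i\pm 2\}$. Set $d$ to be the cyclic distance between $i$ and $j$ on the index cycle $\mathbb{Z}_{k-1}$. By the symmetry $i\mapsto -i$ I may assume $j=i+d$ with $1\le d\le (k-2)/2$. The case $d=1$ is exactly Lemma~\ref{lem:Gi-B-i+1-even}, so it remains to rule out $3\le d\le (k-2)/2$.

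First, $b_ib_j\notin E(G)$, since otherwise $\{g,b_i,b_j\}$ is a triangle. Also $g$ has no neighbour on $C$, and each $b_t$ meets $C$ only in $a_{t-1},a_{t+1}$. I will build two closed walks through $g$:
\[
Z_1=(g,\ b_i,\ a_{i+1},a_{i+2},\dots,a_{j-1},\ b_j,\ g),
\]
which uses the short arc of $C$ from $a_{i+1}$ to $a_{j-1}$. It has $d-1$ cycle vertices, so its length is $d+2$.
\[
Z_2=(g,\ b_i,\ a_{i-1},a_{i-2},\dots,a_{j+1},\ b_j,\ g),
\]
which uses the complementary arc from $a_{i-1}$ backwards to $a_{j+1}$. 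This arc has $(k-1)-d-1$ vertices, so $Z_2$ has length $k+1-d$.

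Both are genuine cycles, because the vertices are distinct: $g\notin C$, $b_i\neq b_j$, and the arcs are nonempty once $d\ge 2$. Their lengths sum to $k+3$, which is odd because $k$ is even. Hence exactly one of $Z_1,Z_2$ is an odd cycle.

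Next I check that both lengths are strictly below $k-1$. For $Z_1$, $d+2<k-1$ holds since $d\le (k-2)/2<k-3$ when $k\ge 8$. For $Z_2$, $k+1-d<k-1$ holds precisely when $d>2$. This is exactly why $d=2$, i.e.\ $j=i\pm 2$, survives as the exception in the statement.

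To finish, I will not verify inducedness chord by chord. Instead I will use the standard fact that any odd cycle of length $L$ contains an induced odd cycle of length at most $L$: a chord splits the cycle into two shorter cycles whose lengths sum to $L+2$, so one of them is odd, and one iterates. So the odd member of $\{Z_1,Z_2\}$ yields an induced odd cycle of length less than $k-1$, contradicting $\mathcal{C}_{<k-1}$-freeness. One could also check directly that the only possible chords are between $b_i$ or $b_j$ and cycle vertices, and these are excluded by the definitions of $B_i$ and $B_j$; but the parity reduction is cleaner.

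The main obstacle is the bookkeeping of arc lengths modulo $k-1$. I must make sure the complementary arc used in $Z_2$ starts at the correct neighbour ($a_{i-1}$ of $b_i$) and ends at the correct neighbour ($a_{j+1}$ of $b_j$), so that the two lengths really have opposite parity. I must also handle the boundary case $d=(k-2)/2$, where both arcs have nearly equal length, and confirm that it still satisfies both strict inequalities.
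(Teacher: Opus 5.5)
Your proof is correct and rests on the same idea as the paper's: close a cycle through $b_i,g,b_j$ with an arc of $C$ and obtain a forbidden short odd cycle. The difference is in how the arc is chosen, and on this point your version is more complete than the paper's.

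The paper assumes $i<j$ and uses only one cycle: $(a_{i-1},b_i,g_i,b_j,a_{j+1})$ closed by the arc of $C$ from $a_{j+1}$ to $a_{i-1}$. It asserts that this is ``invariably'' an induced odd cycle of length less than $k-1$. That cycle has length $k+1-(j-i)$, so it is odd only when $j-i$ is even. For example, with $k=8$, $i=0$, $j=3$ it is a $C_6$ and gives no contradiction, whereas your $Z_1=(g,b_0,a_1,a_2,b_3,g)$ is an induced $C_5$.

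Your $Z_1$ and $Z_2$ are exactly the paper's cycle taken in the two orientations of $C$. Your observation that their lengths $d+2$ and $k+1-d$ sum to the odd number $k+3$ is the step the paper leaves implicit. Equivalently, the paper's argument only works once the orientation is chosen so that $j-i$ is even.

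Your version also shows cleanly why $d=2$ is the only exception: there $Z_1$ is a $C_4$ and $Z_2$ has length exactly $k-1$. Your reading that $j=i$ is tacitly permitted is the only sensible one, given the definition of $G_i$.

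One small simplification: the reduction from an odd cycle to an induced odd cycle is valid but not needed. Both $Z_1$ and $Z_2$ are already induced, because $g$ has no neighbour on $C$ and $b_ib_j\notin E(G)$. In addition, the unused anchors of $b_i$ and $b_j$ lie off the respective arcs when $3\le d\le (k-2)/2$: these are $a_{i-1},a_{j+1}$ for $Z_1$ and $a_{i+1},a_{j-1}$ for $Z_2$.
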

\begin{proof}
    suppose $g_ib_j \in E(G)$ where $g_i \in G_i$ and $b_j \in B_j$. By definition, $g_i$ must be connected to a vertex $b_i \in B_i$. WLOG, let $i<j$. then $P_x=(a_{i-1}, b_i, g_i, b_j, a_{j+1})$ along with path in $C$ with anchor  vertices $\{a_{i-1},a_{j+1}\}$ will invariably form an induced odd cycle of length at least 5 but strictly less than $k-1$. Therefore, $G_i$ must be independent of all sets $B_j$ except those at distance exactly 2 from itself.
\end{proof}

\begin{lemma}\label{lem:Bi-Bi+1-even}
    $B_i$ is adjacent only to vertices in $B_{i\pm1}$. 
\end{lemma}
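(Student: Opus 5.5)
We need to show that a vertex $b\in B_i$ has no neighbour in $B_j$ for any $j\notin\{i-1,i+1\}$. Here $B_i$ and $B_j$ both refer to the same induced cycle $C=(a_0,\dots,a_{k-2})$ of odd length $k-1$, and the graph is $(P_k,\mathcal{C}_{<k-1})$-free. The plan is a short case analysis on the cyclic distance $d=|i-j|$, taken modulo $k-1$ with $0\le d\le (k-1)/2$. For each forbidden value of $d$ we exhibit either a triangle or a short induced odd cycle, using only the anchors $a_{i\pm1}$ and $a_{j\pm1}$, in the same way as Lemma~\ref{lem:adj-restrictions-B} and Lemma~\ref{lem:Gi-B-i+1-even}.

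The first two cases are immediate.
\begin{enumerate}
\item If $d=0$, then $b,b'\in B_i$ share the neighbour $a_{i-1}$, so an edge $bb'$ gives a $C_3$. Hence $B_i$ is independent.
\item If $d=2$, say $j=i+2$, then $b$ and $b'$ share the neighbour $a_{i+1}$, and again $bb'$ gives a triangle.
\end{enumerate}
These two cases use only $C_3$-freeness.

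For $d\ge 3$, suppose $bb'\in E(G)$ with $b\in B_i$ and $b'\in B_j$, where $j=i+d$. Take the shorter arc of $C$ from $a_{i+1}$ to $a_{j-1}$; it has $d-2$ edges. Closing it with $b$ and $b'$ gives the cycle
\[
Z=(b,a_{i+1},a_{i+2},\dots,a_{j-1},b',b),
\]
of length $d+1$. Closing the long arc from $a_{j+1}$ back to $a_{i-1}$ instead gives a cycle $Z'$ of length $(k-1)-(d+2)+3=k-d$. Because $k$ is even, exactly one of $d+1$ and $k-d$ is odd.

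Next we check that $Z$ and $Z'$ are induced.
\begin{itemize}
\item $C$ is induced, so the arc vertices contribute no chords among themselves.
\item The only extra neighbours of $b$ on $C$ are $a_{i-1}$ and $a_{i+1}$, and the only extra neighbours of $b'$ are $a_{j-1}$ and $a_{j+1}$. For $d\ge3$ these do not lie in the interior of the arc used by $Z$ or $Z'$.
\item One boundary case needs care. When $d=(k-1)/2$ the two arcs have comparable length, and an anchor of one vertex could fall on the far arc. Working with whichever arc avoids the other anchor handles this.
\end{itemize}

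With both cycles induced, consider whichever of them has odd length. If $d+1$ is odd, then $d+1\le (k-1)/2+1<k-1$ for $k\ge 8$, so $Z$ is a forbidden short odd induced cycle. If $k-d$ is odd, then $d\ge 3$ gives $k-d\le k-3<k-1$, so $Z'$ is forbidden. Either way we reach a contradiction, exactly as in the closing paragraph of Lemma~\ref{lem:adj-restrictions-B}.

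Finally, $B_i$ and $B_{i\pm1}$ are genuinely allowed to be adjacent. For example, with $b\in B_i$ and $b'\in B_{i+1}$, the cycle $(b,a_{i+1},b',a_i,a_{i-1},b)$ has length $5$, but it is not induced: $a_i$ and $a_{i+1}$ form a chord. The induced cycles through the edge $bb'$ that arise are $C_4$'s, namely $(b,a_{i+1},a_{i+2},b')$ and $(b,a_{i-1},a_i,b')$, and these are even.

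The main obstacle is the chord-freeness check for $Z$ and $Z'$, which involves index bookkeeping near the wrap-around. I would handle it by always choosing the arc whose interior avoids all four anchors $a_{i\pm1},a_{j\pm1}$, which is possible whenever $d\ge3$.
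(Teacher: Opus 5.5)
Your proof is correct, and it differs from the paper's in a way that matters. The paper uses a single cycle: the path $(a_{i-1},b_i,b_j,a_{j+1})$ closed by an arc of $C$ between the outer anchors. It asserts this is always an induced odd cycle of length between $5$ and $k-2$ unless $j=i+1$. The only arc that gives an induced cycle is your long arc, since the other one contains the chords to $a_{i+1}$ and $a_{j-1}$. So the paper's cycle is exactly your $Z'$, of length $k-d$.

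That length is even whenever $d$ is even. The paper's argument therefore does not actually handle $d=2$ or even $d\ge 4$. Its ``WLOG $i<j$'' also skips $d=0$, which is the independence of $B_i$.

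Your version fixes each of these:
\begin{itemize}
\item For $d\in\{0,2\}$, the triangles through a shared anchor ($a_{i-1}$, respectively $a_{i+1}$) finish the case.
\item For $d\ge 3$, you pair $Z'$ with the complementary inner-anchor cycle $Z$ of length $d+1$. Since $(d+1)+(k-d)=k+1$ is odd, one of the two is a short induced odd cycle. For instance, with $k=10$ and $d=4$ it is $Z$ (length $5$) that is forbidden, while $Z'$ has length $6$.
\end{itemize}

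Two minor points. First, the boundary case $d=(k-1)/2$ that you worry about cannot occur. Since $k-1$ is odd, $d\le (k-2)/2$. For every $3\le d\le (k-2)/2$, both $Z$ and $Z'$ are induced, which you can check directly from the anchor positions. So no choice of arc is needed, and none would be available anyway, because parity dictates which of the two cycles you must use.

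Second, in your closing remark, $(b,a_{i+1},b',a_i,a_{i-1},b)$ is not a cycle at all: $b'\in B_{i+1}$ is not adjacent to $a_{i+1}$. The two induced $C_4$'s you list afterwards are the correct observation. The remark is not needed for the lemma in any case.
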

\begin{proof}
    suppose $b_ib_j \in E(G)$ where $b_i \in B_i$ and $b_j \in B_j$. WLOG, let $i<j$. then $P_x=(a_{i-1}, b_i, b_j, a_{j+1})$ along with path in $C$ with anchor  vertices $\{a_{i-1},a_{j+1}\}$ will invariably form an induced odd cycle of length at least 5 but strictly less than $k-1$ unless $j=i+1$. Therefore, $B_i$ must be independent of all sets $B_j$ except $B_{i\pm1}$.
\end{proof}

\begin{lemma}\label{lem:Ai-Ai+3-even}
    $A_i$ is adjacent only to vertices in $A_{i\pm3}$. 
\end{lemma}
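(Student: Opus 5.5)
The plan is to take an edge $xy$ with $x\in A_i$ and $y\in A_j$, and write $d$ for the length of the shorter arc of $C$ between $a_i$ and $a_j$, so $0\le d\le (k-1)/2$. I will show that every value $d\neq 3$ gives a contradiction. Since $x$ and $y$ each have exactly one neighbour on $C$ and $C$ is induced, the two closed walks $x,y,a_j,\dots,a_i,x$ (one along each arc) are induced cycles. Their lengths are $d+3$ and $(k-1-d)+3=k+2-d$. These sum to $k+5$, which is odd since $k$ is even, so exactly one of the two cycles is odd.

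The case analysis on $d$ is as follows.
\begin{itemize}
\item[$d=0$.] Here $j=i$, and $\{x,y,a_i\}$ is a $C_3$, which is forbidden. This also shows each $A_i$ is independent.
\item[$d=2$.] This is excluded by Lemma~\ref{lem:mutual-exclusion-A}; alternatively, $x,y,a_{i+2},a_{i+1},a_i$ is an induced $C_5$, and $5<k-1$ because $k\ge 8$.
\item[$d\ge 4$, $d$ even.] The odd cycle has length $d+3\le (k-1)/2+3$, which is strictly less than $k-1$ exactly when $k>7$. It is therefore a forbidden short odd cycle.
\item[$d\ge 5$, $d$ odd.] The odd cycle is the long one, of length $k+2-d\le k-3<k-1$, again forbidden.
\item[$d=3$.] The short cycle has even length $6$ and the long one has length $k-1$, so nothing is violated. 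This is exactly the permitted adjacency $A_i$--$A_{i\pm3}$.
\end{itemize}

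The case I expect to be the real obstacle is $d=1$, say $j=i+1$, because the cycle argument alone does not rule it out: the short cycle is an even $C_4$ and the long cycle has length $k+1\ge k-1$. For this case I will instead exhibit a forbidden induced path. Take
\[
P=(y,\ x,\ a_i,\ a_{i-1},\ \dots,\ a_{i+2}),
\]
which walks around $C$ from $a_i$ the long way and omits only $a_{i+1}$. It has $2+(k-2)=k$ vertices. Its inducedness follows from three facts: $x$'s only neighbour on $C$ is $a_i$; $y$'s only neighbour on $C$ is $a_{i+1}$, which is not on $P$; and $C$ is chordless. Hence $P$ is an induced $P_k$, a contradiction.

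With all $d\ne 3$ excluded, $A_i$ can have neighbours only in $A_{i\pm 3}$.
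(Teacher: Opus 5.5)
Your proof is correct. Its core is the same parity argument the paper uses: the edge $xy$ and the two arcs of $C$ between $a_i$ and $a_j$ give two cycles whose lengths have odd sum, and the odd one is a forbidden short induced odd cycle.

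The paper, however, claims in one line that this works for every $j\neq i\pm 3$, and that is false for $j=i\pm1$. There the short cycle is an even $C_4$, and the long closed walk has length $k+1$ and carries the chord $a_ia_{i+1}$, so no forbidden cycle arises. In fact no argument using only the odd-cycle hypothesis can settle this case. The graph consisting of $C$ plus an ear $a_i\,x\,y\,a_{i+1}$ is $\mathcal{C}_{<k-1}$-free, yet it has an edge between $A_i$ and $A_{i+1}$. Your split on $d$ isolates this case, and your induced path $(y,x,a_i,a_{i-1},\dots,a_{i+2})$ on $k$ vertices supplies exactly the use of $P_k$-freeness that is needed. Your version is therefore strictly more complete than the paper's.

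One small correction: your opening claim that both closed walks are induced cycles holds only for $d\ge 2$. For $d=1$ the long walk has the chord $a_ia_{i+1}$, and for $d=0$ it degenerates. You only invoke the claim when $d\ge 2$, so nothing breaks, but you should state it with that restriction.
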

\begin{proof}
    suppose $x_ix_j \in E(G)$ where $x_i \in A_i$ and $x_j \in A_j$. WLOG, let $i<j$. then $P_x=(a_{i}, x_i, x_j, a_{j})$ along with path in $C$ with anchor  vertices $\{a_{i},a_{j}\}$ will invariably form an induced odd cycle of length at least 5 but strictly less than $k-1$ unless $j=i+3$. Therefore, $A_i$ must be independent of all sets $A_j$ except $A_{i\pm3}$.
\end{proof}

\begin{definition}\label{partioion-of-A_i}
For each $A_i$ we define $A^+_i$ be the set of vertices adjacent that are adjacent to a vertex in $A_{i+3} \cup B_{i+2}$ and $A^{-}_i$ be the set of vertices of $A_i$ that have some neighbors in $A_{i-3} \cup B_{i-2}$. Similarly  let $G^{+}_i$ be a set of vertices in $G_i$ that have some neighbor in $B_{i+2}$ and let $G^{-}_i$ be a set of vertices in $G_i$ that have neighbors in $B_{i-2}$.  Also, $A^+_i$ and $A^{-}_i$ may both also have neighbors in $B_i$.
\end{definition}

\begin{lemma}\label{A-disjointness}
For every $j$, there is no edge from $A^{-}_j$ to $A_{j+3} \cup B_{j+2}$ and there is no edge from $A^{+}_j$ to $A_{j-3} \cup B_{j-2}$. 
\end{lemma}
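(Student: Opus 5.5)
The plan is to argue by contradiction and to produce a short odd cycle. Two facts make this work. First, the shortest odd cycle of any graph is induced, so a $(P_k,\mathcal{C}_{<k-1})$-free graph has \emph{no} odd cycle, induced or not, of length less than $k-1$. Second, every closed walk of odd length $\ell$ contains an odd cycle of length at most $\ell$. Hence it suffices to exhibit an odd closed walk of length less than $k-1$, and we never have to check for chords.

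\textbf{Setup.} Suppose some $x\in A_j$ has a neighbor $u\in A_{j-3}\cup B_{j-2}$ and a neighbor $w\in A_{j+3}\cup B_{j+2}$. This is exactly the situation forbidden by both statements of the lemma: a vertex of $A^-_j$ with a neighbor in $A_{j+3}\cup B_{j+2}$, or a vertex of $A^+_j$ with a neighbor in $A_{j-3}\cup B_{j-2}$. In every case $u$ is adjacent to $a_{j-3}$ and $w$ is adjacent to $a_{j+3}$:
\begin{itemize}
\item if $u\in A_{j-3}$ this holds by definition, and if $u\in B_{j-2}$ its anchors are $a_{j-3},a_{j-1}$;
\item if $w\in A_{j+3}$ this holds by definition, and if $w\in B_{j+2}$ its anchors are $a_{j+1},a_{j+3}$.
\end{itemize}

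\textbf{Key step.} Let $Q$ be the arc of $C$ from $a_{j+3}$ to $a_{j-3}$ that avoids $a_j$. Since $C$ has $k-1$ vertices, $Q$ has $(k-1)-6=k-7$ edges; this is nonnegative because $k\ge 8$. Consider the closed walk $u,x,w,a_{j+3},Q,a_{j-3},u$. It has length $4+(k-7)=k-3$. Because $k$ is even, $k-3$ is odd, and $k-3<k-1$. By the first paragraph, this closed walk contains an odd cycle of length less than $k-1$, which contradicts the hypothesis on $G$. Therefore no vertex of $A_j$ has neighbors on both sides, and both claims of the lemma follow.

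\textbf{Expected obstacle.} The only delicate point is making sure the walk is legitimate. All its vertices and edges exist, so that part is fine. The one case needing a word is the small case $k=8$, where $k-1=7$ and the indices $j+3$ and $j-3$ are at distance $1$ on $C$. There $Q$ is a single edge and the walk is a closed $5$-walk, which is still odd and short. The degenerate case $u=w$ cannot occur, since $A_{j-3}\cup B_{j-2}$ and $A_{j+3}\cup B_{j+2}$ are disjoint: their members have different neighborhoods on $C$. I would present the argument through the closed-walk reduction rather than by verifying inducedness chord by chord, as the paper does in earlier lemmas, because the reduction makes all four combinations ($A$ or $B$ on each side) uniform.
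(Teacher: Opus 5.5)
Your proof is correct and uses the same construction as the paper: an odd cycle of length $4+(k-7)=k-3$ through $a_{j-3}$, the neighbor of $x$ on the minus side, $x$, its neighbor on the plus side, $a_{j+3}$, and the long arc of $C$. The only difference is the justification: you use the odd-closed-walk and shortest-odd-cycle-is-induced reduction, while the paper checks chords one by one. Your version also treats the $B_{j-2}$ and $B_{j+2}$ cases uniformly, whereas the paper works out only the $A$--$A$ case explicitly and calls the $B_{j+2}$ case symmetric.
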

\begin{proof}
Suppose, for the sake of contradiction, that there exists an edge between a vertex $x \in A^{-}_j$ and a vertex $z \in A_{j+3}$. By the definition of $A^{-}_j$, there must exist a vertex $w \in A_{j-3}$ such that $wx \in E(G)$. Consider the path $P_{ext} = (a_{j-3}, w, x, z, a_{j+3})$ in $G \setminus C$. This path has a length of exactly 4 edges.

We now construct a cycle $C'$ by concatenating $P_{ext}$ with the path along the main cycle $C$ from $a_{j+3}$ back to $a_{j-3}$. The distance between indices $j-3$ and $j+3$ along the cycle $C$ is exactly 6 edges in the forward direction. Thus, the backward path $P_{cyc}$ along the cycle has length:
\[ L_{cyc} = (k-1) - 6 = k-7 \]
The total length of the constructed cycle $C'$ is therefore:
\[ |C'| = |P_{ext}| + |P_{cyc}| = 4 + (k-7) = k-3 .\]

Since $k$ is an even integer, $k-3$ is odd. Furthermore, because $k \ge 8$, we have $k-3 < k-1$. The graph $G$ is defined as $(\mathcal{C}_{<k-1})$-free, which strictly forbids the existence of induced odd cycles with length strictly less than $k-1$. We must verify that $C'$ is an induced cycle:
\begin{itemize}
    \item \textbf{Internal Chords:} There are no chords within $\{w, x, z\}$ as $w \in A_{j-3}$ and $z \in A_{j+3}$ are at distance 6 on the cycle, and $G$ is $C_3$-free.
    \item \textbf{Cycle Chords:} Vertices $w, x$, and $z$ are only adjacent to their respective anchor vertices on $C$. The path $P_{cyc}$ does not contain the anchors $a_{j-2}, \dots, a_{j+2}$, ensuring no chords exist between the external path and the cycle segment.
\end{itemize}
Thus, $C'$ is an induced $C_{k-3}$, which is a contradiction.

A symmetric contradiction arises if we assume an edge exists between $x \in A^{-}_j$ and $y \in B_{j+2}$. In that case, the path through the anchor vertices would similarly form a forbidden odd cycle of length $k-3$. Consequently, $A^+_j$ and $A^-_j$ must be disjoint. \qed
\end{proof}

By similar argument as in in Lemma \ref{A-disjointness} we have the following lemma. 
\begin{lemma}\label{G-disjointness}
For every $j$, there is no edge from $G^{-}_j$ to $B_{j+2}$ and there is no edge from $G^{+}_j$ to $B_{j-2}$. 
\end{lemma}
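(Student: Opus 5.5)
The plan is to mirror the proof of Lemma~\ref{A-disjointness}: assume a forbidden edge and build an odd closed walk of length $k-3$ through $C$. Since $k-3<k-1$, this contradicts the $(\mathcal{C}_{<k-1})$-freeness of $G$. By the symmetry $i\mapsto -i$ of the indexing of $C$, it suffices to show there is no edge from $G^{-}_j$ to $B_{j+2}$. The second assertion, that $G^{+}_j$ has no neighbour in $B_{j-2}$, follows by the same argument.

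\textbf{Setup.} Suppose $g\in G^{-}_j$ is adjacent to $y\in B_{j+2}$.
\begin{itemize}
\item By Definition~\ref{partioion-of-A_i}, $g$ has a neighbour $b'\in B_{j-2}$, so $b'$ is adjacent to exactly $a_{j-3}$ and $a_{j-1}$ on $C$.
\item By definition of $B_{j+2}$, $y$ is adjacent to exactly $a_{j+1}$ and $a_{j+3}$.
\item Since $g$ lies at distance $2$ from $C$, it has no neighbour on $C$.
\end{itemize}
Consider the external path
\[
P_{ext}=(a_{j-3},\,b',\,g,\,y,\,a_{j+3}),
\]
which has $4$ edges. Close it with the arc $P_{cyc}$ of $C$ from $a_{j+3}$ back to $a_{j-3}$ that avoids $a_{j-2},\dots,a_{j+2}$. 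This arc has length $(k-1)-6=k-7\ge 1$ because $k\ge 8$. In particular $a_{j+3}\neq a_{j-3}$, and for $k=8$ the arc is a single edge of $C$. The resulting closed walk has length $4+(k-7)=k-3$, which is odd since $k$ is even, and is strictly less than $k-1$.

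\textbf{Extracting a short odd induced cycle.} I will avoid a delicate chord check by using a parity argument. Any closed walk of odd length $\ell$ contains an odd cycle of length at most $\ell$. A shortest odd cycle in a graph is always induced, since a chord would split it into two shorter cycles, one of which is odd. Hence $G$ would contain an induced odd cycle of length at most $k-3<k-1$, a contradiction.

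\textbf{Direct check (optional).} One can also verify that the closed walk is already an induced cycle.
\begin{itemize}
\item Its vertices are distinct, because $b',g,y\notin V(C)$ and $b'\neq y$ as $B_{j-2}\cap B_{j+2}=\emptyset$.
\item $b'y\notin E(G)$ by Lemma~\ref{lem:Bi-Bi+1-even}, since the indices differ by $4$. It is also excluded by $C_3$-freeness via $g$.
\item The only $C$-neighbours of $b'$ and $y$ other than the endpoints are $a_{j-1}$ and $a_{j+1}$, and both lie off $P_{cyc}$.
\item $g$ has no neighbours on $C$.
\end{itemize}

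The only step needing care is the index bookkeeping for the arc of $C$: it must omit $a_{j\pm1}$ and have the claimed length $k-7$. The rest is the routine parity argument above, exactly parallel to Lemma~\ref{A-disjointness}.
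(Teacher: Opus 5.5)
Your proof is correct and takes the same approach the paper intends when it says the lemma follows ``by similar argument as in Lemma~\ref{A-disjointness}'': go from $a_{j-3}$ through $b'\in B_{j-2}$, $g$ and $y\in B_{j+2}$ to $a_{j+3}$, then close along the $(k-7)$-arc of $C$ to get an odd cycle of length $k-3<k-1$. Your shortest-odd-cycle parity step is a useful addition, because it makes the contradiction independent of any chord check (and even of $g$ lying at distance $2$ from $C$).
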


Recall that no vertex outside $C$ can be adjacent to consecutive cycle vertices, as this would form an induced $C_3$. Furthermore, no vertex can be adjacent to two cycle vertices at a distance greater than 2, as this would form an induced odd cycle of length strictly less than $k-1$.

Thus, the sets $A_i$ and $B_i$ capture the only permissible adjacency patterns to $C$. The following Lemmas give structural properties of $G$, allowing us to prove Theorem \ref{thm:ring-structure}.

Let $k \ge 8$ be an even integer. Let $H$ be a ring graph consisting of partitions $V_0, V_1, \dots, V_{k-2}$ where edges exist only between consecutive sets $V_j$ and $V_{j+1}$ (with indices evaluated modulo $k-1$). Assume every vertex in $H$ lies on an induced $(k-1)$-cycle. Let $H_j$ be the bipartite subgraph induced by $V_j \cup V_{j+1}$.

\begin{lemma}[Generalized Chain]\label{lem:gen-chain-switch}
Let $a_1, a_2 \in V_j$ and $b_1, b_2 \in V_{j+1}$. Suppose $a_1b_1, a_2b_2, a_2b_1 \in E(H)$ but $a_1b_2 \notin E(H)$ (forming  induced `Z' structure). Then:
\begin{enumerate}
    \item $N(b_2) \cap V_{j+2} \subseteq N(b_1) \cap V_{j+2}$
    \item $N(a_1) \cap V_{j-1} \subseteq N(a_2) \cap V_{j-1}$
\end{enumerate}
\end{lemma}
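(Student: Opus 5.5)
The plan is to mirror the proof of Lemma~\ref{chain-switch}, replacing the induced $P_6$ by an induced $P_k$. In the $5$-ring, the Z-shape $a_1b_1a_2b_2$ already supplies four path vertices, and two further steps along the ring finish the path. In a $(k-1)$-ring we must walk $k-4$ further steps. Since every vertex of $H$ lies on an induced $(k-1)$-cycle, and edges exist only between consecutive sets, each such cycle meets every $V_i$ exactly once. This lets us extend a path around the ring.

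\textbf{Part 1.} Suppose for contradiction that some $c\in V_{j+2}$ has $cb_2\in E(H)$ but $cb_1\notin E(H)$. Take an induced $(k-1)$-cycle $D$ through $c$. It contains vertices $c=d_{j+2},d_{j+3},\dots,d_{j-1},d_j,d_{j+1}$ with $d_i\in V_i$. We use its segment $c=d_{j+2},d_{j+3},\dots,d_{j+k-3}$, which consists of $k-4$ vertices lying in $V_{j+2},\dots,V_{j+k-3}=V_{j-2}$. This segment avoids $V_{j-1}$, $V_j$ and $V_{j+1}$. Consider
\[
P=(a_1,b_1,a_2,b_2,d_{j+2},d_{j+3},\dots,d_{j-2}),
\]
which has $4+(k-4)=k$ vertices. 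We check that there are no chords:
\begin{itemize}
\item The pairs $a_1b_2$ and $b_1c$ are non-edges by hypothesis, and $a_1a_2$, $b_1b_2$ are non-edges because each $V_i$ is independent.
\item Any other chord would join two sets that are not consecutive in the ring. The tail lies in $V_{j+2},\dots,V_{j-2}$, so none of its vertices is in a set adjacent to $V_j$; $V_{j+1}$ is adjacent only to $V_{j+2}$ among these sets, and the only such pairs are $b_1c$ and $b_2c$, which were already handled.
\item The tail $d_{j+2},\dots,d_{j-2}$ is itself an induced path, since it is a segment of an induced cycle.
\end{itemize}
Hence $P$ is an induced $P_k$, a contradiction.

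\textbf{Part 2.} The argument is symmetric. Suppose $x\in V_{j-1}$ has $xa_1\in E(H)$ but $xa_2\notin E(H)$. Take an induced $(k-1)$-cycle through $x$ and walk backwards to get the segment $x=d_{j-1},d_{j-2},\dots,d_{j+3}$ of $k-4$ vertices, which avoids $V_j$, $V_{j+1}$ and $V_{j+2}$. Then
\[
(d_{j+3},\dots,d_{j-2},x,a_1,b_1,a_2,b_2)
\]
has $k$ vertices. It is induced for the same reasons, now using $xa_2\notin E(H)$, $a_1b_2\notin E(H)$, and the fact that $x$ is anticomplete to $V_{j+1}$. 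This contradicts $P_k$-freeness.

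\textbf{Main obstacle.} The delicate step is justifying the ring model. It must be that the chosen induced $(k-1)$-cycle through $c$ (or $x$) visits $V_{j+2},V_{j+3},\dots$ in order, one vertex per set. This follows because a closed walk in a ring of $k-1$ sets that uses only consecutive-set edges and has odd length $k-1$ must wind around the ring exactly once; if it ever backtracked, it would have even net displacement and could not close up in $k-1$ steps. We must also confirm the index bookkeeping: the tail has to stop at $V_{j-2}$ in Part 1 and at $V_{j+3}$ in Part 2 so that no tail vertex is adjacent to $a_1$ or $a_2$ (respectively $b_1$ or $b_2$). This works precisely because the path length $k$ equals the ring length plus one, and $k\ge 8$ guarantees that the sets involved are genuinely distinct and non-consecutive.
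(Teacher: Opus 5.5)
Your proposal is correct and follows the paper's proof: concatenate the Z-shape $a_1b_1a_2b_2$ with a forward tail of $k-4$ vertices that starts at $c$ and ends in $V_{j-2}$, so that the ring adjacency rules out every chord except $b_1c$ and $a_1b_2$ (both non-edges by hypothesis), giving an induced $P_k$, with Part 2 symmetric. Taking the tail as a segment of an induced $(k-1)$-cycle through $c$ justifies the tail more carefully than the paper's ``extend a shortest path forward,'' but fix the winding remark: the net displacement of a closed walk of odd length $k-1$ is always odd, and any backward step makes its absolute value at most $k-3$, so it cannot be a multiple of $k-1$ (backtracking does not make it even).
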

\begin{proof}
We prove Part 1 (forward inclusion); Part 2 follows by symmetry. 
Suppose, for the sake of contradiction, that $N(b_2) \cap V_{j+2} \not\subseteq N(b_1) \cap V_{j+2}$. Then there exists a vertex $c \in V_{j+2}$ such that $b_2c \in E(H)$ but $b_1c \notin E(H)$.

Since every vertex lies on a $(k-1)$-cycle, we can extend a shortest path forward from $c$ through the ring. Let $P_{tail} = (v_1, v_2, \dots, v_{k-4})$ be this shortest path, where $v_1 = c \in V_{j+2}$ and $v_m \in V_{j+m+1}$ for each $1 \le m \le k-4$.
Consider the concatenated sequence of vertices:
\[ P = (a_1, b_1, a_2, b_2, v_1, v_2, \dots, v_{k-4}) \]

\begin{itemize}
    \item \textbf{Vertex Count:} The local `Z' structure provides 4 vertices ($a_1, b_1, a_2, b_2$). The tail provides $k-4$ vertices. Thus, $P$ contains exactly $k$ vertices.
    \item \textbf{Induced Property:}
    \begin{itemize}
        \item The edges within $\{a_1, b_1, a_2, b_2, v_1\}$ are strictly defined by the hypothesis and the assumption $b_1v_1 \notin E(H)$.
        \item Since $P_{tail}$ is a shortest path along the ring, it is chordless.
        \item The path begins at $a_1 \in V_j$ and terminates at $v_{k-4} \in V_{j + (k-4) + 1} = V_{j+k-3}$.
        \item Because the ring length is $k-1$, the index $j+k-3$ is equivalent to $j-2 \pmod{k-1}$.
        \item For $k \ge 8$, the ring length $k-1 \ge 7$. In a ring of size at least 7, the set $V_{j-2}$ is at a distance of 2 from $V_j$, meaning no edges exist between $V_{j-2}$ and $V_j$.
    \end{itemize}
\end{itemize}
Consequently, the endpoints of $P$ do not connect, and no wrap-around chords exist. $P$ is an induced $P_k$, yielding a contradiction. \end{proof}

\begin{lemma}[Generalized $2K_2$ Forbidden]\label{lem:gen-2K2}
Let $a_1, a_2 \in V_j$ and $b_1, b_2 \in V_{j+1}$. Suppose the edges $a_1b_1 \in E(H)$ and $a_2b_2 \in E(H)$ form an induced $2K_2$ (i.e., $a_1b_2 \notin E(H)$ and $a_2b_1 \notin E(H)$).
Then $N(a_1) \cap V_{j-1} = N(a_2) \cap V_{j-1}$ and $N(b_1) \cap V_{j+2} = N(b_2) \cap V_{j+2}$.
\end{lemma}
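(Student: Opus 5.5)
By symmetry it suffices to prove $N(a_1)\cap V_{j-1}=N(a_2)\cap V_{j-1}$. Suppose it fails. Then, up to renaming, there is $e_1\in V_{j-1}$ with $e_1a_1\in E(H)$ and $e_1a_2\notin E(H)$. The second equality follows by reversing the orientation of the ring, exchanging the roles of $(V_j,a_1,a_2)$ and $(V_{j+1},b_1,b_2)$.

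The plan is to mimic the proof of Lemma~\ref{2K2}, but to replace the appeal to Lemma~\ref{2-disjoint-cycle} with a direct long induced path. This is the same device used in Lemma~\ref{lem:gen-chain-switch}: a shortest path running forward through the ring is chordless.

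\textbf{Main construction.} Start from $b_2$, which lies on a $(k-1)$-cycle, and take a shortest forward path $(v_1,\dots,v_{k-4})$ with $v_1\in V_{j+2}$, $v_m\in V_{j+1+m}$, and $v_1$ a neighbor of $b_2$. Consider
\[
P=(e_1,a_1,b_1,\ \ldots)
\]
Since $a_1$ and $b_2$ are not adjacent, we cannot walk directly from $a_1$ to $b_2$. The cleaner choice is
\[
P=(b_1,a_1,e_1,\,w_1,\dots,w_{k-3}),
\]
where $w$ is a shortest backward path from $e_1$ through $V_{j-2},V_{j-3},\dots$. Its last vertex lies in $V_{j-1-(k-3)}=V_{j+1}$, which is the wrong place. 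So instead I will build $P$ by gluing two short pieces onto the $2K_2$.

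\textbf{Gluing the $2K_2$.} Let $e_2\in N(a_2)\cap V_{j-1}$. First show, exactly as in Claim~1 of Lemma~\ref{2K2}, that $e_2a_1\notin E(H)$. Otherwise $(e_1,a_1,e_2,a_2,b_2,v_1,\dots)$, extended forward from $b_2$ to $k$ vertices, is an induced $P_k$: its endpoints lie in $V_{j-1}$ and in $V_{j-1+(k-1)-\dots}$ at ring distance at least $2$, because the ring length $k-1\ge7$. Hence $e_1a_1$ and $e_2a_2$ form an induced $2K_2$ in $H_{j-1}$.

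Now take the sequence
\[
(v_{s},\dots,v_1,b_1,a_1,e_1,u_1,\dots,u_t),
\]
where $v_i$ runs forward from $b_1$ and $u_i$ runs backward from $e_1$, with the total number of vertices equal to $k$. This path never uses the $2K_2$. The real argument has to exploit the asymmetry $e_1a_2\notin E$. So the intended path is
\[
(x_r,\dots,x_1,\ e_1,a_1,b_1)
\]
spliced with $a_2,b_2$, which is not connected to it. I will therefore fall back on connecting through a common vertex: if $b_1$ and $b_2$ have a common neighbor $c\in V_{j+2}$, then $(e_1,a_1,b_1,c,b_2,a_2,e_2,\dots)$ extended backward from $e_2$ is a path. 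Its induced-ness uses $a_1b_2,a_2b_1,e_1a_2,e_2a_1\notin E$ together with the ring constraints, and with length $k$ it gives a contradiction. If $b_1$ and $b_2$ have no common neighbor, I get a second induced $2K_2$ one step forward, and I iterate around the ring. Going all the way around the odd ring eventually forces either a common neighbor or two disjoint induced $(k-1)$-cycles with no edges between them. In the latter case I take an induced path joining the two cycles and walk along a cycle to get an induced $P_k$, as in Lemma~\ref{2-disjoint-cycle}.

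\textbf{Main obstacle.} The hard part is the bookkeeping in that final case: choosing the lengths of the forward and backward extensions so that exactly $k$ vertices are used, and checking that the two endpoints land in sets at ring distance at least $2$ so that no wrap-around chord appears. The first thing I would try is fixing the endpoints in $V_{j+2}$ and $V_{j-3}$, and using $k-1\ge7$ to rule out wrap-around chords.
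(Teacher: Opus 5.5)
\textbf{Where the proposal matches the paper.} Your Claim~1 step rules out $e_2a_1\in E(H)$ via the path $(e_1,a_1,e_2,a_2,b_2,v_1,\dots,v_{k-5})$, whose last vertex lies in $V_{j-3}$. This is exactly the paper's entire argument. The paper then dismisses, in one sentence, the case in which $a_1,a_2$ have no common neighbour in $V_{j-1}$. You correctly see that this remaining case is where the work lies, but your treatment of it does not go through as written.

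\textbf{The U-turn path is not induced.} In $(e_1,a_1,b_1,c,b_2,a_2,e_2,u_1,\dots)$, extended backward from $e_2$, the vertices $u_1\in V_{j-2}$ and $e_1\in V_{j-1}$ lie in consecutive sets, and nothing prevents $e_1u_1\in E(H)$. Drop $e_1$ instead: the path $(a_1,b_1,c,b_2,a_2,e_2,u_1,\dots,u_{k-6})$, ending in $V_{j+4}$, is induced using only $a_1e_2\notin E(H)$. This shows that $b_1,b_2$ have no common neighbour in $V_{j+2}$.

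\textbf{The iteration is only asserted.} Forward walks from $b_1,b_2$ need not close up into cycles. What is actually needed is the following.
\begin{enumerate}
\item Show $N(a_1)\cap N(a_2)=\emptyset$. A common neighbour $b_3\in V_{j+1}$ is excluded by $(b_1,a_1,b_3,a_2,e_2,\dots)$ extended backward to $k$ vertices.
\item Take induced $(k-1)$-cycles $C^1\ni a_1$ and $C^2\ni a_2$. Prove by induction that their vertices in each $V_i$ have disjoint neighbourhoods; each step uses one U-turn path and one same-side path of the two types above.
\end{enumerate}
Only then are $C^1$ and $C^2$ vertex-disjoint with no edges between them.

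\textbf{Connectivity is required, and the statement is false without it.} Your final step needs $H$ to be connected, which is not a hypothesis of the lemma. Take two vertex-disjoint induced cycles $(x_0,\dots,x_{k-2})$ and $(y_0,\dots,y_{k-2})$ with no edges between them, and set $V_i=\{x_i,y_i\}$. This is a ring, every vertex lies on an induced $C_{k-1}$, and the longest induced path has $k-2$ vertices. Yet $x_jx_{j+1},\,y_jy_{j+1}$ is an induced $2K_2$ with $N(x_j)\cap V_{j-1}=\{x_{j-1}\}\neq\{y_{j-1}\}=N(y_j)\cap V_{j-1}$. So you must add connectivity explicitly; the paper's one-line dismissal silently relies on it as well.

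With connectivity, your shortest-path finish does work, with one extra case to handle. If the first internal vertex of the connecting path has two neighbours on $C^1$, first swap it into $C^1$ in place of the vertex of $C^1$ lying in its own set. Then remove a cycle-neighbour and append two more vertices, as you planned.
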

\begin{proof}
Assume, for contradiction, that the neighborhoods in $V_{j-1}$ are distinct. Without loss of generality, suppose there exists $e_1 \in N(a_1) \setminus N(a_2)$. Furthermore, assume the graphs are connected such that there exists $e_2 \in N(a_1) \cap N(a_2)$ (if no such shared neighbor exists, the components are disjoint, which immediately forces a $P_k$ via longer paths).

Consider the sequence $S = (e_1, a_1, e_2, a_2, b_2)$. 
This sequence alternates between $V_{j-1}$ and $V_j$, before finally stepping to $V_{j+1}$. It utilizes 5 vertices while only advancing the ring index from $j-1$ to $j+1$.
We construct a path $P$ by extending $S$ forward from $b_2$ through $V_{j+2}, \dots$ for exactly $k-5$ additional vertices along the ring. 
The total number of vertices is $5 + (k-5) = k$.

The path originates in $V_{j-1}$ and terminates in $V_{j+1+(k-5)} = V_{j+k-4}$. 
Modulo $k-1$, the terminal index is $j-3$. For $k \ge 8$, $V_{j-3}$ is not adjacent to $V_{j-1}$ (the distance is 2). Because this structure consumes vertices rapidly while advancing slowly around the ring, it avoids closing a cycle. Thus, $P$ forms an induced $P_k$, a contradiction.
\end{proof}

\begin{lemma}[Chain Triples Imply Bicliques]\label{lem:gen-chain-sequence}
Suppose the bipartite subgraphs $H_j, H_{j+1}$, and $H_{j+2}$ are proper chain graphs with alternating inclusion directions (e.g., $L \to R$, $R \to L$, and $L \to R$, respectively). Then the adjacent subgraphs $H_{j-1}$ and $H_{j+3}$ must be complete bipartite graphs.
\end{lemma}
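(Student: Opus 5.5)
The plan is to mirror the proof of Lemma~\ref{lem:alternate-chain} for the $5$-ring, replacing the two local tools there (Lemma~\ref{chain-switch} and the final $P_6$ argument) with their generalized counterparts in the $(k-1)$-ring. We treat $H_{j+3}$; the case $H_{j-1}$ follows by reflecting the ring ($j\mapsto -j$, which reverses all inclusion directions and keeps the alternating pattern).

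\textbf{Setting up the configuration.} Suppose $d_1\in V_{j+3}$ and $e\in V_{j+4}$ with $d_1e\notin E(H)$. From properness and the given directions, extract the three $Z$-shapes exactly as in Lemma~\ref{lem:alternate-chain}:
\begin{itemize}
\item $a_1,a_2\in V_j$ and $b_1,b_2\in V_{j+1}$ with $a_1b_1,a_2b_2,a_2b_1\in E$ and $a_1b_2\notin E$;
\item $b_1,b_2$ and $c_1,c_2\in V_{j+2}$ with $b_1c_1,b_1c_2,b_2c_2\in E$ and $b_2c_1\notin E$;
\item $c_1,c_2$ and $d_1,d_2\in V_{j+3}$ with $c_1d_1,c_2d_2,c_2d_1\in E$ and $c_1d_2\notin E$.
\end{itemize}
The witnesses are chained through shared vertices. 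The point needing care is that the same $d_1$ can be taken to be the vertex missing $e$. We handle it by choosing the $Z$ in $H_{j+2}$ after fixing the non-edge: order $V_{j+3}$ by the chain and pick $d_1$ extremal. If no such choice exists, then $d_1$ has full neighbourhood, a contradiction.

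\textbf{Propagating inclusions.} Apply Lemma~\ref{lem:gen-chain-switch}(1) to the $Z$ in $H_{j+2}$ to get $N(d_2)\cap V_{j+4}\subseteq N(d_1)\cap V_{j+4}$. Since $d_2$ lies on a $(k-1)$-cycle, pick $e_2\in N(d_2)\cap V_{j+4}$; then $d_1e_2\in E$. Now form the path
\[
P=(d_2,e_2,d_1,c_1,b_1,a_1),
\]
and extend it backwards from $a_1$ by a shortest path $a_1,v_{j-1},v_{j-2},\dots$ through $V_{j-1},V_{j-2},\dots$. Counting, the six vertices of $P$ plus $k-6$ further vertices give $k$ vertices in total. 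The tail ends in $V_{j-(k-6)}=V_{j+5}$ modulo $k-1$. This is at ring distance at least $2$ from $V_{j+3}\cup V_{j+4}$ only when the ring is long enough, and $k\ge 8$ makes the ring length at least $7$. So the only candidate chords are $e_2a_1$ and chords between the tail and $\{e_2,d_2\}$ near the wrap-around.

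To rule these out we use Lemma~\ref{lem:gen-chain-switch}(2) on the $Z$ in $H_j$, which gives $N(a_1)\cap V_{j-1}\subseteq N(a_2)\cap V_{j-1}$. We also choose the tail as a shortest path in the ring, as in the proof of Lemma~\ref{lem:gen-chain-switch}. Every possible chord then either produces an induced $P_k$ or closes an odd cycle shorter than $k-1$, which is excluded.

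\textbf{Concluding and the main obstacle.} With no chords available, $P$ together with its tail is an induced $P_k$, a contradiction, so $H_{j+3}$ is complete bipartite. The hardest step is the wrap-around bookkeeping in the previous stage: unlike the $5$-ring case, where $V_{j+4}=V_{j-1}$ makes $e_2a_1$ the only possible chord, here we must pick the tail length so that the path neither closes into a cycle nor is too short. The first attempt will be to take the tail through $V_{j-1},\dots,V_{j+5}$ and to use the ban on odd cycles of length less than $k-1$ to kill any chord from $e_2$ or $d_2$ into the tail. If the count is off by one, a fallback is to start the path at $e$ rather than $d_2$, using the non-edge $d_1e$ directly.
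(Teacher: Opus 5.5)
\textbf{The proposal has a genuine gap, at exactly the step you flag.}

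\emph{The path never uses the assumed non-edge.} The path $P=(d_2,e_2,d_1,c_1,b_1,a_1)$ and its tail never involve $d_1e$. $P$ needs only the $Z$ in $H_{j+2}$, the edges $c_1b_1,b_1a_1$, and some $e_2\in N(d_2)\cap V_{j+4}\subseteq N(d_1)$. For ring length $k-1\ge 7$ it is automatically an induced $P_6$; the chord $e_2a_1$ cannot occur, since $V_{j+4}$ and $V_j$ are at ring distance at least $3$. So if the extension worked, it would show that such a $Z$-configuration cannot exist at all.

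\emph{The extension fails.} The end $d_2$ cannot be extended, because all neighbours of $d_2$ in $V_{j+2}\cup V_{j+4}$ are adjacent to $d_1$. The backward tail of $k-6$ vertices ends at some $t\in V_{j+5}$, and $V_{j+5}$ is adjacent to $V_{j+4}\ni e_2$ for every ring length. If $te_2\in E(H)$, which is forced whenever $H_{j+4}$ is complete, then $(e_2,d_1,c_1,b_1,a_1,\dots,t)$ is an induced cycle of length exactly $5+(k-6)=k-1$. That length is allowed, so your dichotomy ``induced $P_k$ or odd cycle shorter than $k-1$'' is false for this chord.

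\emph{The fallback and the set-up also fail.} Since $e\notin N(d_1)\supseteq N(d_2)\cap V_{j+4}$, the vertex $e$ is adjacent to neither $d_1$ nor $d_2$, so you cannot start the path at $e$. For the set-up, Lemma~\ref{lem:gen-chain-switch} passes non-neighbours of the larger vertex $d_1$ down to $d_2$, but not conversely. A non-edge of $H_{j+3}$ may therefore sit only at the smaller vertex, and ``otherwise $d_1$ has full neighbourhood'' does not follow.

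\textbf{No repair is possible, because the statement is false already for $k=8$ under the reading your proof uses.} The paper's proof uses the same configuration and does not supply the missing step: its sequence $(a_1,b_1,c_2,d_1,e_1,\hat e)$ treats the non-edge $d_1e_1$ as a path edge.

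\emph{Counterexample.} Take the $7$-ring with $V_0=\{a_1,a_2\}$, $V_1=\{b_1,b_2\}$, $V_2=\{c_1,c_2\}$, $V_3=\{d_1,d_2\}$, $V_4=\{e_1,e_2\}$, $V_5=\{f\}$, $V_6=\{g\}$. Let $H_0,H_1,H_2$ be exactly your three $Z$'s, let $H_3$ have edges $d_1e_1,d_1e_2,d_2e_2$, and let $H_4,H_5,H_6$ be complete. Every vertex lies on the induced $7$-cycle $(a_1,b_1,c_1,d_1,e_1,f,g)$ or $(a_2,b_2,c_2,d_2,e_2,f,g)$. Yet $H_3$ is not complete, since $d_2e_1\notin E$.

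\emph{Why it is $P_8$-free.} Every link is a chain graph, so an induced path can reverse direction only at its second or penultimate vertex. A reversal anywhere else would either give an induced $2K_2$ in a link, or put one of the four $P_4$-links on consecutive path vertices. The latter is impossible, because each end of such a $P_4$ has all its outside neighbours adjacent to the vertex two steps along it. A direct check of the remaining candidates shows none is induced; these are monotone paths wrapping once, possibly with a hook at one or both ends. In this instance your own path $(d_2,e_2,d_1,c_1,b_1,a_1,g,f)$ has the chord $fe_2$, as predicted above. The lemma needs substantially stronger hypotheses before any proof can go through.
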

\begin{proof}
We prove the lemma for $H_{j+3}=H[V_{j+3}\cup V_{j+4}]$; the argument for $H_{j-1}$ is symmetric.

Assume for a contradiction that $H_{j+3}$ is not complete bipartite. Then there exist
$d_1\in V_{j+3}$ and $e_1\in V_{j+4}$ such that $d_1e_1\notin E(H)$.

Since $H_{j+2}=H[V_{j+2}\cup V_{j+3}]$ is a \emph{proper} chain graph with inclusion direction
$V_{j+2}\to V_{j+3}$, the neighborhoods in $V_{j+3}$ of vertices of $V_{j+2}$ are strictly nested.
In particular, there exist vertices $c_1,c_2\in V_{j+2}$ and $d_1,d_2\in V_{j+3}$ forming a $Z$-shape:
\[
c_1d_1,\ c_2d_2,\ c_2d_1 \in E(H)
\qquad\text{and}\qquad
c_1d_2\notin E(H).
\]
(Equivalently, the link $(V_{j+2},V_{j+3})$ contains two comparable neighborhoods and hence a witness $Z$.)

Now apply Lemma~\ref{chain-switch} (the $Z$-inclusion lemma) to the above configuration in $H_{j+2}$.
It yields the inclusion
$
N(c_1)\cap V_{j+1}\subseteq N(c_2)\cap V_{j+1}.
$ \\
Because $H_{j+1}=H[V_{j+1}\cup V_{j+2}]$ is a \emph{proper} chain graph and the direction alternates
($V_{j+2}\to V_{j+1}$), this inclusion is strict. Hence we may choose
$b_1\in V_{j+1}$ such that $b_1c_2\in E(H)$ but $b_1c_1\notin E(H)$.

Next, use that $H_j=H[V_j\cup V_{j+1}]$ is a \emph{proper} chain graph with direction $V_j\to V_{j+1}$.
Since $b_1$ has at least one neighbor in $V_j$ (otherwise $b_1$ would not lie on any cycle in the ring instance),
pick any $a_1\in V_j$ adjacent to $b_1$, i.e., $a_1b_1\in E(H)$.

Finally, since every vertex lies on the macroscopic ring, $e_1$ has a neighbor in $V_{j+5}$; choose any $\hat e\in V_{j+5}$ with $e_1\hat e\in E(H)$.
Consider the vertex sequence
\[
P=(a_1,\ b_1,\ c_2,\ d_1,\ e_1,\ \hat e).
\]

\smallskip
\noindent\emph{Claim: $P$ is an induced $P_6$.}
Consecutive pairs in $P$ are edges:
$a_1b_1\in E(H)$ by construction,
$b_1c_2\in E(H)$ by construction,
$c_2d_1\in E(H)$ from the chosen $Z$,
$e_1\hat e\in E(H)$ by construction,
and we also have $d_1e_1\notin E(H)$ by the initial choice of $(d_1,e_1)$.
So $P$ is a path in the underlying graph.

It remains to rule out chords among nonconsecutive vertices of $P$.
This follows from the ring/set structure: edges exist only between consecutive sets.
Indeed,
$a_1\in V_j$ is independent from  $V_{j+2}\cup V_{j+3}\cup V_{j+4}\cup V_{j+5}$,
$b_1\in V_{j+1}$ is independent from  $V_{j+3}\cup V_{j+4}\cup V_{j+5}$,
$c_2\in V_{j+2}$ is independent from $V_{j+4}\cup V_{j+5}$,
and $d_1\in V_{j+3}$ is independent from  $V_{j+5}$.
Also $b_1c_1\notin E(H)$ and $c_1d_2\notin E(H)$ were part of the chosen $Z$-witnessing configuration, and $V_{j+4}$ and $V_{j+5}$ are independent sets so there are no internal edges.
Therefore there are no edges between any nonconsecutive pair of vertices in $P$.
Hence $P$ induces a $P_6$.

\smallskip
Now we extend $P$ to obtain an induced $P_k$ for any even $k\ge 8$ (contradicting $P_k$-freeness).
Starting from $\hat e\in V_{j+5}$, repeatedly choose a neighbor in the next set along the ring, appending one vertex at a time:
from $V_{j+5}$ to $V_{j+6}$ to $\dots$.
Because the ring contains no edges skipping sets, this extension cannot create chords with earlier vertices of the path; all potential chords would have to jump over at least one set and are forbidden by definition.
Moreover, since we are extending forward along distinct sets and $k\ge 8$, we can append $k-6$ additional vertices before any wrap-around adjacency could occur in the macroscopic ring.
Thus we obtain an induced $P_k$, contradicting the assumption that the instance is $P_k$-free.

This contradiction shows that $H_{j+3}$ must be complete bipartite. By symmetry, the same holds for $H_{j-1}$.
\end{proof}

\subsection{The Global Ring Structure of $H$}

\begin{definition}[Ring Partition]
    We define the partition sets $V_j$ for $0 \le j \le k-2$ (with index arithmetic evaluated modulo $k-1$) as follows:
     $V_j=B_j \cup A^{+}_{j-1} \cup G^{+}_{j-1} \cup A^{-}_{j+1} \cup G^{-}_{j+1}$. 
     
\end{definition}

\begin{theorem}\label{thm:ring-structure}
For $k \ge 8$, the graph $H$ forms a ring of $k-1$ independent sets $V_0, V_1, \dots, V_{k-2}$, where edges exist only between consecutive sets $V_j$ and $V_{j+1}$ (indices modulo $k-1$).
\end{theorem}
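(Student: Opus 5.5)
We will prove Theorem~\ref{thm:ring-structure} by verifying three things for the sets $V_j=B_j\cup A^{+}_{j-1}\cup G^{+}_{j-1}\cup A^{-}_{j+1}\cup G^{-}_{j+1}$: that every relevant vertex lies in some $V_j$; that each $V_j$ is independent; and that every edge of $H$ joins $V_j$ and $V_{j+1}$ for some $j$. Throughout we may discard vertices lying on no odd cycle, since they are irrelevant to OCT. We also place the cycle vertex $a_j$ in $V_j$, since $a_j$ is adjacent to $a_{j\pm1}$ and behaves like a $B_j$ vertex. The intuition is that an $A_i$ vertex adjacent only to $a_i$ and routed towards $A_{i+3}$ or $B_{i+2}$ acts as a ``subdivided'' occupant of position $i+1$. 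Likewise a $G_i$ vertex hanging off $B_i$ towards $B_{i+2}$ occupies position $i+1$.

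\textbf{Coverage.} First we show, in the style of Lemma~\ref{lem:distance-2}, that no vertex is at distance at least $3$ from $C$. A shortest path of length $3$ from such a vertex to $C$, extended along $C$, gives an induced $P_k$; the argument needs $k\ge 8$ together with the fact (recalled before the lemmas) that a neighbour of $C$ has either one neighbour on $C$ or two at distance~$2$. Next, every vertex of $A_i$ or $G_i$ lying on an odd cycle must have a neighbour outside its own ``star''. By Lemmas~\ref{lem:Ai-Ai+3-even}, \ref{lem:adj-restrictions-B}, \ref{lem:Gi-Gi+1-even}, \ref{lem:Gj-Ai-even} and \ref{lem:Gj-Bi-even}, the only available neighbours are in $A_{i\pm3}$, $B_{i\pm2}$ or $B_i$. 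A vertex whose only neighbours besides $a_i$ (respectively besides $B_i$) lie in $B_i$ is pendant-like with respect to parity: any cycle through it has even length, because it merely replaces $a_{i\pm1}$-side positions. So such a vertex can be dropped. Hence every remaining $A_i$ and $G_i$ vertex is in $A^{\pm}_i$ or $G^{\pm}_i$. Lemmas~\ref{A-disjointness} and \ref{G-disjointness} show that $A^+_i\cap A^-_i=\emptyset$ in the sense needed, so each vertex gets a unique index and the $V_j$ form a partition.

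\textbf{Independence and consecutiveness.} Here we do a case analysis over pairs of types. For each pair (say $B_j$ vs.\ $A^-_{j+1}$, or $A^+_{j-1}$ vs.\ $G^-_{j+1}$), we show that an edge is impossible inside $V_j$ and impossible between $V_j$ and $V_{j'}$ with $j'\ne j\pm1$. The tool is always the same. An edge $xy$ together with the $C$-anchors of $x$ and $y$ closes a cycle with a segment of $C$, and we compute its length from the index offsets, exactly as in the proof of Lemma~\ref{A-disjointness}. If the length is odd and less than $k-1$, it contradicts $\mathcal{C}_{<k-1}$-freeness. If the length is even, the complementary cycle through the other arc of $C$ is odd and short. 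When the anchors are far apart, the corresponding path is instead an induced $P_k$. Edges inside $V_j$ are also excluded by the triangle argument whenever the two vertices share an anchor. For example, $B_j$ and $A^+_{j-1}$ give $C_4$-type configurations that are allowed only as edges $B_j$--$A_{j\pm 2}$, and these land in $V_{j\pm1}$ because $A^+_{j+2}$ sits in $V_{j+1}$ and $A^-_{j-2}$ sits in $V_{j-1}$.

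\textbf{Main obstacle.} The delicate step is the mixed pairs involving $A_i$: the type $A_i^{+}$ placed in $V_{i+1}$ against $A_i^{-}$ placed in $V_{i-1}$, and $A$-vertices against $B_i$, which Lemma~\ref{lem:adj-restrictions-B} allows. An edge $A_i$--$B_i$ would connect positions $i\pm1$ with $i$, and this is consistent only if the $A_i$ vertex has the right sign. To handle this we will first try to show, using a $P_k$ built by going from the $A_i$ vertex through $B_i$ and then the long way around $C$, that a vertex of $A^{+}_i$ adjacent to $b\in B_i$ forces nothing contradictory but does force the sign. If that fails, we will fall back on refining the partition by the neighbours in $B_i$, as was done with Lemma~\ref{X1-X2-dichotomy} in the $P_6$ case.
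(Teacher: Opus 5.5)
Your strategy matches the paper's proof: place each vertex by its anchors on $C$, then exclude each bad edge type by a short odd closed walk through $C$, a triangle at a shared anchor, or an induced $P_k$. Two of your additions are genuine improvements: putting $a_j$ into $V_j$, and treating every non-consecutive pair rather than reducing to $V_j$ versus $V_{j+2}$ as the paper does (that reduction does not follow). However, the coverage step rests on a false claim. A vertex $x\in A_i$ whose only neighbours are $a_i$ and vertices of $B_i$ is not parity-neutral. If $b\in B_i$ is a neighbour, then $(x,b,a_{i+1},a_{i+2},\dots,a_{i-1},a_i,x)$ is a cycle on $k+1$ vertices, which is odd since $k$ is even. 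Similarly, $g\in G_i$ with two neighbours $b,b'\in B_i$ lies on the odd cycle $(g,b,a_{i+1},\dots,a_{i-1},b',g)$. So ``any cycle through it has even length'' is wrong, and these vertices cannot be discarded on that ground. The easy repair is not to discard them at all: all their neighbours lie in $V_i$ (using $a_i\in V_i$), so putting them into $V_{i+1}$ preserves the ring property. If you want the paper's sets exactly, you can then deduce that they lie on no \emph{induced} odd cycle. A $P_k$-free graph has no induced cycle on more than $k$ vertices, and there are no odd cycles shorter than $k-1$; since $k$ is even, every induced odd cycle is therefore a $C_{k-1}$. A $C_{k-1}$ inside a ring of $k-1$ parts must go once around monotonically, so the two cycle-neighbours of each of its vertices lie in different parts. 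That is impossible for a vertex all of whose neighbours are in $V_i$.

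Your ``main obstacle'' is not an obstacle, and the proposed fallback (refining the partition) would prove a statement about different sets than the $V_j$ of the theorem. Since $A_i^+\subseteq V_{i+1}$, $A_i^-\subseteq V_{i-1}$ and $B_i\subseteq V_i$, every $A_i$--$B_i$ edge joins consecutive parts whichever sign the $A_i$-vertex has. An edge between $A_i^+$ and $A_i^-$ lies inside $A_i$ and closes a triangle with $a_i$, so nothing needs to be forced. Your example also has the signs reversed. An $A_{j+2}$-vertex adjacent to $B_j$ is in $A^-_{j+2}\subseteq V_{j+1}$, whereas $A^+_{j+2}\subseteq V_{j+3}$; likewise an $A_{j-2}$-vertex adjacent to $B_j$ is in $A^+_{j-2}\subseteq V_{j-1}$. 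Finally, your parity rule (``if one cycle is even, the complementary one is odd and short'') is not exhaustive. For adjacent $x\in A_i$ and $y\in A_{i+1}$, the two cycles through the anchors have lengths $4$ and $k+1$, yet such an edge could join $A^-_i\subseteq V_{i-1}$ to $A^+_{i+1}\subseteq V_{i+2}$. Here the anchors are adjacent, not far apart, so it must be excluded by the induced $P_k$ $(y,x,a_i,a_{i-1},\dots,a_{i+2})$, or by using the anchors of the $\pm$-witnesses as in Lemma~\ref{A-disjointness}. With these corrections your case analysis closes.
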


\begin{proof}
Recall that for each $j\in\{0,\dots,k-2\}$ we define
\[
V_j \;=\; B_j \;\cup\; A^{+}_{j-1} \;\cup\; G^{+}_{j-1} \;\cup\; A^{-}_{j+1} \;\cup\; G^{-}_{j+1},
\]
with indices taken modulo $k-1$.
We prove: (i) each $V_j$ is an independent set, and (ii) $E(H)$ is contained in $\bigcup_j E(H[V_j\cup V_{j+1}])$.

\medskip
\noindent\textbf{(i) Each $V_j$ is independent.}
We show that no edge can have both endpoints in $V_j$.

\smallskip
\noindent\emph{No edges inside the $G$-parts.}
By Lemma~\ref{lem:Gi-Gi+1-even}, vertices belonging to $G$-parts form an independent set across all indices; in particular
$G^{+}_{j-1}\cup G^{-}_{j+1}$ is independent.

\smallskip
\noindent\emph{No edges between $G$-parts and $A/B$-parts within $V_j$.}
Vertices of $G^{+}_{j-1}\cup G^{-}_{j+1}$ lie at distance~$2$ from the base cycle, while vertices of
$B_j\cup A^{+}_{j-1}\cup A^{-}_{j+1}$ lie at distance~$1$.
An edge between these two distance layers within the \emph{same} partition would create a forbidden short odd cycle through the
shared cycle anchors, contradicting the definition of the refined $A^\pm$ sets and the fact that $H$ is $C_3$-free and has no induced $C_5$
in the regime $k\ge 8$ (the smallest allowed odd cycle length is at least $k-1\ge 7$).

\smallskip
\noindent\emph{No edges among the $A/B$-parts inside $V_j$.}
First, any edge between $B_j$ and $A^{+}_{j-1}$ (respectively $B_j$ and $A^{-}_{j+1}$) would form a triangle with the common anchor
$a_{j-1}$ (resp.\ $a_{j+1}$), contradicting $C_3$-freeness.
Second, if $x\in A^{+}_{j-1}$ and $y\in A^{-}_{j+1}$ and $xy\in E(H)$, then
$(x,y,a_{j+1},a_j,a_{j-1},x)$ is an induced $C_5$, which is impossible when $k\ge 8$ (odd cycles of length $5$ are forbidden in $H$).
Finally, Lemma~\ref{A-disjointness} (refinement disjointness) rules out ``jump edges'' within and between the refined $A^\pm$ parts.
Hence there are no edges inside $V_j$.

Therefore $V_j$ is independent for all $j$.

\medskip
\noindent\textbf{(ii) Edges only occur between consecutive parts.}
It suffices to show that $V_j$ is independent to $V_{j+2}$ for every $j$; then by induction (or repeated application) we obtain
independence for all $V_{j+m}$ with $m\ge 2$.

Let $u\in V_j$ and $v\in V_{j+2}$. We show $uv\notin E(H)$ by a case analysis on which subparts contain $u$ and $v$.

\smallskip
\noindent\emph{If $u$ or $v$ lies in a $G$-part.}
By Lemma~\ref{lem:Gi-Gi+1-even}, there are no edges between any two $G$-parts of different indices, and in particular no edges from
$G^{+}_{j-1}\cup G^{-}_{j+1}$ to $G^{+}_{j+1}\cup G^{-}_{j+3}$.
Moreover, by construction/refinement of the macroscopic partition, $G$-parts do not connect to $A/B$-parts two steps away without creating
a forbidden induced path (or a forbidden short odd cycle) through the cycle anchors.
Hence no such $uv$ can exist.

\smallskip
\noindent\emph{If $u,v$ lie in $A/B$-parts.}
Vertices of $V_j$ are anchored only at cycle vertices among $\{a_{j-1},a_j,a_{j+1}\}$, while vertices of $V_{j+2}$ are anchored only at
$\{a_{j+1},a_{j+2},a_{j+3}\}$.
Any edge $uv$ between these nonconsecutive macroscopic parts would create a shortcut across the underlying $(k-1)$-cycle of anchors.
Since $k\ge 8$, such a shortcut yields a forbidden induced odd cycle of length at most $k-3$ (or $k-5$), or yields an induced $P_k$ by
following the ring/anchor path of length $k-1$ and using $uv$ as a chord. In either case, this contradicts the defining forbidden-subgraph
assumptions on $H$ in the $k\ge 8$ regime.
Finally, the refinement lemma (Lemma~\ref{A-disjointness}) ensures that the ``forward'' sets $A^+_{j-1}$ and the ``backward'' sets $A^-_{j+1}$
do not have neighbors beyond the immediately adjacent macroscopic part, so they cannot connect to $V_{j+2}$.

Thus $E(V_j,V_{j+2})=\emptyset$ for all $j$, and therefore the only edges of $H$ are between consecutive parts $V_j$ and $V_{j+1}$.

\medskip
Combining (i) and (ii), $H$ is a ring of $k-1$ independent sets, with edges only between consecutive parts.
\end{proof}

\begin{corollary}
    The OCT problem on $G$ reduces to breaking the odd-length macroscopic ring $H$. Because $k-1$ is odd, any valid odd cycle transversal must remove all vertices from at least one partition $V_j$, or sever the connections between $V_j$ and $V_{j+1}$. 
    Since $V_j$ and $V_{j+1}$ are independent sets, the subgraph $H[V_j \cup V_{j+1}]$ is bipartite. Breaking the ring optimally is equivalent to computing the Minimum Vertex Cover for the bipartite graph $H[V_j \cup V_{j+1}]$ for each $j \in \{0, \dots, k-2\}$, and taking the global minimum. This can be computed exactly in polynomial time.
\end{corollary}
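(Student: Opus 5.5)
The statement to prove is the Corollary: on an odd ring $H$ with parts $V_0,\dots,V_{k-2}$ (from Theorem~\ref{thm:ring-structure}), a minimum OCT is obtained as a minimum vertex cover of some link graph $H_j=H[V_j\cup V_{j+1}]$, minimized over $j$. I plan to prove two inequalities. The upper bound is that every vertex cover of some $H_j$ is an OCT. The lower bound is that every OCT contains a vertex cover of some $H_j$. Then $\mathrm{OPT}(H)=\min_j \tau(H_j)$. Each $\tau(H_j)$ is computed by König's theorem via Hopcroft--Karp, which gives the polynomial running time.

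\textbf{Upper bound.} Let $S$ be a vertex cover of $H_j$. In $H\setminus S$ no edge joins $V_j$ and $V_{j+1}$. Every remaining edge joins $V_i$ and $V_{i+1}$ for some $i\neq j$, since each $V_i$ is independent and all edges are between consecutive parts. I would define the coloring $c(v)=(i-j-1)\bmod 2$ for $v\in V_i$, reading $i$ as running from $j+1$ up to $j+(k-1)$, i.e.\ along the path obtained by cutting the ring at $j$. Any edge between $V_i$ and $V_{i+1}$ with $i\ne j$ joins parts whose positions along this path differ by exactly one, so its endpoints get different colors. Hence $c$ is a proper 2-coloring and $H\setminus S$ is bipartite.

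\textbf{Lower bound (main step).} Let $X$ be an OCT and suppose, for contradiction, that for every $j$ some edge $u_jw_j$ with $u_j\in V_j$, $w_j\in V_{j+1}$ survives in $H\setminus X$. I would use a winding-number argument. Fix a proper 2-coloring $c$ of $H\setminus X$. Assign each surviving edge a direction $\pm1$ according to which way it steps around the ring. Along any closed walk, the sum of these steps is a multiple of $k-1$. I would rather not chase closed walks directly, and would argue via parity instead. Each part $V_j$ may contain vertices of both colors, so define on surviving vertices the function $\phi(v)=c(v)+i \pmod 2$ for $v\in V_i$, with $i\in\{0,\dots,k-2\}$. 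For any surviving edge between $V_i$ and $V_{i+1}$ with $i\neq k-2$, $\phi$ is constant across the edge. Across the wrap edge between $V_{k-2}$ and $V_0$, $\phi$ flips, because $k-2$ and $0$ have the same parity while $c$ changes. Surviving edges exist in every link, but they may lie in different components, so this alone gives no contradiction.

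This is the step I expect to be the main obstacle. Surviving edges in different links need not chain together into one odd cycle; for example, two disjoint partial covers could leave edges in every link with no odd cycle left in $H\setminus X$. The paper's hypothesis that every vertex lies on an induced $(k-1)$-cycle does not by itself rule this out. I would therefore fall back on the structure lemmas. By Lemma~\ref{lem:gen-chain-switch} and Lemma~\ref{lem:gen-2K2}, each $H_j$ is a chain graph or has the biclique-flanked structure, with the neighborhoods nested consistently. Using this nesting, I would argue as in Lemma~\ref{OCT-chain-chain-chain}: the top vertex of $V_j$ in the nesting order survives, and its surviving neighbors in $V_{j+1}$ and $V_{j-1}$ can be taken to be the maximal ones. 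Chaining these maximal survivors around the ring then produces a closed walk in $H\setminus X$ that winds once around the ring. Its length is $\equiv k-1$, which is odd, so $H\setminus X$ contains an odd cycle, a contradiction.

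Where the nesting fails, i.e.\ the biclique cases given by Lemma~\ref{lem:gen-chain-sequence}, the biclique forces one side of the link to be entirely in $X$, and that side already covers the link. This completes the lower bound.
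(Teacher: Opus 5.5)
Your upper bound and the polynomial-time claim are fine. When every link $H_j$ is a chain graph, your chaining idea also works after one repair.

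\textbf{The all-chain case.} The top vertex of $V_j$ need not survive, so take the top \emph{surviving} vertex instead. Lemma~\ref{lem:gen-chain-switch}, whose proof is sound, shows the following: if $b,b'\in V_j$ and $N(b)\cap V_{j-1}\subsetneq N(b')\cap V_{j-1}$, then $N(b)\cap V_{j+1}\subseteq N(b')\cap V_{j+1}$. Hence some survivor $w_j\in V_j\setminus X$ dominates every other survivor of $V_j$ on both sides, and this domination persists after deleting $X$. Since link $j$ has a surviving edge, $w_{j+1}$ has a surviving neighbour $u\in V_j$. Then $w_{j+1}\in N(u)\cap V_{j+1}\subseteq N(w_j)\cap V_{j+1}$, so $w_j w_{j+1}$ is an edge. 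Therefore $w_0w_1\cdots w_{k-2}w_0$ is an odd cycle in $H\setminus X$. This settles the all-chain case more cleanly than the paper does: the paper only asserts the lower bound, and its $P_6$ analogue, Lemma~\ref{OCT-chain-chain-chain}, ends with an unexplained ``conflict''.

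\textbf{The gap: links containing an induced $2K_2$.} Nesting fails exactly when some $H_j$ contains an induced $2K_2$, not in the biclique case. A biclique is itself a chain graph, and Lemma~\ref{lem:gen-chain-sequence} concerns alternating chain triples, not $2K_2$'s. A biclique link also does not force $X$ to contain one of its sides, since $X$ may break the ring at another link. The ``biclique-flanked structure'' you invoke is proved only for the $5$-ring (Corollary~\ref{disconnected}, Lemma~\ref{2K2-structure}). With an induced $2K_2$ among the survivors there is no dominating survivor, and the chaining breaks.

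Nothing you cite repairs this, because you never use connectivity, and without it your lower bound is false. Let $H$ be the disjoint union of two blow-ups of $C_{k-1}$: parts are independent and consecutive parts are complete to each other. In the first, the part of index $0$ is a single vertex and all other parts have size $3$. In the second, the part of index $3$ is a single vertex and all other parts have size $3$. Let $V_i$ be the union of the two parts of index $i$. This $H$ has the following properties:
\begin{itemize}
\item it is a ring, and every vertex lies on an induced $C_{k-1}$;
\item it is $P_{k-1}$-free;
\item it has no odd cycle shorter than $k-1$;
\item the two singletons form an OCT that contains no vertex cover of any $H_j$;
\item $\mathrm{OPT}(H)=2<4=\min_j\tau(H_j)$.
\end{itemize}
The same $H$ violates Lemma~\ref{lem:gen-2K2}, whose proof dismisses precisely the case where $a_1,a_2$ have no common neighbour in $V_{j-1}$. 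So your own worry, that edges surviving in every link need not chain into an odd cycle, is confirmed even with $P_k$-freeness added.

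The $2K_2$ case therefore needs a genuinely new argument that uses connectivity of $H$, inherited from $G$ via the base cycle $C$. An analogue of Lemma~\ref{2K2-structure} for the $(k-1)$-ring is one possibility. Neither your proposal nor the paper supplies such an argument; the paper simply asserts that every OCT must empty a part or sever a link.
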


\section{Conclusion}

In this paper, we investigated the \textsc{Odd Cycle Transversal} (OCT) problem on $P_k$-free graphs. By establishing a structural decomposition of graphs lacking both long induced paths and small odd cycles into macroscopic ring structures, we provided the first non-trivial, constant-factor approximation algorithms for OCT parameterized by $k$. Specifically, we achieved an approximation ratio of $k-2$ for odd $k$, and $k-3$ for even $k$. 

While our results bridge a significant gap in the approximability landscape of OCT several compelling avenues for future research remain. A primary direction is to determine whether our approximation factors can be improved. For instance, in the specific case of $P_6$-free graphs, our algorithm yields a 3-approximation. It remains an open question whether OCT in $P_6$-free graphs admits a 2-approximation algorithm in polynomial time. In general a critical question for future work is to establish whether there exists a lower bound function $f(k) < k-2$ such that OCT on $P_k$-free graphs does not admit an approximation factor strictly better than $f(k)$ under standard complexity assumptions (such as UGC or P $\neq$ NP). Resolving this would delineate the exact approximability threshold for OCT within the hierarchy of $P_k$-free graphs.

\begingroup
    
    \let\clearpage\relax
    \bibliographystyle{plain}
    \bibliography{references}
\endgroup

\appendix

\end{document}